 \let\MYoriglatexcaption\caption
 \renewcommand{\caption}[2][\relax]{\MYoriglatexcaption[#2]{#2}}
\pgfplotsset{colormap/jet}
\pgfplotsset{compat=newest}                       
\pgfplotsset{plot coordinates/math parser=false}
\newcommand{\X}{\mathcal{X}}
\newcommand{\Y}{\mathcal{Y}}
\newcommand{\Sen}{\mathcal{S}}
\newcommand{\M}{\mathcal{M}}
\newcommand{\R}{\mathcal{R}}
\newcommand{\PS}{P_{S}}
\newcommand{\PX}{P_{X}}
\newcommand{\PY}{P_{Y}}
\newcommand{\PSX}{P_{SX}}
\newcommand{\PSgX}{P_{S|X}}
\newcommand{\PXgS}{P_{X|S}}
\newcommand{\PXgY}{P_{X|Y}}
\newcommand{\PYgX}{P_{Y|X}}
\newcommand{\PSY}{P_{SY}}
\newcommand{\PSgY}{P_{S|Y}}
\newcommand{\PYgS}{P_{Y|S}}
\newcommand{\Q}{\mathbf{Q}}
\newcommand{\q}{\mathbf{q}}
\newcommand{\E}{\mathbb{E}}
\newcommand{\Real}{\mathbb{R}}
\newcommand{\RealP}{\mathbb{R}_+}
\DeclareMathOperator*{\argmax}{arg\,max}
\DeclareMathOperator*{\argmin}{arg\,min}
\newcommand{\eps}{\varepsilon}
\newcommand{\epsu}{\varepsilon_{u}}
\newcommand{\epsl}{\varepsilon_{l}}
\newcommand{\XL}{\X_{L}}
\newcommand{\XH}{\X_{H}}
\newcommand{\YL}{\Y_{L}}
\newcommand{\YH}{\Y_{H}}
\newcommand{\lifty}{l(s,y)}
\newcommand{\e}{\mathrm{e}}
\let\emptyset\varnothing
\newtheorem{Theorem}{Theorem}
\newtheorem{Definition}{Definition}
\newtheorem{Proposition}{Proposition}
\begin{document}

\title{On the Lift, Related Privacy Measures, and Applications to Privacy-Utility Tradeoffs}
\author{Mohammad~Amin~Zarrabian, Ni~Ding, and~Parastoo~Sadeghi.
\thanks{M.~A.~Zarrabian is with the College of Engineering, Computing and Cybernetics, Australian National University, Canberra, Australia, e-mail: mohammad.zarrabian@anu.edu.au.}
\thanks{N.~Ding is with the  School of Computing and Information Systems, University of Melbourne, Melbourne, Australia, email: ni.ding@unimelb.edu.au.}
\thanks{P. Sadeghi is with the School of Engineering and Information Technology, University of New South Wales, Canberra, Australia, email: p.sadeghi@unsw.edu.au.}
\thanks{Preliminary results of this work have been published in part at the 2022 IEEE International Conference on Acoustics, Speech and Signal Processing, and IEEE Information Theory Workshop.}}

\maketitle
\begin{abstract}
This paper investigates lift, the likelihood ratio between the posterior and prior belief about sensitive features in a dataset.
Maximum and minimum lifts over sensitive features quantify the adversary’s knowledge gain and should be bounded to protect privacy.
We demonstrate that max and min lifts have a distinct range of values and probability of appearance in the dataset, referred to as \emph{lift asymmetry}.
We propose asymmetric local information privacy (ALIP) as a compatible privacy notion with lift asymmetry, where different bounds can be applied to min and max lifts.
We use ALIP in the watchdog and optimal random response (ORR) mechanisms, the main methods to achieve lift-based privacy.
It is shown that ALIP enhances utility in these methods compared to existing local information privacy, which ensures the same (symmetric) bounds on both max and min lifts.
We propose subset merging for the watchdog mechanism to improve data utility and subset random response for the ORR to reduce complexity.
We then investigate the related lift-based measures, including $\ell_1$-norm, $\chi^2$-privacy criterion, and $\alpha$-lift.
We reveal that they can only restrict max-lift, resulting in significant min-lift leakage.
To overcome this problem, we propose corresponding lift-inverse measures to restrict the min-lift.
We apply these lift-based and lift-inverse measures in the watchdog mechanism.
We show that they can be considered as relaxations of ALIP, where a higher utility can be achieved by bounding only average max and min lifts.
\end{abstract}


%

\section{Introduction}

    With the recent emergence of ``Big-Data'', generating, sharing, and analyzing data are proliferating via the advancement of communication systems and machine learning methods. 
    While sharing datasets is essential to achieve social and economic benefits, it may lead to the leakage of private information, which has raised great concern about the privacy preservation of individuals.
    The main approach to protect privacy is perturbing the data via a privacy mechanism. 
    Consider some raw data denoted by random variable $X$ and some sensitive features denoted by $S$, which are correlated via a joint distribution $\PSX\neq \PS\times\PX$.
    A privacy mechanism (characterized by the transition probability $\PYgX$) is applied to publish $Y$ as a sanitized version of $X$ to protect $S$.

    The design of a privacy mechanism depends on the privacy measure. 
    Differential privacy (DP) \cite{2006CalibNoiseDFP,2006DFP,2011DFP} is a widely used notion of privacy. 
    DP restricts the chance of inferring the individual's presence in a dataset. 
    It ensures that neighbored sensitive features $s$ and $s'$, which differ in only one entry, result in a similar output probability distribution, by restricting the ratio between posterior beliefs ${\PYgS(y|s)}/{\PYgS(y|s')}$ below a threshold $\e^{\eps}.$
    The neighborhood assumption is relaxed in the local differential privacy (LDP) \cite{2011Learnprivately, 2013LDPMiniMax, 2014ExtermalMechanism, 2014LDPrateDist, 2018DPHammDistr}, where the ratio between posterior beliefs is restricted below $\e^{\eps}$
    for \textit{any} two sensitive features $s$ and $s'$,  denoted by $\eps$-LDP. The quantity of $\eps$ is known as the  \textit{privacy budget}.
    DP and LDP are considered context-free privacy notions, i.e., they do not take into account the prior distribution $\PS$.
    In contrast, in information-theoretic (IT) privacy, also known as context-aware privacy \cite{2013UPTdatasets,2018CntxtAwareDataAggre}, it is assumed that the distribution of data or an estimation of it is available. 
    Some of the dominant IT privacy measures are mutual information (MI) \cite{2013UPTdatasets,2014PFInfBottneck,2020PUTandPF}, maximal leakage \cite{2016OperationLeak,2016MaxLeakcipher,2020MaxL}, $\alpha$-leakage \cite{2018TunableMsurInfleak}, and local information privacy (LIP) \cite{2012PrivStatisticInfere, 2018CntxtAwareDataAggre, 2019LIPBoundPrior, 2019LIPLaplcian, 2020LIPDataAggr, 2021ContextawareLIP, 2021LinearReductMeth, 2019Watchdog, 2020PropertiesWatchdog, 2022EnhanceUtilWatchdog, 2020PerfectObfusc, 2021DataSanitize}.
    A challenge is that while data perturbation restricts privacy leakage, it necessarily reduces data resolution and datasets' usefulness. 
    Therefore, a privacy mechanism is desired to deliver a satisfactory level of data utility. 
    Depending on the application, data utility is quantified either by measures of similarity between $X$ and $Y$, like f-divergence \cite{2014ExtermalMechanism} and MI \cite{2014ExtermalMechanism,2014PFInfBottneck}, or measures of dissimilarity and error, like Hamming distortion \cite{2014LDPrateDist,2018DPHammDistr} and mean square error \cite{2021ContextawareLIP}, respectively.
    This tension between privacy and utility is known as the privacy-utility tradeoff (PUT).

    In this paper, we consider lift, a pivotal element in IT privacy measures, which is the likelihood ratio between the posterior belief $\PSgY(s|y)$ and prior belief $\PS(s)$ about sensitive features in a dataset:
    \begin{equation}\label{eq:lift}
    l(s,y)=\frac{\PSgY(s|y)}{\PS(s)}=\frac{\PSY(s,y)}{\PS(s)\PY(y)}.
    \end{equation}
    The logarithm of the lift  $i(s,y)=\log\lifty$, which we call \emph{log-lift}, is the information density \cite{2019Watchdog}. 
    For each $y$, the more $\PSgY(s|y)$ differs from $\PS(s)$, the more the adversary gains knowledge about $s$.
    Consequently, both min-lift and max-lift, denoted by $\min_{s}l(s,y)$ and $\max_{s}l(s,y)$, respectively, quantify the highest privacy leakage for each $y$.
    In LIP, min-lift and max-lift are bounded below and above by thresholds $\e^{-\eps}$ and $\e^{\eps}$, respectively, to restrict the adversary's knowledge gain, denoted by $\eps$-LIP. 
    The main privacy mechanisms to achieve $\eps$-LIP are the watchdog mechanism \cite{2019Watchdog,2020PropertiesWatchdog}  and optimal random response (ORR) \cite{2021DataSanitize}. 
    Watchdog mechanism bi-partitions the alphabet of $X$ into low-risk and high-risk symbols, and only high-risk ones are randomized. 
    It has been proved in \cite{2020PropertiesWatchdog} that $X$-invariant randomization (e.g., merging all high-risk symbols) minimizes privacy leakage for the watchdog mechanism. 
    ORR is an optimal mechanism for $\eps$-LIP, which maximizes MI as the utility measure.   
    

    \subsection{Contributions}
    
        We investigate lift and its related privacy notions like LIP.
        We demonstrate that min-lift and max-lift have distinct values and probability of appearance in the dataset. 
        More specifically, min-lifts have a broader range of values than max-lifts, while max-lifts have a higher likelihood $\PSY(s,y)$ of appearing in the dataset.
        We call this property \textit{lift asymmetry}. 
        However, $\eps$-LIP allocates symmetric privacy budgets to $\min_{s}i(s,y)$ and $\max_{s}i(s,y)$ ($-\eps$ and $\eps$, respectively), which is incompatible with the lift asymmetry. 
        Thus, we propose asymmetric-LIP (ALIP) as an amenable privacy notion to the lift properties, where asymmetric privacy budgets can be allocated to  $\min_{s}i(s,y)$ and $\max_{s}i(s,y)$, denoted by $-\epsl$ and $\epsu$, respectively. 
        We demonstrate that ALIP implies $\eps$-LDP and can result in better utility than LIP in the watchdog and ORR mechanisms. 
        Utility increases by relaxing the bound on the min-lift, which has a lower probability of appearance in the dataset.

        We propose two randomization methods to overcome the low utility of the watchdog mechanism and the high complexity of the ORR mechanism.
        In the watchdog mechanism, $X$-invariant randomization perturbs all high-risk symbols together and deteriorates data resolution and utility. 
        On the other hand, ORR suffers from high complexity, which is exponential in the size of datasets.
        To overcome these problems, we propose  \textit{subset merging} and \textit{subset random response} (SRR) perturbation methods that make finer subsets of high-risk symbols and privatize each subset separately.
        Subset merging enhances utility in the watchdog mechanism by applying $X$-invariant randomization to disjoint subsets of high-risk symbols.
        Also, SRR  relaxes the complexity of ORR for large datasets by applying random response solutions on disjoint subsets of high-risk symbols, which results in near-optimal utility.

        Besides LIP, we also consider some recently proposed privacy measures, which we call \textit{lift-based} measures, including $\ell_{1}$-norm \cite{2022DataDsclsurell1Priv}, $\chi^2$-strong privacy \cite{2021StrongChi2}, and $\alpha$-lift \cite{2021Alpha-LiftWatchdog}.
        They have been proposed as the privacy notions stronger than their corresponding average leakages: the total variation distance \cite{2019PUTTotalDistnce}, $\chi^2$-divergence \cite{2018EstimEffcientPriv}, and Sibson MI \cite{2020MaxL,2021Alpha-LiftWatchdog}, respectively.
        We clarify that they only bound max-lift leakage and can cause significant min-lift leakage. 
        Therefore, we propose a corresponding modified version of these measures to restrict min-lift leakage, which we call \textit{lift-inverse} measures. 
        We apply lift-based and lift-inverse measures in the watchdog mechanism with subset merging randomization to investigate their PUT. 
        They result in higher utility than ALIP since they are functions of average lift over sensitive features, thus, can be considered as relaxations of the max and min lift. 
        %

\section{Preliminaries}
    \subsection{Notation}
        We use the following notation throughout the paper. Capital letters denote discrete random variables, corresponding capital calligraphic letters denote their finite supports, and lowercase letters denote any of their realizations.
        For example, a random variable $X$ has the support $\X,$ and its realization is $x \in \X$.
        For random variables $S$ and $X$, we use $\PSX$ to indicate their joint probability distribution, $\PSgX$ for the conditional distribution of $S$ given $X$, and $\PS$ and $\PX$ for the marginal distributions. 
        Bold capital and lowercase letters are used for matrices and vectors, respectively, and lowercase letters for the corresponding elements of the vectors, e.g., $\mathbf{v}=[v_1,v_2,\cdots,v_{n}]^{T}$.
        We also use $|\cdot|$ for the cardinality of a set, e.g., $|\X|$.
        We denote the natural logarithm by $\log$ and the  set of integers $\{1,2,\cdots,n\}$ by $[n]$. The indicator function is shown by $\mathbf{1}_{\{f\}}$, which is $1$ when $f$ is true and zero otherwise.
    \subsection{System Model and Privacy Measures}\label{sec:ALIP and Watchdog}

        Consider some useful data intended for sharing and denoted by random variable $X$ with alphabet $\X$.
        It is correlated with some sensitive features $S$ with the alphabet $\Sen$ through a discrete joint distribution $\PSX$.
        To protect the sensitive features, a privacy mechanism is applied to generate a sanitized version of $X$, denoted by $Y$ with the alphabet $\Y$.
        We assume $\PS$ and $\PX$ have full support, and $P_{Y|X,S}(y|x,s)=\PYgX(y|x)$, which results in the Markov chain $S-X-Y$. 
        
        The main privacy measure is lift\footnote{Since we assume $\PS$ and $\PY$ have full supports, $\lifty$ is finite.}, given in \eqref{eq:lift}.
        Lift and its logarithm, log-lift, quantify multiplicative information gain on each sensitive feature $s \in \Sen$ via accessing $y \in \Y$.
        There are two cases: $\lifty>1 \Rightarrow \PSgY(s|y)>\PS(s)$ indicates the increment of the belief about $s$ after releasing $y$; $\lifty \leq 1 \Rightarrow \PSgY(s|y) \leq \PS(s)$ means releasing $y$ decreases the belief. 
        The more the posterior belief deviates from the prior belief, the more an adversary gains knowledge about $s$.
        Thus, for each $y \in \Y$, the $\max_{s}\lifty$ and $\min_{s}\lifty$ determine the highest knowledge gain of sensitive features, and they should be restricted  to protect privacy.
        We use the following notation for these quantities,
        \begin{equation}\label{def:min-max-lift}
            \Psi(y)\triangleq\min_{s \in \Sen}\lifty \quad  \text{and} \quad \Lambda(y)\triangleq \max_{s \in \Sen}\lifty.
        \end{equation}
        The lift has been applied in local information privacy \cite{2019Watchdog,2020PropertiesWatchdog,2021DataSanitize} to provide protection of sensitive features, which is defined as follows.
        \begin{Definition}\label{def:LIP}
            For  $\eps \in \RealP$, a privacy mechanism $\M:\X\rightarrow\Y$ is $\eps$-local information private  or $\eps$-LIP, with respect to $S$, if for all $y\in\Y$,
            \begin{equation}\label{eq:LIP}
                \e^{-\eps} \leq \Psi(y) \quad \text{and} \quad \Lambda(y) \leq \e^{\eps}.
            \end{equation}
        \end{Definition}
            
        Another instance-wise measure is local differential privacy \cite{2011Learnprivately,2013LDPMiniMax,2021DataSanitize},
        
        \begin{Definition}\label{def:LDP}
            For  $\eps \in \RealP$, a privacy mechanism $\mathcal{M}: \mathcal{X} \rightarrow \mathcal{Y}$ is $\eps$-local differential private  or $\eps$-{LDP}, with respect to $S$,  if for all $s, s^{\prime} \in \mathcal{S}$ and all $y\in\Y$,
            \begin{equation}\label{eq:LDP}
                \Gamma(y)=\sup_{s,s^{\prime}\in \Sen}\frac{\PYgS(y|s)}{\PYgS(y|s^{\prime})}=\frac{\Lambda(y)}{\Psi(y)} \leq \e^{\eps}.
            \end{equation}
        \end{Definition}  
        

\section{Asymmetric Local Information Privacy}\label{sec:ALIP}

    According to \eqref{eq:LIP}, LIP restricts the decrement of $\log\Psi(y)$ and increment of $\log\Lambda(y)$ by the symmetric bounds.
    However, we demonstrate that these metrics have a distinct range of values and probabilities of appearance in the dataset, $\PSY(s,y)$. 
    We have plotted the histogram of $\log\Psi(y)$ and $\log\Lambda(y)$ for $10^3$ randomly generated distributions in Figure~\ref{fig:density}, where $|\X|=17$ and $|\Sen|=5$. 
    In this figure, the range of $\log\Psi(y)$ is $[-12,-0.06]$, much larger than the range of $\log\Lambda(y)$, $[0.02,1.64]$.
    Moreover, the maximum probability of $\log\Psi(y)$ is much lower than the maximum probability of $\log\Lambda(y)$.
    We refer to these properties as \textit{lift asymmetry}.  
    Since high values of $|\log\Psi(y)|$ have a significantly lower probability\footnote{For example, in Figure~\ref{fig:density}, the probability of $|\log\Psi(y)| \geq 6$ is near zero.} than the $\log\Lambda(y)$, we can relax the min-lift privacy by allocating a higher privacy budget to it while applying a stricter bound for the max-lift.
    Thus, we propose asymmetric local information privacy (ALIP), where we consider different privacy budgets $\epsl$ and $\epsu$ for $|\log\Psi(y)|$ and $\log\Lambda(y)$, respectively. 

    \begin{figure}[h]
        \centering
        \scalebox{0.5}{\begin{tikzpicture}
\begin{axis}[%
width=7in,
height=4.32623792125in,
scale only axis,
xmin=-12,
xmax=2,
xlabel style={font=\color{white!15!black}, yshift=-20pt},
ymin=0,
ymax=0.08,
ylabel style={font=\color{white!15!black}},
ylabel={ \Huge  Probability density function},
axis background/.style={fill=white},
yticklabel style = {font=\huge},
xticklabel style = {font=\huge},
 clip=false,
xmajorgrids,
ymajorgrids
]
\addplot [color=blue, line width=2.0pt, forget plot]
  table[row sep=crcr]{%
-11.9793320701979	2.77777777777778e-05\\
-11.8982702154161	0\\
-11.8172083606343	0\\
-11.7361465058525	0\\
-11.6550846510707	0\\
-11.5740227962889	0\\
-11.4929609415071	0\\
-11.4118990867253	0\\
-11.3308372319435	0\\
-11.2497753771617	0\\
-11.1687135223799	0\\
-11.087651667598	0\\
-11.0065898128162	0\\
-10.9255279580344	0\\
-10.8444661032526	0\\
-10.7634042484708	0\\
-10.682342393689	0\\
-10.6012805389072	0\\
-10.5202186841254	0\\
-10.4391568293436	0\\
-10.3580949745618	0\\
-10.27703311978	0\\
-10.1959712649982	0\\
-10.1149094102163	0\\
-10.0338475554345	0\\
-9.95278570065273	0\\
-9.87172384587092	0\\
-9.79066199108911	2.77777777777778e-05\\
-9.7096001363073	0\\
-9.6285382815255	0\\
-9.54747642674369	0\\
-9.46641457196188	0\\
-9.38535271718007	0\\
-9.30429086239826	0\\
-9.22322900761645	2.77777777777778e-05\\
-9.14216715283465	0\\
-9.06110529805284	2.77777777777778e-05\\
-8.98004344327103	2.77777777777778e-05\\
-8.89898158848922	2.77777777777778e-05\\
-8.81791973370741	0\\
-8.7368578789256	2.77777777777778e-05\\
-8.6557960241438	0\\
-8.57473416936199	0\\
-8.49367231458018	0\\
-8.41261045979837	5.55555555555556e-05\\
-8.33154860501656	0\\
-8.25048675023475	0\\
-8.16942489545294	0.000111111111111111\\
-8.08836304067114	5.55555555555556e-05\\
-8.00730118588933	5.55555555555556e-05\\
-7.92623933110752	5.55555555555556e-05\\
-7.84517747632571	8.33333333333333e-05\\
-7.7641156215439	8.33333333333333e-05\\
-7.6830537667621	5.55555555555556e-05\\
-7.60199191198029	5.55555555555556e-05\\
-7.52093005719848	0\\
-7.43986820241667	5.55555555555556e-05\\
-7.35880634763486	8.33333333333333e-05\\
-7.27774449285305	2.77777777777778e-05\\
-7.19668263807125	2.77777777777778e-05\\
-7.11562078328944	2.77777777777778e-05\\
-7.03455892850763	5.55555555555556e-05\\
-6.95349707372582	0.000138888888888889\\
-6.87243521894401	0.000111111111111111\\
-6.7913733641622	0.000111111111111111\\
-6.7103115093804	8.33333333333333e-05\\
-6.62924965459859	0.000222222222222222\\
-6.54818779981678	0.000111111111111111\\
-6.46712594503497	0.000138888888888889\\
-6.38606409025316	0.000166666666666667\\
-6.30500223547136	0.000138888888888889\\
-6.22394038068955	0.000194444444444444\\
-6.14287852590774	0.00025\\
-6.06181667112593	0.000138888888888889\\
-5.98075481634412	0.000222222222222222\\
-5.89969296156231	0.000111111111111111\\
-5.81863110678051	0.000277777777777778\\
-5.7375692519987	0.000305555555555556\\
-5.65650739721689	0.000444444444444444\\
-5.57544554243508	0.000305555555555556\\
-5.49438368765327	0.000305555555555556\\
-5.41332183287146	0.000611111111111111\\
-5.33225997808966	0.000444444444444444\\
-5.25119812330785	0.000527777777777778\\
-5.17013626852604	0.000527777777777778\\
-5.08907441374423	0.0005\\
-5.00801255896242	0.000694444444444444\\
-4.92695070418061	0.000694444444444444\\
-4.84588884939881	0.000888888888888889\\
-4.764826994617	0.000972222222222222\\
-4.68376513983519	0.000944444444444444\\
-4.60270328505338	0.000888888888888889\\
-4.52164143027157	0.00122222222222222\\
-4.44057957548976	0.00125\\
-4.35951772070796	0.00105555555555556\\
-4.27845586592615	0.00127777777777778\\
-4.19739401114434	0.00158333333333333\\
-4.11633215636253	0.00144444444444444\\
-4.03527030158072	0.00186111111111111\\
-3.95420844679891	0.00183333333333333\\
-3.87314659201711	0.00186111111111111\\
-3.7920847372353	0.00241666666666667\\
-3.71102288245349	0.00261111111111111\\
-3.62996102767168	0.00241666666666667\\
-3.54889917288987	0.00261111111111111\\
-3.46783731810807	0.00283333333333333\\
-3.38677546332626	0.00294444444444444\\
-3.30571360854445	0.00338888888888889\\
-3.22465175376264	0.00363888888888889\\
-3.14358989898083	0.00436111111111111\\
-3.06252804419902	0.00422222222222222\\
-2.98146618941722	0.00452777777777778\\
-2.90040433463541	0.00494444444444444\\
-2.8193424798536	0.00519444444444444\\
-2.73828062507179	0.00513888888888889\\
-2.65721877028998	0.00602777777777778\\
-2.57615691550817	0.00658333333333333\\
-2.49509506072637	0.00666666666666667\\
-2.41403320594456	0.00855555555555556\\
-2.33297135116275	0.00830555555555556\\
-2.25190949638094	0.00866666666666667\\
-2.17084764159913	0.00997222222222222\\
-2.08978578681732	0.0104722222222222\\
-2.00872393203552	0.0114166666666667\\
-1.92766207725371	0.0118055555555556\\
-1.8466002224719	0.0119444444444444\\
-1.76553836769009	0.0133055555555556\\
-1.68447651290828	0.0144166666666667\\
-1.60341465812647	0.0153333333333333\\
-1.52235280334467	0.0168611111111111\\
-1.44129094856286	0.0169444444444444\\
-1.36022909378105	0.0181944444444444\\
-1.27916723899924	0.01925\\
-1.19810538421743	0.0208055555555556\\
-1.11704352943562	0.0203611111111111\\
-1.03598167465382	0.0194722222222222\\
-0.954919819872009	0.0200277777777778\\
-0.873857965090199	0.0204722222222222\\
-0.792796110308391	0.0206666666666667\\
-0.711734255526583	0.0183888888888889\\
-0.630672400744775	0.0172777777777778\\
-0.549610545962967	0.0151666666666667\\
-0.468548691181157	0.0123333333333333\\
-0.387486836399349	0.0103888888888889\\
-0.306424981617541	0.00672222222222222\\
-0.225363126835733	0.00377777777777778\\
-0.144301272053925	0.00169444444444444\\
-0.0632394172721167	0.000527777777777778\\
0.0178224375096931	0\\
0.0988842922915012	0.000305555555555556\\
0.179946147073309	0.00277777777777778\\
0.261008001855117	0.0113333333333333\\
0.342069856636925	0.0236666666666667\\
0.423131711418733	0.0436666666666667\\
0.504193566200541	0.0571111111111111\\
0.58525542098235	0.0708888888888889\\
0.666317275764158	0.0687777777777778\\
0.747379130545966	0.0613611111111111\\
0.828440985327774	0.0522777777777778\\
0.909502840109582	0.0378055555555556\\
0.99056469489139	0.0284166666666667\\
1.0716265496732	0.0185\\
1.15268840445501	0.011\\
1.23375025923682	0.00605555555555556\\
1.31481211401863	0.00319444444444444\\
1.39587396880043	0.00158333333333333\\
1.47693582358224	0.000638888888888889\\
1.55799767836405	0.000416666666666667\\
1.63905953314586	0.000222222222222222\\
};
\draw[decorate,decoration={brace,mirror,amplitude=10}]
([yshift=-20pt]axis cs:-12,0) --
node[below=10pt] {\Huge $\log\Psi(y)$}
([yshift=-20pt]axis cs: -0.05,0);
\draw[decorate,decoration={brace,mirror,amplitude=10}]
([yshift=-20pt]axis cs:0.05,0) --
node[below=10pt] {\Huge $\log\Lambda(y)$}
([yshift=-20pt]axis cs:2,0);
\end{axis}
\end{tikzpicture}
        \caption{Histogram of $\log\Psi(y)=\min_{s}i(s,y)$ and $\log\Lambda(y)=\max_{s}i(s,y)$ for $10^3$ randomly generated distributions, where $|\X|=17$, $|\Sen|=5$.}
        \label{fig:density}
    \end{figure}
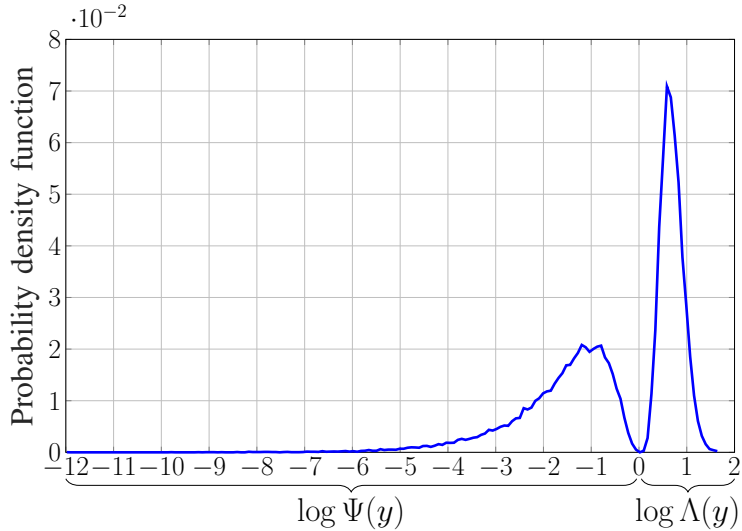
    
    This will result in the following notion of privacy, which is more compatible  with the lift asymmetry property.
    
    \begin{Definition}\label{def:ALIP}
        For $\epsl, \epsu \in \RealP$, a privacy mechanism $\M:\X\rightarrow\Y$ is $(\epsl,\epsu)$-asymmetric local information private, or $(\epsl,\epsu)$-ALIP, with respect to $S$, if for all $y \in \Y$,
        \begin{equation}\label{eq:ALIP}
            \e^{-\epsl} \leq  \Psi(y) \quad \text{and} \quad \Lambda(y) \leq \e^{\epsu}.
        \end{equation}
    \end{Definition}

    The following proposition indicates how $(\epsl,\epsu)$-ALIP restricts average privacy leakage measures and LDP.
    
    \begin{Proposition}\label{prop:ALIP-LDP-Average}
        If $(\epsl,\epsu)$-ALIP is satisfied, then,
        \begin{enumerate}
            \item \label{subprop:ALIP-MI}  $I(S;Y)\leq \epsu$,
            \item \label{subprop:ALIP-total}  $T(S;Y)\leq\frac{1}{2} (\e^{\epsu}-1)$ and $\chi^2(S;Y) \leq (\e^{\epsu}-1)^2$,
            \item \label{subprop:ALIP-Sibson}  $I_{\alpha}^{S}(S;Y) \leq \frac{\alpha}{\alpha-1}\epsu$ and $I_{\alpha}^{A}(S;Y) \leq \frac{\alpha}{\alpha-1}\epsu$,
            \item \label{subprop:ALIP-LDP}$\eps$-LDP is satisfied where $\eps=\epsl+\epsu$,
        \end{enumerate}
        where $T(S;Y)$ is the total variation distance, $\chi^2(S;Y)$ is $\chi^2$-divergence, $I_{\alpha}^{S}(S;Y)$ is Sibson MI, and  $I_{\alpha}^{A}(S;Y)$ is Arimoto MI.
    \end{Proposition}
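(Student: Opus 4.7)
The $(\epsl,\epsu)$-ALIP hypothesis is equivalent to the pointwise bound $\e^{-\epsl}\le\lifty\le\e^{\epsu}$ for every $s\in\Sen$ and every $y\in\Y$. My plan is to rewrite each of the four leakage quantities as an expectation of a function of $\lifty$ against $\PS\otimes\PY$ (or $\PSY$) and then substitute this bound. A key auxiliary identity I will lean on is $\sum_{s}\PS(s)\,\lifty=\sum_{s}\PSgy(s|y)=1$, which is what allows the bounds in \ref{subprop:ALIP-MI}--\ref{subprop:ALIP-Sibson} to depend on $\epsu$ alone.

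Items \ref{subprop:ALIP-MI} and \ref{subprop:ALIP-LDP} come for free. For \ref{subprop:ALIP-MI} I write $I(S;Y)=\sum_{s,y}\PSY(s,y)\log\lifty$ and apply $\log\lifty\le\epsu$, noting that negative contributions only help. For \ref{subprop:ALIP-LDP} I plug directly into \eqref{eq:LDP}: $\Gamma(y)=\Lambda(y)/\Psi(y)\le\e^{\epsu}/\e^{-\epsl}=\e^{\epsl+\epsu}$.

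For \ref{subprop:ALIP-total} I start from the lift-form
\[
\chi^{2}(S;Y)=\sum_{s,y}\PS(s)\PY(y)\,(\lifty-1)^{2},
\]
bound the integrand by $(\e^{\epsu}-1)^{2}$, and sum. The total-variation bound then follows from the standard Cauchy--Schwarz inequality $T(S;Y)\le\tfrac{1}{2}\sqrt{\chi^{2}(S;Y)}$, which immediately gives $T(S;Y)\le\tfrac{1}{2}(\e^{\epsu}-1)$. For \ref{subprop:ALIP-Sibson} I use the lift-form
\[
I_{\alpha}^{S}(S;Y)=\tfrac{\alpha}{\alpha-1}\log\sum_{y}\PY(y)\Bigl(\sum_{s}\PS(s)\,\lifty^{\alpha}\Bigr)^{\!1/\alpha},
\]
bound $\lifty^{\alpha}\le\e^{\epsu\alpha}$, pull the constant outside, and use $\sum_{s}\PS(s)=\sum_{y}\PY(y)=1$ to collapse the expression to $\frac{\alpha}{\alpha-1}\epsu$. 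Arimoto MI admits an analogous lift-form in which the factor $(\sum_{s}\PS(s)^{\alpha})^{1/\alpha}$ cancels between numerator and denominator, leaving the same bound.

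The main obstacle is the ``$\epsl$ should not appear'' aspect of \ref{subprop:ALIP-total}. A naive pointwise estimate on $(\lifty-1)^{2}$ gives $\max\{(\e^{\epsu}-1)^{2},(1-\e^{-\epsl})^{2}\}$, so I will need to verify that the upper side dominates---this is $\e^{\epsu}+\e^{-\epsl}\ge 2$, which holds whenever $\epsu\ge\epsl$---or else exploit the zero-mean identity to rewrite below-$1$ deviations in terms of above-$1$ deviations and then apply only the $\lifty\le\e^{\epsu}$ side. For Sibson MI the argument as written is clean for $\alpha>1$ (the privacy-relevant regime); for $\alpha\in(0,1)$ several inequality directions flip, and I would expect the lower lift bound $\e^{-\epsl}$ to intervene.
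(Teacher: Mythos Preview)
Your treatment of items~\ref{subprop:ALIP-MI}, \ref{subprop:ALIP-Sibson}, and \ref{subprop:ALIP-LDP} is exactly what the paper does: $I(S;Y)=\E_{\PSY}[\lglifty]\le\epsu$, the termwise bound $\lifty^{\alpha}\le\e^{\epsu\alpha}$ inside the Sibson/Arimoto expressions (with $P_{S_\alpha}$ playing the role of $\PS$ for Arimoto, so your ``cancels'' remark is right), and $\Gamma(y)=\Lambda(y)/\Psi(y)\le\e^{\epsu+\epsl}$.

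For item~\ref{subprop:ALIP-total} the paper also uses precisely the pointwise step you flag as naive: it asserts $|\lifty-1|\le|\Lambda(y)-1|$ and $(\lifty-1)^{2}\le(\Lambda(y)-1)^{2}$ term by term, then sums. It does \emph{not} discuss the below-$1$ side, so the obstacle you raise---that when $\epsl>\epsu$ one can have $1-\Psi(y)>\Lambda(y)-1$, making the termwise inequality false---is a genuine gap that the paper simply does not address. Your route differs only cosmetically (you bound $\chi^{2}$ first and pull $T$ from it via Cauchy--Schwarz; the paper bounds $T$ directly by the same termwise estimate), so both arguments share the issue.

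On your proposed fix via the zero-mean identity $\sum_{s}\PS(s)(\lifty-1)=0$: for $T$ it gives
\[
\sum_{s}\PS(s)\,|\lifty-1|
=2\!\!\sum_{s:\,\lifty>1}\!\!\PS(s)(\lifty-1)
\le 2(\e^{\epsu}-1),
\]
which yields only $T(S;Y)\le\e^{\epsu}-1$, a factor of two looser than the claim; to recover the stated $\tfrac12(\e^{\epsu}-1)$ you would also need control on the mass $\sum_{s:\lifty>1}\PS(s)$, which ALIP alone does not provide. For $\chi^{2}$ the zero-mean trick does not transfer directly to squared deviations; the cleanest rigorous bound I see from the hypotheses is the Bhatia--Davis estimate $\mathrm{Var}_{\PS}[l(\cdot,y)]\le(\Lambda(y)-1)(1-\Psi(y))\le(\e^{\epsu}-1)(1-\e^{-\epsl})$, which again need not be $\le(\e^{\epsu}-1)^{2}$ when $\epsl>\epsu$. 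In short: your proof matches the paper's, and you are right to be uneasy about item~\ref{subprop:ALIP-total} in the regime $\epsl>\epsu$---the paper's argument does not cover it either.
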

    
    \begin{proof}
        The proof is given in Appendix \ref{apx:proof of ALIP-LDP-Average}.
    \end{proof}
    Propositions \ref{prop:ALIP-LDP-Average}-\ref{subprop:ALIP-MI} to \ref{prop:ALIP-LDP-Average}-\ref{subprop:ALIP-Sibson} demonstrate that average measures are bounded with the max-lift privacy budget. 
    In subsection \ref{subsec: ALIP numericall}, we show that ALIP can enhance utility via relaxing min-lift  $\epsl >\epsu$, where a smaller upper bound is allocated to the max-lift and average measures in Proposition \ref{prop:ALIP-LDP-Average}.
    Proposition \ref{prop:ALIP-LDP-Average}-\ref{subprop:ALIP-LDP} shows the relationship between $(\epsl,\epsu)$-ALIP and $\eps$-LDP.
    We introduce a variable $\lambda \in (0,1)$ to have a convenient representation of this relationship as follows:
    for an LDP privacy budget $\eps$, if we set $\epsl=\lambda\eps$ and $\epsu=(1-\lambda)\eps$, we have $\epsl+\epsu=\eps$.
    Thus, varying $\lambda$ gives rise to different $(\epsl,\epsu)$-ALIP scenarios within the same budget for $\eps$-LDP.
    If $\lambda<0.5$, we have relaxation on the max-lift privacy; if $\lambda > 0.5$, it implies relaxation on the min-lift privacy.
    When $\lambda=0.5$, we have the \textit{symmetric} case of $\frac{\eps}{2}$-LIP, where $\epsl=\epsu=\frac{\eps}{2}$.


    \subsection{ALIP Privacy-Utility Tradeoff}\label{subsec: ALIP numericall}

        In this subsection, we propose a watchdog mechanism based on ALIP and LDP and an asymmetric ORR (AORR) mechanism for ALIP to perturb data and achieve privacy protection.
        We observe the PUT of ALIP and LDP, where the utility is measured by MI between $X$ and $Y$, $I(X;Y)$.
        %

        \subsubsection{Watchdog Mechanism} 
            Watchdog privacy mechanism bi-partitions $\X$ into low-risk and high-risk subsets denoted by $\XL$ and $\XH$, respectively, and only randomizes high-risk symbols.
            In the existing LIP, $\XL$ and $\XH$ are determined by symmetric bounds. We propose to use ALIP to obtain $\XL$ and $\XH$:
            \begin{gather}
                \XL \triangleq \{x\in \X: \e^{-\epsl} \leq \Psi(x) \quad and \quad   \Lambda(x) \leq \e^{\epsu} \} \quad and \quad
                \X_H=\X \setminus \XL \label{eq:ALIP XH}.
            \end{gather}
            For LDP, $\XL$ and $\XH$ are given by
            \begin{gather}
                \XL \triangleq \{x\in \X: \Gamma(x) \leq \e^{\eps} \} \ \quad and  \quad
                \X_H=\X \setminus \XL \label{eq:LDP XL,XH}.
            \end{gather}
            After obtaining $\XL$ and $\XH$, the privacy mechanism will be
            \begin{equation} \label{eq: watchdog mechanism}
                \mathcal{M} = \begin{cases}
                    \mathbf{1}_{\{x=y\}}, & x,y \in \XL=\YL,\\
                    r(y|x),		 			 & x \in \XH, y \in \YH,\\
                    0,		 			 & \textup{otherwise},
                \end{cases}
            \end{equation}
            where $\mathbf{1}_{\{x=y\}}$ indicates the publication of low-risk symbols without alteration, and  $r(y|x)$ is the randomization on high-risk symbols, where $\sum_{y\in\YH}r(y|x)=1$.
        
            An instance of $r(y|x)$ is the $X$-invariant randomization, if $r(y|x)=\R(y)$ for $x \in  \XH, y \in \YH$, and $\sum_{ y\in \YH}\R(y)=1$.
            An example of $\R(y)$ is the uniform randomization $\R(y)=\frac{1}{|\YH|}$ with the special case of \textit{complete merging}, where $|\YH|=1$, and all $x \in \XH$ are mapped to one super symbol $y^* \in \YH$.
            It has been proved in \cite{2020PropertiesWatchdog} for LIP that $X$-invariant randomization minimizes privacy leakage in $\XH$.
            Accordingly, if we apply ALIP in the watchdog mechanism, for $\XH \neq \emptyset$, the minimum leakages over $\XH$ are
            \begin{gather}
                \overline{\eps}_{u}:=\max_{s \in \Sen}i(s,\XH)=\max_{s \in \Sen}\log l(s,\XH)=\max_{s \in \Sen}\log\frac{P(\XH|s)}{P(\XH)}, \label{eq:max LIP leak}\\
                \overline{\eps}_{l}:=\left|\min_{s \in \Sen}i(s,\XH)\right|=\left|\min_{s \in \Sen}\log l(s,\XH)\right|= \left|\min_{s \in \Sen }\log\frac{P(\XH|s)}{P(\XH)}\right|, \label{eq:min LIP leak}
            \end{gather}
            where $\displaystyle P(\XH|s)=\sum_{x \in \XH}\PXgS(x|s)$ and $\displaystyle P(\XH)=\sum_{x \in \XH}\PX(x).$
           
            $X$-invariant randomization is also applicable for LDP and the following theorem shows that it minimizes LDP privacy leakage in $\XH$.
            \begin{Theorem}\label{prop:minimum leakage LDP}
                In the LDP watchdog mechanism where $\XL$ and $\XH$ are determined according to \eqref{eq:LDP XL,XH}, $X$-invariant randomization minimizes privacy leakage in $\XH$ measured by $\Gamma(y)$ in \eqref{eq:LDP}.
            \end{Theorem}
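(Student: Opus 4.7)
The plan is to interpret ``minimizes privacy leakage in $\XH$'' as minimizing $\max_{y\in\YH}\Gamma(y)$, i.e., the worst-case LDP ratio restricted to the randomized outputs, and to show that this quantity is lower-bounded by $\max_{s,s'\in\Sen} P(\XH|s)/P(\XH|s')$ for every admissible $r(y|x)$, while the $X$-invariant randomization attains equality. The lemma would thus say: any randomization on $\XH$ leaks at least as much as the $X$-invariant one, and this minimum leakage equals $\max_{s,s'}P(\XH|s)/P(\XH|s')$, the analog of \eqref{eq:max LIP leak}--\eqref{eq:min LIP leak}.

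First I would expand, for arbitrary $y\in\YH$ and any $s\in\Sen$,
\[
\PYgS(y|s)=\sum_{x\in\XH}r(y|x)\PXgS(x|s),
\]
and observe that summing this over $y\in\YH$ with $\sum_{y\in\YH}r(y|x)=1$ yields $\sum_{y\in\YH}\PYgS(y|s)=P(\XH|s)$, and likewise for any $s'$. Next, I would invoke the elementary mediant inequality: for positive reals $a_y,b_y$ the ratio $\big(\sum_{y}a_y\big)/\big(\sum_{y}b_y\big)$ is a convex combination of the ratios $a_y/b_y$ with weights $b_y/\sum_{y'}b_{y'}$, hence
\[
\max_{y\in\YH}\frac{a_y}{b_y}\;\geq\;\frac{\sum_{y\in\YH}a_y}{\sum_{y\in\YH}b_y}.
\]
Applying this with $a_y=\PYgS(y|s)$ and $b_y=\PYgS(y|s')$ gives, for every pair $s,s'$,
\[
\max_{y\in\YH}\frac{\PYgS(y|s)}{\PYgS(y|s')}\;\geq\;\frac{P(\XH|s)}{P(\XH|s')}.
\]
Taking $\max_{s,s'\in\Sen}$ on both sides and recalling $\Gamma(y)=\max_{s,s'}\PYgS(y|s)/\PYgS(y|s')$ produces the desired universal lower bound $\max_{y\in\YH}\Gamma(y)\geq\max_{s,s'}P(\XH|s)/P(\XH|s')$.

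Finally, I would show tightness: substituting $r(y|x)=\R(y)$ into the expansion gives $\PYgS(y|s)=\R(y)P(\XH|s)$, so for each $y\in\YH$ the $\R(y)$ factors cancel and
\[
\frac{\PYgS(y|s)}{\PYgS(y|s')}=\frac{P(\XH|s)}{P(\XH|s')},
\]
independent of $y$. Thus $\Gamma(y)$ is constant on $\YH$ and equals the lower bound, so the $X$-invariant randomization is optimal. Honestly, there is no serious obstacle here; the proof is a one-line application of the mediant inequality, mirroring the LIP argument in \cite{2020PropertiesWatchdog}. The only delicate point is pinning down the operational meaning of ``minimizes privacy leakage'' as the per-output supremum over $\YH$, after which the $X$-invariant choice equalizes $\Gamma(\cdot)$ across $\YH$ and trivially minimizes the supremum.
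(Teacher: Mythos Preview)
Your proposal is correct and essentially coincides with the paper's argument: the paper's necessary condition (summing $\PYgS(y|s)\leq \e^{\eps}\PYgS(y|s')$ over $y\in\YH$ to deduce $P(\XH|s)/P(\XH|s')\leq \e^{\eps}$) is exactly your mediant-inequality lower bound, and the sufficient condition is your tightness computation for $r(y|x)=\R(y)$. The only cosmetic difference is that the paper packages the result as an if-and-only-if feasibility statement (a randomization on $\XH$ can attain $\eps$-LDP iff $\max_{s,s'}P(\XH|s)/P(\XH|s')\leq \e^{\eps}$), whereas you state the min--max optimality directly.
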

            \begin{proof}
                The proof is given in Appendix \ref{apx:proof of minimum leakage LDP}.
            \end{proof}
            
            In the watchdog mechanism with $X$-invariant randomization, the resulting utility measured by MI between $X$ and $Y$ is given by
            \begin{equation}\label{eq:MI watch}
                I(X;Y)=H(X)-\sum_{x \in \XH}\PX(x) \log \frac{P(\XH)}{\PX(x)}.
            \end{equation}
            In \cite{2020PropertiesWatchdog}, it has been verified that $I(X;Y)$ in \eqref{eq:MI watch} is monotonic in $\XH$: if $\XH^{'} \subset \XH$ then $I(X;Y) < I^{'}(X;Y)$, where $I^{'}(X;Y)$ is the resulting utility of $\XH^{'}$.

            \begin{Proposition}\label{prop:LDPutility}
                In the watchdog mechanism with $X$-invariant randomization, for a given LDP privacy budget $\eps$, $\lambda \in (0,1)$, and ALIP privacy budgets $\epsl=\lambda\eps, \epsu=(1-\lambda)\eps$, LDP results in higher utility than ALIP.
                \begin{proof}
                    Denote the high-risk subset for LDP by $\XH^{'}$ and for ALIP by $\XH$. Based on the remark following \eqref{eq:MI watch}, it is enough to prove that $\XH' \subseteq \XH$. We have
                    $$\XH^{'}=\{x \in \X: \frac{\Lambda(x)}{\Psi(x)} > \eps \} \quad \text{and} \quad \XH=\{ x \in \X: \Lambda(x) > \e^{(1-\lambda) \eps} \enspace or  \enspace \Psi(x) < \e^{-\lambda \eps}\}.$$
                    If $x \in \XH^{'}$, then either $\Lambda(x) > \e^{(1-\lambda)\eps}$ or  $\Lambda(x) \leq \e^{(1-\lambda)\eps}$. If the former holds, $ x \in \XH $. 
                    If the latter holds, since $\frac{\Lambda(x)}{\Psi(x)} > \e^{\eps}$, we get $ \displaystyle \Lambda(x) > \e^{(1-\lambda)\eps} \e^{\lambda \eps} \Psi(x)$. 
                    Because all quantities are positive, $\e^{\lambda \eps} \Psi(x) <1$; thus, $\Psi(x) < \e^{-\lambda\eps} \Rightarrow x \in \XH$. 
                \end{proof}
            \end{Proposition}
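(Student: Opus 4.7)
The plan is to exploit the monotonicity of $I(X;Y)$ in the high-risk subset noted right after \eqref{eq:MI watch}: strictly shrinking $\XH$ strictly increases utility. Thus the proposition reduces to showing the set containment $\XH' \subseteq \XH$, where $\XH'$ denotes the LDP high-risk subset from \eqref{eq:LDP XL,XH} and $\XH$ denotes the ALIP high-risk subset from \eqref{eq:ALIP XH} with $\epsl=\lambda\eps$ and $\epsu=(1-\lambda)\eps$.

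First I would rewrite both sets in a parallel form amenable to direct comparison: $\XH' = \{x\in\X : \Lambda(x)/\Psi(x) > \e^{\eps}\}$ and $\XH = \{x\in\X : \Lambda(x) > \e^{(1-\lambda)\eps} \text{ or } \Psi(x) < \e^{-\lambda\eps}\}$. Then I would fix an arbitrary $x \in \XH'$ and split on whether $\Lambda(x) > \e^{(1-\lambda)\eps}$. In this first subcase the first disjunct in the definition of $\XH$ is automatically satisfied, so $x \in \XH$.

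The nontrivial subcase is $\Lambda(x) \leq \e^{(1-\lambda)\eps}$. Here I would factor the LDP budget as $\e^{\eps} = \e^{\lambda\eps}\e^{(1-\lambda)\eps}$ in the defining inequality of $\XH'$ to obtain $\Lambda(x) > \e^{(1-\lambda)\eps}\e^{\lambda\eps}\Psi(x)$. Chaining this with the assumed upper bound $\Lambda(x) \leq \e^{(1-\lambda)\eps}$ and dividing through by the strictly positive constant $\e^{(1-\lambda)\eps}$ yields $\e^{\lambda\eps}\Psi(x) < 1$, i.e., $\Psi(x) < \e^{-\lambda\eps}$, which triggers the second disjunct in the definition of $\XH$. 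This closes the containment, and invoking the monotonicity of \eqref{eq:MI watch} gives $I(X;Y) \leq I'(X;Y)$.

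No serious obstacle arises: the utility monotonicity used in the first step is already established in \cite{2020PropertiesWatchdog}, and the remainder is a short two-case set-containment chase. The one detail worth tracking is the orientation of strict versus non-strict inequalities across \eqref{eq:LDP XL,XH} and \eqref{eq:ALIP XH}, which is essentially the contrapositive of Proposition~\ref{prop:ALIP-LDP-Average}-\ref{subprop:ALIP-LDP} and serves as a convenient sanity check that the direction of containment is correct.
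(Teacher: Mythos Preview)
Your proposal is correct and follows essentially the same approach as the paper's own proof: reduce to the set containment $\XH'\subseteq\XH$ via the monotonicity of \eqref{eq:MI watch}, then do the same two-case split on whether $\Lambda(x)>\e^{(1-\lambda)\eps}$, handling the nontrivial case by factoring $\e^{\eps}=\e^{(1-\lambda)\eps}\e^{\lambda\eps}$ and chaining inequalities to force $\Psi(x)<\e^{-\lambda\eps}$. Your write-up is in fact slightly cleaner than the paper's (you correctly write $\e^{\eps}$ rather than $\eps$ in the definition of $\XH'$, and you make the division-by-a-positive-constant step explicit).
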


            Watchdog mechanism with $X$-invariant randomization is a powerful method with low complexity that can be easily applied to instance-wise measures. 
            However, it significantly degrades the utility \cite{2020PropertiesWatchdog} because $X$-invariant randomization obfuscates all high-risk symbols together to minimize privacy leakage, with the cost of deteriorating data resolution.
            In section \ref{sec:subset rand},  we propose subset merging randomization to enhance the utility of the watchdog mechanism.
            
 
        \subsubsection{Asymmetric Optimal Random Response (AORR)}
        
            ORR has been proposed in \cite{2021DataSanitize} as a localized instance-wise replacement of the privacy funnel \cite{2014PFInfBottneck}.
            It is the solution to the optimal utility problem subject to $\eps$-LIP or $\eps$-LDP constraints.
            For ALIP, we propose asymmetric optimal random response (AORR), which is defined as 
            \begin{align} \label{eq:AORR}
                &\max_{\substack{\PXgY,\PY}}I(X;Y)\\
                \nonumber \textrm{s.t.} \quad & S-X-Y\\
                \nonumber&\e^{-\epsl} \leq \Psi(y) \quad \text{and} \quad \Lambda(y)\leq \nonumber \e^{\epsu}, \hspace{5pt} \forall y \in 	\Y. 
            \end{align}
            Privacy constraints in this optimization problem form a closed, bounded, convex polytope \cite{2021DataSanitize}. 
            It has been proved that vertices of this polytope are the feasible candidates that maximize MI and satisfy privacy constraints \cite{ 2021DataSanitize, 2019DataDsclsurPtfPriv, 2014ExtermalMechanism}.
            However, the number of vertices grows exponentially in the dimension of the polyhedron, which is $|\X|(|\X|-1)$ for LDP and $\left(|\X|-1\right)$ for LIP. 
            This makes ORR computationally cumbersome for large $|\X|$. 
            Accordingly, \cite{2021DataSanitize} suggests some approaches with lower complexity than ORR to avoid vertex enumeration for the larger sizes of $\X$, but this comes at the cost of lower utility.

        \subsubsection{Numerical Results}\label{subsubsec:Num watchdog}

            Here we demonstrate the privacy leakage and utility of AORR and the watchdog mechanism under ALIP. 
            For the utility, we use normalized MI (NMI) $$\text{NMI}=\frac{I(X;Y)}{H(X)} \in [0,1].$$ 
            It is clear that the maximum possible utility is obtained when $X$ is published without randomization, where $Y=X$ and $I(X;Y)=H(X)$. 
            Thus, $I(X;Y) \leq H(X)$ and NMI $\leq 1$.
            For the numerical simulations, we have randomly generated $10^{3}$ distributions for the watchdog mechanism and $100$ distributions for the AORR where $|\X|=17$ and $|\Sen|=5$. 
            Then, the mean values of NMI, $\log\left(\max_{y}\Lambda(y)\right)$ (max-lift leakage), and $ \left|\log\left(\min_{y}\Psi(y)\right)\right|$ (min-lift leakage) are shown versus the LDP privacy budget $\eps$, for the watchdog and AORR mechanisms in Figures \ref{Fig:Watchdog} and \ref{Fig:ORR}, respectively.
            For both mechanisms, $\eps$ varies from  $0.25$ to $8$ and three cases have been considered for $(\epsu,\epsl)$-ALIP, where $\lambda\in\{0.35,0.5,0.65\}$, $\epsl=\lambda\eps$, and $\epsu=(1-\lambda)\eps$. 

            In Figure~\ref{Fig:Watchdog}, we observe that in the watchdog mechanism, LDP provides higher utility and leakage than ALIP for all values of $\eps$ and $\lambda$, which confirms proposition \ref{prop:LDPutility}.
            Figures~\ref{fig:WatchdogUt} and \ref{fig:ORRutility} demonstrate that the min-lift relaxation, $\lambda=0.65$, enhances utility in the watchdog and AORR mechanisms for $\eps>1$. Note that in both figures, $\lambda=0.5$ refers to $\frac{\eps}{2}$-LIP. 
            On the other hand, $\lambda=0.35$ results in lower utility. 
            Generally, any value of $\lambda < 0.5$ reduces utility since it strictly bounds the min-lift while relaxing the max-lift.  
            As the min-lift has a wider range of values, achieving this strict bound enlarges the set $\XH$ and requires randomizing more symbols, which reduces utility.

            Another observation here is that AORR incurs significantly higher utility than the watchdog mechanism.  
            For instance, when $\lambda=0.5$ and $\eps=2$, the watchdog mechanism results in a utility of $0.52$, while AORR has a utility of $0.94$. 
            AORR finds the optimal utility, which due to PUT, necessarily results in the highest leakage subject to privacy constraints. 
            However, the watchdog mechanism is a non-optimal solution that minimizes leakage of high-risk symbols to provide strong privacy protection, which deteriorates utility.
            To solve this drawback of the watchdog mechanism, we propose a subset randomization method in the following section.

            \begin{figure}
            \centering
                \begin{subfigure}{0.32\textwidth}
                \centering
                    \scalebox{0.42}{\definecolor{mycolor1}{rgb}{0.25, 0.4, 0.96}
\definecolor{mycolor2}{rgb}{0.63, 0.36, 0.94}
\definecolor{mycolor3}{rgb}{0.89, 0.26, 0.2}
\definecolor{mycolor4}{rgb}{0.24, 0.82, 0.44}

\begin{tikzpicture}
\begin{axis}[%
width=4.521in,
height=3.8in,
scale only axis,
xmin=0,
xmax=8,
xlabel style={font=\color{white!15!black}},
xlabel={\Huge $\eps$},
ymin=0,
ymax=1,
ylabel style={font=\color{white!15!black}},
ylabel={\huge NMI},
yticklabel style = {font=\huge},
xticklabel style = {font=\huge},
axis background/.style={fill=white},
xmajorgrids,
ymajorgrids,
legend style={at={(0.677,0.142)}, anchor=south west, legend cell align=left, align=left, draw=white!15!black, legend pos =south east,font=\fontsize{19}{19}\selectfont}
]
\addplot [color=mycolor1, line width=2pt]
  table[row sep=crcr]{%
0.25	0.00363595957333703\\
0.5	0.0374216802267113\\
0.75	0.125260516314092\\
1	0.255750255549129\\
1.25	0.390198520057209\\
1.5	0.519360515833202\\
1.75	0.63446470020284\\
2	0.726787656841567\\
2.25	0.803115874275826\\
2.5	0.862536358270269\\
2.75	0.904721569099746\\
3	0.935008373837039\\
3.25	0.95518938761849\\
3.5	0.969160109609324\\
3.75	0.980188464116254\\
4	0.98709413105778\\
4.25	0.991802104637987\\
4.5	0.994512384251872\\
4.75	0.996583895723379\\
5	0.998069602906152\\
5.25	0.998718008551077\\
5.5	0.999294258032745\\
5.75	0.999629893683911\\
6	0.999835365038007\\
6.25	0.999881764156754\\
6.5	0.999967885890903\\
6.75	0.999967885890903\\
7	0.999967885890903\\
7.25	0.999967885890903\\
7.5	0.999967885890903\\
7.75	0.999967885890903\\
8	0.999967885890903\\
};
\addlegendentry{LDP}

\addplot [color=mycolor2, line width=2pt]
  table[row sep=crcr]{%
0.25	0.00167990837919735\\
0.5	0.0124648803833893\\
0.75	0.0393621802388327\\
1	0.0854023038860948\\
1.25	0.139843094911915\\
1.5	0.206163488274976\\
1.75	0.269886248131067\\
2	0.329270556945718\\
2.25	0.390184017287841\\
2.5	0.445225738267401\\
2.75	0.498738210921823\\
3	0.549401065874796\\
3.25	0.59931741680842\\
3.5	0.641527014095103\\
3.75	0.680236696332396\\
4	0.7156139142274\\
4.25	0.748534783906187\\
4.5	0.778542460501192\\
4.75	0.804330427562624\\
5	0.827508023100588\\
5.25	0.848117199523057\\
5.5	0.867524304440143\\
5.75	0.885050024926977\\
6	0.899184320139297\\
6.25	0.912547533522322\\
6.5	0.924274335235935\\
6.75	0.933971937333256\\
7	0.942410796773631\\
7.25	0.949324593641984\\
7.5	0.956339808635771\\
7.75	0.961887098473526\\
8	0.967041163848539\\
};
\addlegendentry{$\text{ALIP, }\lambda\text{=0.35}$}

\addplot [color=mycolor3, line width=2pt]
  table[row sep=crcr]{%
0.25	0.00232803882362242\\
0.5	0.0228445524018419\\
0.75	0.0812568151430994\\
1	0.169120224550059\\
1.25	0.268307457355739\\
1.5	0.358593287128413\\
1.75	0.442917517264832\\
2	0.520642668288906\\
2.25	0.592641196205267\\
2.5	0.653266421118533\\
2.75	0.706044673963869\\
3	0.752968637445294\\
3.25	0.794591448052248\\
3.5	0.827508023100588\\
3.75	0.856432702256904\\
4	0.882394294136526\\
4.25	0.90316158336014\\
4.5	0.920764410121858\\
4.75	0.934848575307635\\
5	0.946264255415453\\
5.25	0.956339808635771\\
5.5	0.964193284647758\\
5.75	0.971014766681167\\
6	0.976069457000496\\
6.25	0.980810546838991\\
6.5	0.985090687840927\\
6.75	0.988128062664786\\
7	0.990479292002468\\
7.25	0.992303624082308\\
7.5	0.993781486097376\\
7.75	0.995167991759646\\
8	0.995810546316464\\
};
\addlegendentry{$\text{ALIP, }\lambda\text{=0.5}$}

\addplot [color=mycolor4, line width=2pt]
  table[row sep=crcr]{%
0.25	0.0064275173424969\\
0.5	0.0182855108827373\\
0.75	0.067311251522892\\
1	0.162248654051471\\
1.25	0.278333574807807\\
1.5	0.405568787267825\\
1.75	0.52999770487513\\
2	0.629128664355549\\
2.25	0.711887169492945\\
2.5	0.779754624789156\\
2.75	0.829508240636696\\
3	0.869928322507786\\
3.25	0.900022081301353\\
3.5	0.923778388767914\\
3.75	0.940689769634959\\
4	0.954043644382012\\
4.25	0.965060291512544\\
4.5	0.973575168460946\\
4.75	0.979301218045013\\
5	0.985090687840927\\
5.25	0.98875793706631\\
5.5	0.991580383819888\\
5.75	0.993511251606896\\
6	0.995292536108013\\
6.25	0.99645383731084\\
6.5	0.997470505083211\\
6.75	0.998141433432385\\
7	0.998755828147505\\
7.25	0.999118291077067\\
7.5	0.999225186917732\\
7.75	0.999576653858199\\
8	0.999729957430233\\
};
\addlegendentry{$\text{ALIP, }\lambda\text{=0.65}$}

\end{axis}
\end{tikzpicture}
                    \caption{ \centering Utility \label{fig:WatchdogUt} }
                \end{subfigure}%
                ~
                \begin{subfigure}{0.32\textwidth}
                \centering
                    \scalebox{0.42}{\definecolor{mycolor1}{rgb}{0.25, 0.4, 0.96}
\definecolor{mycolor2}{rgb}{0.63, 0.36, 0.94}
\definecolor{mycolor3}{rgb}{0.89, 0.26, 0.2}
\definecolor{mycolor4}{rgb}{0.24, 0.82, 0.44}
\begin{tikzpicture}
\begin{axis}[%
width=4.521in,
height=3.8in,
scale only axis,
xmin=0,
xmax=8,
xlabel style={font=\color{white!15!black}},
xlabel={\Huge $\eps$},
ymin=0,
ymax=4.5,
ylabel style={font=\color{white!15!black}},
ylabel={\huge $|\log (\min_{y}\Psi(y))|$},
yticklabel style = {font=\huge},
xticklabel style = {font=\huge},
axis background/.style={fill=white},
xmajorgrids,
ymajorgrids,
legend style={legend cell align=left, align=left, draw=white!15!black, legend pos =north west, font=\fontsize{19}{19}\selectfont}
]
\addplot [color=mycolor1, line width=2.0pt]
  table[row sep=crcr]{%
0.25	0.00323550979259348\\
0.5	0.0681930678803089\\
0.75	0.266115053122653\\
1	0.505324515834168\\
1.25	0.726455248241475\\
1.5	0.936581828747015\\
1.75	1.14389488809416\\
2	1.34455599130745\\
2.25	1.5648906111244\\
2.5	1.82983994097763\\
2.75	2.08411154292122\\
3	2.36862207375918\\
3.25	2.65042198994283\\
3.5	2.92709414884952\\
3.75	3.20344922351204\\
4	3.41592457595669\\
4.25	3.58627886661454\\
4.5	3.71136773224988\\
4.75	3.82185217242897\\
5	3.91206040155766\\
5.25	3.94853323109333\\
5.5	3.99222444687754\\
5.75	4.02361040565654\\
6	4.04120206801038\\
6.25	4.04805255868705\\
6.5	4.06549879673044\\
6.75	4.06549879673044\\
7	4.06549879673044\\
7.25	4.06549879673044\\
7.5	4.06549879673044\\
7.75	4.06549879673044\\
8	4.06549879673044\\
};
\addlegendentry{LDP}

\addplot [color=mycolor2, line width=2.0pt]
  table[row sep=crcr]{%
0.25	0.00106849948736802\\
0.5	0.0158223556043248\\
0.75	0.0661409785421209\\
1	0.166518725775164\\
1.25	0.275762673849929\\
1.5	0.401851020504521\\
1.75	0.49834587270133\\
2	0.589676153225905\\
2.25	0.681212938389134\\
2.5	0.765728074298419\\
2.75	0.8549020415632\\
3	0.941127796284972\\
3.25	1.03193626921071\\
3.5	1.11147553528909\\
3.75	1.19059127135669\\
4	1.27164262346619\\
4.25	1.3558186739437\\
4.5	1.44385469149227\\
4.75	1.53278521490595\\
5	1.63031577410879\\
5.25	1.72253805364305\\
5.5	1.82035992519512\\
5.75	1.92010634711371\\
6	2.01987888437318\\
6.25	2.14748960690808\\
6.5	2.2496298614587\\
6.75	2.36131389386655\\
7	2.46074622032645\\
7.25	2.55565561409543\\
7.5	2.6673413443547\\
7.75	2.78302469928579\\
8	2.87500819219921\\
};
\addlegendentry{$\text{ALIP, }\lambda\text{=0.35}$}

\addplot [color=mycolor3, line width=2.0pt]
  table[row sep=crcr]{%
0.25	0.00177601476047174\\
0.5	0.0363716417748061\\
0.75	0.165147459612918\\
1	0.341649236749539\\
1.25	0.500577765399419\\
1.5	0.641133510029513\\
1.75	0.764995949427829\\
2	0.890507780282611\\
2.25	1.02102131459407\\
2.5	1.13339194648589\\
2.75	1.24731658933326\\
3	1.36657430269787\\
3.25	1.49840189644108\\
3.5	1.63031577410879\\
3.75	1.7647222455738\\
4	1.90775759597298\\
4.25	2.06268287187451\\
4.5	2.21867823853946\\
4.75	2.37170296914929\\
5	2.50800388772173\\
5.25	2.6673413443547\\
5.5	2.82921948101417\\
5.75	2.96308648574992\\
6	3.09683070241512\\
6.25	3.23645130633221\\
6.5	3.35162029095724\\
6.75	3.44856179210176\\
7	3.53591352257902\\
7.25	3.6098285046373\\
7.5	3.66534879996823\\
7.75	3.74568882083729\\
8	3.79055993857415\\
};
\addlegendentry{$\text{ALIP, }\lambda\text{=0.5}$}

\addplot [color=mycolor4, line width=2.0pt]
  table[row sep=crcr]{%
0.25	0.0014846703596063\\
0.5	0.0342842600649347\\
0.75	0.159791507833109\\
1	0.391589945092072\\
1.25	0.612043694050522\\
1.5	0.805086000068862\\
1.75	0.993752799197567\\
2	1.15176444972747\\
2.25	1.30972989939129\\
2.5	1.47680764250511\\
2.75	1.65316399375012\\
3	1.83745908491449\\
3.25	2.03397970868448\\
3.5	2.24748213257548\\
3.75	2.44750489898688\\
4	2.62769032327201\\
4.25	2.84573315026876\\
4.5	3.01685156172427\\
4.75	3.19136131982421\\
5	3.35162029095724\\
5.25	3.47184394544959\\
5.5	3.58591006663995\\
5.75	3.6569358153656\\
6	3.75501041494208\\
6.25	3.82766090462789\\
6.5	3.88733001701286\\
6.75	3.91586501424385\\
7	3.95436117072457\\
7.25	3.98198141622371\\
7.5	3.98931602003396\\
7.75	4.0219644363764\\
8	4.03133542964125\\
};
\addlegendentry{$\text{ALIP, }\lambda\text{=0.65}$}

\end{axis}
\end{tikzpicture}
                    \caption{ \centering Min-lift leakage \label{fig:WatchdogL} }
                \end{subfigure}
                ~
                \begin{subfigure}{0.32\textwidth}
                \centering
                    \scalebox{0.42}{\definecolor{mycolor1}{rgb}{0.25, 0.4, 0.96}
\definecolor{mycolor2}{rgb}{0.63, 0.36, 0.94}
\definecolor{mycolor3}{rgb}{0.89, 0.26, 0.2}
\definecolor{mycolor4}{rgb}{0.24, 0.82, 0.44}
\begin{tikzpicture}

\begin{axis}[%
width=4.521in,
height=3.8in,
scale only axis,
xmin=0,
xmax=8,
xlabel style={font=\color{white!15!black}},
xlabel={\Huge $\eps$},
ymin=0,
ymax=1,
ylabel style={font=\color{white!15!black}},
ylabel={\huge $\log (\max_{y}\Lambda(y))|$},
yticklabel style = {font=\huge},
xticklabel style = {font=\huge},
axis background/.style={fill=white},
xmajorgrids,
ymajorgrids,
legend style={legend cell align=left, align=left, draw=white!15!black, legend pos =south east, font=\fontsize{19}{19}\selectfont}
]
\addplot [color=mycolor1, line width=2.0pt]
  table[row sep=crcr]{%
0.25	0.00315874809439262\\
0.5	0.057219924634615\\
0.75	0.212199312806219\\
1	0.367161035234292\\
1.25	0.483102092047976\\
1.5	0.570931402343619\\
1.75	0.645995361202699\\
2	0.703726500710685\\
2.25	0.754446865634714\\
2.5	0.801459689801233\\
2.75	0.8304260144019\\
3	0.860444658462739\\
3.25	0.878520601576536\\
3.5	0.893898425615071\\
3.75	0.910927185969493\\
4	0.921842295580934\\
4.25	0.92836628054703\\
4.5	0.933603764487825\\
4.75	0.93855390599651\\
5	0.941726018972076\\
5.25	0.942265933321042\\
5.5	0.94398703640604\\
5.75	0.94449771685048\\
6	0.944619873297232\\
6.25	0.944845320154716\\
6.5	0.945261746121502\\
6.75	0.945261746121502\\
7	0.945261746121502\\
7.25	0.945261746121502\\
7.5	0.945261746121502\\
7.75	0.945261746121502\\
8	0.945261746121502\\
};
\addlegendentry{LDP}

\addplot [color=mycolor2, line width=2.0pt]
  table[row sep=crcr]{%
0.25	0.00140923595734939\\
0.5	0.0192709134749189\\
0.75	0.0784682183938534\\
1	0.181497152284372\\
1.25	0.279167096513362\\
1.5	0.387988388153007\\
1.75	0.457107763122485\\
2	0.514180243282065\\
2.25	0.558411008245316\\
2.5	0.596289602914078\\
2.75	0.633138384372225\\
3	0.662749669276134\\
3.25	0.691742873615686\\
3.5	0.717102480471976\\
3.75	0.741793036255859\\
4	0.761688933673531\\
4.25	0.778754469202979\\
4.5	0.795416415444968\\
4.75	0.810589471240357\\
5	0.823194505680589\\
5.25	0.832441588207062\\
5.5	0.845481243966707\\
5.75	0.855325269616205\\
6	0.863489831115207\\
6.25	0.871015563229702\\
6.5	0.878662462513195\\
6.75	0.885437918109565\\
7	0.889596743365134\\
7.25	0.893060812682589\\
7.5	0.899474637657631\\
7.75	0.902645365522888\\
8	0.907827988287778\\
};
\addlegendentry{$\text{ALIP, }\lambda\text{=0.35}$}

\addplot [color=mycolor3, line width=2.0pt]
  table[row sep=crcr]{%
0.25	0.00172963950793934\\
0.5	0.0342255737384146\\
0.75	0.148822412345277\\
1	0.298411276134526\\
1.25	0.421188276013577\\
1.5	0.51571666417901\\
1.75	0.582454095144018\\
2	0.639339727720258\\
2.25	0.687036386268709\\
2.5	0.724534064696039\\
2.75	0.755638648617589\\
3	0.78061178656532\\
3.25	0.805424076314891\\
3.5	0.823194505680589\\
3.75	0.839331516789187\\
4	0.854042627749054\\
4.25	0.865664276150782\\
4.5	0.876417980375835\\
4.75	0.885824178049922\\
5	0.891567015589498\\
5.25	0.899474637657631\\
5.5	0.905323419609901\\
5.75	0.911947674401806\\
6	0.916371397795687\\
6.25	0.922409151943704\\
6.5	0.925391958791599\\
6.75	0.93058516303682\\
7	0.933702956675614\\
7.25	0.93619938175658\\
7.5	0.938154701021764\\
7.75	0.939152298722932\\
8	0.940039866809569\\
};
\addlegendentry{$\text{ALIP, }\lambda\text{=0.5}$}

\addplot [color=mycolor4, line width=2.0pt]
  table[row sep=crcr]{%
0.25	0.00104607415309945\\
0.5	0.0225638070521542\\
0.75	0.102376453485999\\
1	0.240237688270723\\
1.25	0.352619249523886\\
1.5	0.447516009657103\\
1.75	0.532527171252739\\
2	0.603761547633567\\
2.25	0.671553199526736\\
2.5	0.734514417461693\\
2.75	0.785207956559374\\
3	0.824979631670626\\
3.25	0.849872601766144\\
3.5	0.870792028313042\\
3.75	0.884891901953419\\
4	0.896217081208552\\
4.25	0.905710582817592\\
4.5	0.914173759186952\\
4.75	0.919315718580053\\
5	0.925391958791599\\
5.25	0.931694815821536\\
5.5	0.935552507402394\\
5.75	0.937905968234928\\
6	0.939152298722932\\
6.25	0.940667120495143\\
6.5	0.941596344778257\\
6.75	0.942659438549848\\
7	0.943978500926255\\
7.25	0.944134881750002\\
7.5	0.944347695108482\\
7.75	0.944834304405323\\
8	0.944845320154716\\
};
\addlegendentry{$\text{ALIP, }\lambda\text{=0.65}$}

\end{axis}
\end{tikzpicture}
                    \caption{\centering Max-lift leakage \label{fig:WatchdogU} }
                \end{subfigure}
                \caption{Privacy-utility tradeoff of the watchdog mechanism with complete merging randomization under $\eps$-LDP and $(\epsl,\epsu)$-ALIP, where $|\X|=17$, $|\Sen|=5$, $\eps_{\text{LDP}} \in\{0.25,0.5,0.75,\cdots,8\}$, $\lambda \in \{0.35,0.5,0.65\}$, $\epsl =\lambda\eps$, and $\epsu=(1-\lambda)\eps$. \label{Fig:Watchdog} }
            \end{figure}
            
            \begin{figure}
            \centering
                \begin{subfigure}{0.32\textwidth}
                \centering
                    \scalebox{0.42}{\definecolor{mycolor1}{rgb}{0.25, 0.4, 0.96}
\definecolor{mycolor2}{rgb}{0.63, 0.36, 0.94}
\definecolor{mycolor3}{rgb}{0.89, 0.26, 0.2}
\definecolor{mycolor4}{rgb}{0.24, 0.82, 0.44}
\begin{tikzpicture}

\begin{axis}[%
width=4.521in,
height=3.8in,
scale only axis,
xmin=0,
xmax=8,
xlabel style={font=\color{white!15!black}},
xlabel={\Huge $\eps$},
ymin=0.65,
ymax=1,
ylabel style={font=\color{white!15!black}},
ylabel={\huge NMI},
axis background/.style={fill=white},
xmajorgrids,
ymajorgrids,
yticklabel style = {font=\huge},
xticklabel style = {font=\huge},
legend style={at={(0.679,0.153)}, anchor=south west, legend cell align=left, align=left, draw=white!15!black, legend pos =south east, font=\huge}
]
\addplot [color=mycolor2, line width=2.0pt]
  table[row sep=crcr]{%
0.25	0.685720205397846\\
0.5	0.750619076746737\\
0.75	0.790967309187876\\
1	0.820376853273993\\
1.25	0.844032749565848\\
1.5	0.864305521637844\\
1.75	0.882126007241052\\
2	0.898003179444313\\
2.25	0.91202513209668\\
2.5	0.924307088101658\\
2.75	0.935074031882208\\
3	0.944290186237739\\
3.25	0.952334149587033\\
3.5	0.959218564628346\\
3.75	0.965214267509019\\
4	0.970316069983736\\
4.25	0.974662145854739\\
4.5	0.978344640327457\\
4.75	0.981473059970284\\
5	0.984161451125588\\
5.25	0.986471342783902\\
5.5	0.988415187237861\\
5.75	0.99005407476271\\
6	0.991483578365954\\
6.25	0.992712391411955\\
6.5	0.99373990678028\\
6.75	0.994627792389701\\
7	0.995388819730065\\
7.25	0.996059692089243\\
7.5	0.996657006258592\\
7.75	0.997190021548083\\
8	0.997632283407297\\
};
\addlegendentry{AORR, $\lambda\text{=0.35}$}

\addplot [color=mycolor3, line width=2.0pt]
  table[row sep=crcr]{%
0.25	0.694778480553461\\
0.5	0.766403428804463\\
0.75	0.81564258282322\\
1	0.852090797123226\\
1.25	0.880812555696133\\
1.5	0.904487162278662\\
1.75	0.923728978742452\\
2	0.938953455653323\\
2.25	0.95123421250991\\
2.5	0.96101124684881\\
2.75	0.968948599600426\\
3	0.975231100112655\\
3.25	0.980177287028981\\
3.5	0.984161451125588\\
3.75	0.987349909934432\\
4	0.989833491305215\\
4.25	0.9918584000922\\
4.5	0.993464012621917\\
4.75	0.994744239305282\\
5	0.995784723144322\\
5.25	0.99665700625861\\
5.5	0.997386764633659\\
5.75	0.997951411109052\\
6	0.998392072093433\\
6.25	0.998737037004516\\
6.5	0.998998204550125\\
6.75	0.999207075062062\\
7	0.999372038296647\\
7.25	0.999496597868686\\
7.5	0.999588334968171\\
7.75	0.999663489907824\\
8	0.999728287225898\\
};
\addlegendentry{AORR, $\lambda\text{=0.5}$}

\addplot [color=mycolor4, line width=2.0pt]
  table[row sep=crcr]{%
0.25	0.68512818481191\\
0.5	0.755983417660997\\
0.75	0.807479064768845\\
1	0.848256525387804\\
1.25	0.883495002884657\\
1.5	0.913199517014649\\
1.75	0.936894525777225\\
2	0.954506668832154\\
2.25	0.967335886765504\\
2.5	0.976901762035664\\
2.75	0.98362775574481\\
3	0.988252326494806\\
3.25	0.991438148459975\\
3.5	0.993721577435834\\
3.75	0.995286306450814\\
4	0.996490037498342\\
4.25	0.997449432458055\\
4.5	0.998142950479267\\
4.75	0.998643421885847\\
5	0.998998204550126\\
5.25	0.999261750302152\\
5.5	0.999452826905907\\
5.75	0.9995800890168\\
6	0.999677009473788\\
6.25	0.999756789331191\\
6.5	0.999819912333908\\
6.75	0.999875026254469\\
7	0.999911294710614\\
7.25	0.999937155348675\\
7.5	0.999954590436779\\
7.75	0.999966964265398\\
8	0.999975595156083\\
};
\addlegendentry{AORR, $\lambda\text{=0.65}$}

\end{axis}

\end{tikzpicture}
                    \caption{ \centering Utility \label{fig:ORRutility}}
                \end{subfigure}%
                ~
                \begin{subfigure}{0.32\textwidth}
                \centering
                    \scalebox{0.42}{\definecolor{mycolor1}{rgb}{0.25, 0.4, 0.96}
\definecolor{mycolor2}{rgb}{0.63, 0.36, 0.94}
\definecolor{mycolor3}{rgb}{0.89, 0.26, 0.2}
\definecolor{mycolor4}{rgb}{0.24, 0.82, 0.44}

\begin{tikzpicture}

\begin{axis}[%
width=4.521in,
height=3.8in,
scale only axis,
xmin=0,
xmax=8,
xlabel style={font=\color{white!15!black}},
xlabel={\Huge $\eps$},
ymin=0,
ymax=4,
ylabel style={font=\color{white!15!black}},
ylabel={\huge $|\log (\min_{y}\Psi(y))|$},
axis background/.style={fill=white},
xmajorgrids,
ymajorgrids,
yticklabel style = {font=\huge},
xticklabel style = {font=\huge},
legend style={at={(0.677,0.137)}, anchor=south west, legend cell align=left, align=left, draw=white!15!black, legend pos =south east, font=\huge}
]
\addplot [color=mycolor2, line width=2.0pt]
  table[row sep=crcr]{%
0.25	0.0875000000000096\\
0.5	0.175000000000011\\
0.75	0.26250000000001\\
1	0.350000000000011\\
1.25	0.437500000000004\\
1.5	0.525000000000003\\
1.75	0.612500000000002\\
2	0.700000000000002\\
2.25	0.787500000000002\\
2.5	0.875000000000002\\
2.75	0.962500000000001\\
3	1.05\\
3.25	1.1375\\
3.5	1.225\\
3.75	1.3125\\
4	1.4\\
4.25	1.4875\\
4.5	1.575\\
4.75	1.66249999999999\\
5	1.74999999999999\\
5.25	1.83749999999999\\
5.5	1.92403060019714\\
5.75	2.00978060019714\\
6	2.09498850546186\\
6.25	2.17986350546185\\
6.5	2.26349609658308\\
6.75	2.34507725375285\\
7	2.42567449398571\\
7.25	2.50600795900665\\
7.5	2.58460534591637\\
7.75	2.66027040390909\\
8	2.73497226676339\\
};
\addlegendentry{AORR, $\lambda\text{=0.35}$}

\addplot [color=mycolor3, line width=2.0pt]
  table[row sep=crcr]{%
0.25	0.125000000000007\\
0.5	0.250000000000008\\
0.75	0.375000000000007\\
1	0.500000000000004\\
1.25	0.625000000000002\\
1.5	0.750000000000002\\
1.75	0.875000000000002\\
2	1\\
2.25	1.125\\
2.5	1.25\\
2.75	1.375\\
3	1.5\\
3.25	1.62499999999999\\
3.5	1.74999999999999\\
3.75	1.87485903246957\\
4	1.99753060019713\\
4.25	2.11923850546185\\
4.5	2.23992918030685\\
4.75	2.3566744939857\\
5	2.4716744939857\\
5.25	2.58460534591635\\
5.5	2.69252040390906\\
5.75	2.79717158585803\\
6	2.89696184708789\\
6.25	2.99205906320828\\
6.5	3.08121967448893\\
6.75	3.16464217605245\\
7	3.24242322251696\\
7.25	3.30942705561304\\
7.5	3.36603525081281\\
7.75	3.41957703934468\\
8	3.47197106006183\\
};
\addlegendentry{AORR, $\lambda\text{=0.5}$}

\addplot [color=mycolor4, line width=2.0pt]
  table[row sep=crcr]{%
0.25	0.162500000000013\\
0.5	0.325000000000008\\
0.75	0.487500000000007\\
1	0.649390256315193\\
1.25	0.8125\\
1.5	0.975\\
1.75	1.1375\\
2	1.3\\
2.25	1.4625\\
2.5	1.62499999999999\\
2.75	1.78749999999999\\
3	1.94853060019713\\
3.25	2.10711350546185\\
3.5	2.26349609658307\\
3.75	2.4141744939857\\
4	2.56250529491661\\
4.25	2.70327040390907\\
4.5	2.83758870192115\\
4.75	2.96430906320829\\
5	3.08121967448893\\
5.25	3.18888834965375\\
5.5	3.28457919801133\\
5.75	3.36053525081282\\
6	3.43007703934468\\
6.25	3.49651067927509\\
6.5	3.55594578471996\\
6.75	3.61218163522663\\
7	3.65763155073907\\
7.25	3.69629805334337\\
7.5	3.72684425609169\\
7.75	3.75268406536494\\
8	3.77237548521583\\
};
\addlegendentry{AORR, $\lambda\text{=0.65}$}

\end{axis}

\end{tikzpicture}
                    \caption{ \centering \label{fig:ORRleakL} Min-lift leakage}
                \end{subfigure}
                ~
                \begin{subfigure}{0.32\textwidth}
                \centering
                    \scalebox{0.42}{\definecolor{mycolor1}{rgb}{0.25, 0.4, 0.96}
\definecolor{mycolor2}{rgb}{0.63, 0.36, 0.94}
\definecolor{mycolor3}{rgb}{0.89, 0.26, 0.2}
\definecolor{mycolor4}{rgb}{0.24, 0.82, 0.44}
\begin{tikzpicture}

\begin{axis}[%
width=4.521in,
height=3.8in,
scale only axis,
xmin=1,
xmax=8,
xlabel style={font=\color{white!15!black}},
xlabel={\Huge $\eps$},
ymin=0.3,
ymax=1,
ylabel style={font=\color{white!15!black}},
ylabel={\huge $\log (\max_{y}\Lambda(y))|$},
axis background/.style={fill=white},
xmajorgrids,
ymajorgrids,
yticklabel style = {font=\huge},
xticklabel style = {font=\huge},
legend style={at={(0.676,0.163)}, anchor=south west, legend cell align=left, align=left, draw=white!15!black, legend pos =south east, font=\huge}
]
\addplot [color=mycolor2, line width=2.0pt]
  table[row sep=crcr]{%
0.25	0.162500000000001\\
0.5	0.323210446452125\\
0.75	0.470947958248966\\
1	0.581525969241216\\
1.25	0.666530109865236\\
1.5	0.719019625527048\\
1.75	0.764003766215179\\
2	0.804629352373904\\
2.25	0.823032339156159\\
2.5	0.840170767701391\\
2.75	0.855577199788193\\
3	0.870547301666824\\
3.25	0.884799443068428\\
3.5	0.894580460130676\\
3.75	0.904064168648248\\
4	0.910957854581468\\
4.25	0.916904053595634\\
4.5	0.921280635554788\\
4.75	0.924782829635191\\
5	0.92771840614981\\
5.25	0.930576669620234\\
5.5	0.932451040048731\\
5.75	0.934036913439309\\
6	0.93524969721735\\
6.25	0.936379600015997\\
6.5	0.937404987589956\\
6.75	0.938230038351117\\
7	0.938914128142929\\
7.25	0.939528628224378\\
7.5	0.940100897965756\\
7.75	0.940583777710556\\
8	0.940963298971468\\
};
\addlegendentry{AORR, $\lambda\text{=0.35}$}

\addplot [color=mycolor3, line width=2.0pt]
  table[row sep=crcr]{%
0.25	0.125000000000004\\
0.5	0.250000000000002\\
0.75	0.375000000000003\\
1	0.498590268483015\\
1.25	0.615904744553702\\
1.5	0.721572530093644\\
1.75	0.804196254061166\\
2	0.851940950468079\\
2.25	0.882273337650788\\
2.5	0.897907506294232\\
2.75	0.908782545651874\\
3	0.917545312189116\\
3.25	0.923302048070007\\
3.5	0.92771840614981\\
3.75	0.931421607742773\\
4	0.933832101432475\\
4.25	0.935581808094598\\
4.5	0.937126238268964\\
4.75	0.93833160345154\\
5	0.939267122246808\\
5.25	0.940100897965758\\
5.5	0.940754173490681\\
5.75	0.941243794892261\\
6	0.941648726373184\\
6.25	0.941976482688812\\
6.5	0.942232943355927\\
6.75	0.942451273824895\\
7	0.94260507136842\\
7.25	0.942727558301171\\
7.5	0.942826539380225\\
7.75	0.942908597569851\\
8	0.942984960871575\\
};
\addlegendentry{AORR, $\lambda\text{=0.5}$}

\addplot [color=mycolor4, line width=2.0pt]
  table[row sep=crcr]{%
0.25	0.0875000000000057\\
0.5	0.175000000000002\\
0.75	0.262500000000005\\
1	0.350000000000001\\
1.25	0.4375\\
1.5	0.524999999999999\\
1.75	0.612345889257678\\
2	0.698576339197841\\
2.25	0.775582504920269\\
2.5	0.840109864989798\\
2.75	0.887392254180233\\
3	0.915028022743425\\
3.25	0.929540745327351\\
3.5	0.936948425181463\\
3.75	0.938820154279955\\
4	0.939942618984163\\
4.25	0.940808137477662\\
4.5	0.941415086467447\\
4.75	0.941890263351745\\
5	0.942232943355927\\
5.25	0.942500479957974\\
5.5	0.942682690204912\\
5.75	0.94281775011981\\
6	0.942923804379706\\
6.25	0.943018839129262\\
6.5	0.94308801322027\\
6.75	0.943137943014827\\
7	0.943172356682406\\
7.25	0.943200438127495\\
7.5	0.943219548954129\\
7.75	0.943235785990171\\
8	0.943249582336637\\
};
\addlegendentry{AORR, $\lambda\text{=0.65}$}

\end{axis}

\end{tikzpicture}
                    \caption{ \centering \label{fig:ORRleakU}Max-lift leakage}
                \end{subfigure}
                \caption{Privacy-utility tradeoff of AORR where $|\X|=17$, $|\Sen|=5$, $\eps_{\text{LDP}}\in\{0.25,0.5,0.75,\cdots,8\}$, $\lambda \in \{0.35,0.5,0.65\}$, $\epsl =\lambda\eps$, and $\epsu=(1-\lambda)\eps$. \label{Fig:ORR} }
            \end{figure}


 \section{Subset Merging in Watchdog Mechanism} \label{sec:subset rand}

    The watchdog mechanism with $X$-invariant randomization is a low-complexity method that can be easily applied when the privacy measures are symbol-wise.
    $X$-invariant randomization is the optimal privacy protection for the high-risk symbols that minimizes privacy leakage in $\XH$ and necessarily results in the worst data resolution.
    Thus, in this section, we propose the \textit{subset merging} algorithm to improve data resolution by randomizing disjoint subsets of high-risk symbols and enhancing utility in the watchdog mechanism.
    In the following, we show that applying $X$-invariant randomization to disjoint subsets of $\XH$ increases the utility.

    Let $\mathcal{G}_{\XH}=\{\X_1,\X_2,\cdots \X_g\}$ be a partition of $\XH$ where for every $i \in [g]$, $\X_i \subseteq\XH$: $\X_i \cap \X_j = \emptyset, i\neq j$, and $\displaystyle \XH=\cup_{i=1}^{g}\X_i$.
    We randomize each subset $\X_i \in \mathcal{G}_{\XH}$ by $X$-invariant randomization $\R_{\Y_i}(y)$ for $x \in \X_i$ and $y \in \Y_i$, where $\sum_{y \in \Y_i}\R_{\Y_i}(y)=1$.
    The resulting  MI between $X$ and $Y$ is
    
    \begin{equation}\label{eq:subset MI}
    I(X;Y)=H(X)-\sum_{i=1}^{g}\sum_{x \in \X_i}\PX(x) \log \frac{P(\X_i)}{\PX(x)}.
    \end{equation}
    
    \begin{Definition}
         Assume two partitions $\mathcal{G}_{\XH}=\{\X_{1},\cdots,\X_{g}\}$ and $\mathcal{G}_{\XH}^{'} = \{\X_{1}^{'}, \cdots, \X_{g'}^{'} \}$. We say\footnote{This definition is inspired from \cite[Definition 10]{2020Liuindexcoding}.}  $\mathcal{G}_{\XH}^{'}$  is a refinement of $\mathcal{G}_{\XH}$, or $\mathcal{G}_{\XH}$ is an aggregation of $\mathcal{G}_{\XH}^{'}$, if for every $i \in [g]$, $\X_{i}=\cup_{j \in J_i}\X_{j}^{'}$ where $J_i\subseteq [g']$, and $P(\X_{i})=\sum_{j\in J_i}{P(\X_{j}^{'})}$.
    \end{Definition}
    
    \noindent If $\mathcal{G}_{\XH}^{'}$ is a refinement of $\mathcal{G}_{\XH}$ then $I_{\mathcal{G}_{\XH}}(X;Y) \leq I_{\mathcal{G}_{\XH}^{'}}(X;Y)$.

    Obtaining the optimal $\mathcal{G}_{\XH}$ that maximizes utility and satisfies privacy constraints is a combinatorial optimization problem over all possible partitions of $\XH$, which is cumbersome to solve.
    Therefore, we propose a heuristic method in the following.
    \subsection{Greedy Algorithm to Make Refined Subsets of High-Risk Symbols}
    
        In Algorithm \ref{alg:Wsubset algo}, we propose a bottom-up algorithm that constitutes a partition of $\XH$ by merging high-risk symbols in disjoint subsets.
        It works based on a leakage risk metric for each $x \in \XH$: $\omega(x)=\Lambda(x)+\Psi(x)$ for ALIP and $\omega(x)=\Gamma(x)$ for LDP.
        For LIP, $\omega(x)=\max\{ \log\Lambda(x), |\log\Psi(x)| \}$.
        This metric is used to order the subsets by the privacy risk level. 
        Accordingly, to constitute a subset $\X_i \subseteq \XH$, Algorithm \ref{alg:Wsubset algo} bootstraps from the highest risk symbol $\X_i=\{\argmax_{x\in \XH}\omega(x)\}$ (line 5). 
        Then, it merges a symbol $x^{*}$ with $\X_i$ that minimizes $\omega(\X_i\cup x^{*})$ (line 7), as long as the privacy constraints are satisfied in $\X_i$.    
        In Algorithm \ref{alg:Wsubset algo}, we have used ALIP privacy constraints for the while loops condition in lines 4, 6, and 12. 
        For LDP, the privacy constraint is changed to $\Gamma(\X_{Q}) > \eps$, and $\omega(x)$ for LDP is applied. 
        After the constitution of the partition $\mathcal{G}_{\XH}$, the last subset $\X_{g}$ may not meet privacy constraints. 
        Therefore, the leakage of  $\X_{g}$ is checked (line 12), and if there is a privacy breach, an agglomerate $\X_g$ is made by merging other subsets to it that minimizes $\omega(\X_g)$ (lines 13-14), until privacy constraints are satisfied.

 \begin{algorithm}[t]
            \textbf{Input}: $\X,\epsl,\epsu, \PSX.$\\
            \textbf{Output}: $\mathcal{G}_{\XH}=\{\X_1,\X_2,\cdots\,\X_g\}.$ \\
            \textbf{Initialize}: Obtain $\{\XL,\XH\}$, $\X_{Q} \leftarrow \XH$, and $g=1.$
            
            \While{$\left(\Psi(\X_Q) < \e^{-\epsl} \quad \text{or} \quad  \Lambda(\X_Q) > \e^{\epsu} \right) \quad \text{and} \quad  |\X_{Q}|>0$}
            {
            $\X_g=\displaystyle\argmax_{x\in \X_{Q}}{\omega(x)}$, and
            $\X_{Q} \leftarrow\X_{Q}\setminus \X_g$;
            
            \While{$\left(\Psi(\X_g) < \e^{-\epsl} \quad  \text{or} \quad  \Lambda(\X_g) > \e^{\epsu} \right) \quad  \text{and} \quad |\X_{Q}|>0$}
            {
                $x^{*}=\displaystyle\argmin_{x \in \X_{Q}}\omega(\X_g \cup \{x\});$\\
                $\X_g \leftarrow \X_g \cup \{x^{*}\},$ and $\X_{Q}\leftarrow \X_{Q}\setminus\{x^{*}\}$;
            }
            $\mathcal{G}_{\X_Q}=\{\X_1,\X_2,\cdots,\X_g\}$, and  $g \leftarrow g+1$;
            }
            
            \While {$\left(\Psi(\X_g) < \e^{-\epsl} \quad  \text{or} \quad  \Lambda(\X_g) > \e^{\epsu} \right) \quad and \quad |\mathcal{G}_{\X_Q}|>1$ }
            {
            $i^{*}=\displaystyle\argmin_{1\leq i<g}\omega(\X_g\cup\X_i), \quad$ and $\X_g \leftarrow \X_g\cup\X_{i^{*}}$;\\
            
            For $i^{*}+1\leq j \leq g$ update the indices of $\X_j$'s to $\X_{j-1}$ and $g \leftarrow{g-1};$
            
            $\mathcal{G}_{\X_Q}=\{\X_1,\X_2,\cdots,\X_g\};$
            }
            \caption{Subset merging in the watchdog mechanism}\label{alg:Wsubset algo}
        \end{algorithm}

    \subsection{Numerical Results}\label{subsec:Num Sub}
        We show PUT for ALIP and LDP under subset merging randomization in Figure~\ref{Fig:Subset} with the same setup for the watchdog mechanism in Section \ref{subsubsec:Num watchdog}.
        Compared with the complete merging (Figure \ref{Fig:Watchdog}), the utility has been enhanced significantly for both LDP and ALIP in all scenarios under the same privacy constraint.
        For instance, consider the symmetric case $\lambda=0.5$ when $\eps=2$ and compare PUT between the subset and complete merging.
        Figure~\ref{fig:SubsetUt} demonstrates a utility value of around $0.83$ for the subset merging compared to the utility of 0.52 for the complete merging in Figure~\ref{fig:WatchdogUt}, which is almost $60\%$ utility enhancement.         
        Moreover, as Figures \ref{fig:SubsetL} and \ref{fig:SubsetU} illustrate, privacy constraints have been satisfied in all cases.

         \begin{figure}
            \centering
            \begin{subfigure}{0.32\textwidth}
                \centering
                \scalebox{0.42}{\definecolor{mycolor1}{rgb}{0.25, 0.4, 0.96}
\definecolor{mycolor2}{rgb}{0.63, 0.36, 0.94}
\definecolor{mycolor3}{rgb}{0.89, 0.26, 0.2}
\definecolor{mycolor4}{rgb}{0.24, 0.82, 0.44}
\begin{tikzpicture}

\begin{axis}[%
width=4.521in,
height=3.8in,
scale only axis,
xmin=0,
xmax=8,
xlabel style={font=\color{white!15!black}},
xlabel={\Huge $\eps$},
ymin=0.3,
ymax=1,
ylabel style={font=\color{white!15!black}},
ylabel={\huge NMI},
axis background/.style={fill=white},
yticklabel style = {font=\huge},
xticklabel style = {font=\huge},
xmajorgrids,
ymajorgrids,
legend style={at={(0.678,0.144)}, anchor=south west, legend cell align=left, align=left, draw=white!15!black, legend pos =south east, font=\huge}
]
\addplot [color=mycolor1, line width=2.0pt]
  table[row sep=crcr]{%
0.25	0.459110073650291\\
0.5	0.627873759362456\\
0.75	0.720662430743174\\
1	0.768442403895286\\
1.25	0.805461021579951\\
1.5	0.836115726177183\\
1.75	0.863599430479574\\
2	0.886723839023723\\
2.25	0.906769684638462\\
2.5	0.924119955436181\\
2.75	0.937448414879212\\
3	0.94928350076793\\
3.25	0.959445645343815\\
3.5	0.967720706018441\\
3.75	0.974383357014169\\
4	0.978397116136727\\
4.25	0.982141449022878\\
4.5	0.986004874396508\\
4.75	0.988214331390727\\
5	0.990461714742568\\
5.25	0.992295983386668\\
5.5	0.994018595477692\\
5.75	0.995329711133203\\
6	0.996461228240369\\
6.25	0.997136439778287\\
6.5	0.997829605412205\\
6.75	0.998445424698712\\
7	0.998845808951355\\
7.25	0.998975612524407\\
7.5	0.999367360931565\\
7.75	0.999521143953775\\
8	0.99973739335366\\
};
\addlegendentry{LDP}

\addplot [color=mycolor2, line width=2.0pt]
  table[row sep=crcr]{%
0.25	0.358478062631037\\
0.5	0.529607659768487\\
0.75	0.622473533000207\\
1	0.68188230740167\\
1.25	0.720543243194707\\
1.5	0.747405666476861\\
1.75	0.766031057161358\\
2	0.782791013421573\\
2.25	0.799912471116115\\
2.5	0.813464428801842\\
2.75	0.82808906871485\\
3	0.840008415530739\\
3.25	0.850449348333397\\
3.5	0.861615159835804\\
3.75	0.870957580270112\\
4	0.879974256472731\\
4.25	0.888556992523242\\
4.5	0.896265331516895\\
4.75	0.90437716430036\\
5	0.910892573897623\\
5.25	0.9171194522307\\
5.5	0.923470257666593\\
5.75	0.929326201013155\\
6	0.93604374370543\\
6.25	0.940031997452866\\
6.5	0.944384029815996\\
6.75	0.947787072294927\\
7	0.952051254954379\\
7.25	0.955545940456403\\
7.5	0.959409920774493\\
7.75	0.962646736218458\\
8	0.965355552597384\\
};
\addlegendentry{$\text{ALIP, }\lambda\text{=0.35}$}

\addplot [color=mycolor3, line width=2.0pt]
  table[row sep=crcr]{%
0.25	0.423804330641361\\
0.5	0.590405363778221\\
0.75	0.684677310943303\\
1	0.736420576644848\\
1.25	0.767740351108311\\
1.5	0.792155610277785\\
1.75	0.813373048682645\\
2	0.833288456878917\\
2.25	0.849283505285081\\
2.5	0.863822135742257\\
2.75	0.87774225400724\\
3	0.88969951472988\\
3.25	0.900335017819169\\
3.5	0.910892573897623\\
3.75	0.919803977222241\\
4	0.928454419832643\\
4.25	0.936903782914696\\
4.5	0.943127581765371\\
4.75	0.948336949879445\\
5	0.953889614612995\\
5.25	0.959409920774493\\
5.5	0.963891608629518\\
5.75	0.967593922818465\\
6	0.971208749127081\\
6.25	0.973971488902286\\
6.5	0.976518582725321\\
6.75	0.979008418956583\\
7	0.980717345300058\\
7.25	0.982889925424401\\
7.5	0.984654344815168\\
7.75	0.986026357271945\\
8	0.987430932029907\\
};
\addlegendentry{$\text{ALIP, }\lambda\text{=0.5}$}

\addplot [color=mycolor4, line width=2.0pt]
  table[row sep=crcr]{%
0.25	0.376996639840564\\
0.5	0.558259060278165\\
0.75	0.662930538357567\\
1	0.729824797282164\\
1.25	0.773194581708466\\
1.5	0.809682088262084\\
1.75	0.838271561184812\\
2	0.863310854047473\\
2.25	0.882948263431518\\
2.5	0.898962589026597\\
2.75	0.913126584776974\\
3	0.924669587548002\\
3.25	0.936477698512387\\
3.5	0.944509766317719\\
3.75	0.95179176025874\\
4	0.958030830797931\\
4.25	0.964309756303203\\
4.5	0.969183574107671\\
4.75	0.973119744048833\\
5	0.976518582725321\\
5.25	0.979538010895329\\
5.5	0.981995776563415\\
5.75	0.984531274733933\\
6	0.986349630965539\\
6.25	0.988061918339243\\
6.5	0.989492565042365\\
6.75	0.990930410625954\\
7	0.992033342740412\\
7.25	0.993302396228624\\
7.5	0.99432821094937\\
7.75	0.995250455670272\\
8	0.995866352108962\\
};
\addlegendentry{$\text{ALIP, }\lambda\text{=0.65}$}

\end{axis}
\end{tikzpicture}
                \caption{\centering \label{fig:SubsetUt} Utility}
            \end{subfigure}%
            ~
            \begin{subfigure}{0.32\textwidth}
                \centering
                \scalebox{0.42}{\definecolor{mycolor1}{rgb}{0.25, 0.4, 0.96}
\definecolor{mycolor2}{rgb}{0.63, 0.36, 0.94}
\definecolor{mycolor3}{rgb}{0.89, 0.26, 0.2}
\definecolor{mycolor4}{rgb}{0.24, 0.82, 0.44}

\begin{tikzpicture}

\begin{axis}[%
width=4.521in,
height=3.8in,
scale only axis,
xmin=0,
xmax=8,
xlabel style={font=\color{white!15!black}},
xlabel={\Huge $\eps$},
ymin=0,
ymax=4.5,
ylabel style={font=\color{white!15!black}},
ylabel={\huge $|\log (\min_{y}\Psi(y))|$},
yticklabel style = {font=\huge},
xticklabel style = {font=\huge},
axis background/.style={fill=white},
xmajorgrids,
ymajorgrids,
legend style={at={(0.678,0.144)}, anchor=south west, legend cell align=left, align=left, draw=white!15!black, legend pos =north west, font=\huge}
]
\addplot [color=mycolor1, line width=2pt]
  table[row sep=crcr]{%
0.25	0.121494325339405\\
0.5	0.265799863515263\\
0.75	0.415967347232995\\
1	0.576019607478822\\
1.25	0.762620602840971\\
1.5	0.956405609868805\\
1.75	1.16236248659756\\
2	1.36767002727827\\
2.25	1.57600992351055\\
2.5	1.77747708769132\\
2.75	1.98013078030946\\
3	2.18793809722251\\
3.25	2.37331147913692\\
3.5	2.56936800568908\\
3.75	2.73138168217286\\
4	2.86747279536285\\
4.25	3.0119113123323\\
4.5	3.15356912950409\\
4.75	3.23957121077668\\
5	3.36299835817931\\
5.25	3.46547749284933\\
5.5	3.55372374866484\\
5.75	3.62878977849134\\
6	3.70517761169415\\
6.25	3.74604056723889\\
6.5	3.80431983857265\\
6.75	3.85775355655917\\
7	3.89403249321602\\
7.25	3.9100013046765\\
7.5	3.95226176810032\\
7.75	3.97217479065729\\
8	4.00368559935366\\
};
\addlegendentry{LDP}

\addplot [color=mycolor2, line width=2.0pt]
  table[row sep=crcr]{%
0.25	0.0716476930818299\\
0.5	0.152045936372225\\
0.75	0.23465163576976\\
1	0.314178328392778\\
1.25	0.39046717984945\\
1.5	0.463795719608919\\
1.75	0.538387043610151\\
2	0.620505343312084\\
2.25	0.704993077614427\\
2.5	0.786502421773196\\
2.75	0.86785906525081\\
3	0.948171423225569\\
3.25	1.02953633267219\\
3.5	1.1132490078135\\
3.75	1.19960643449485\\
4	1.27566982468603\\
4.25	1.36702411393656\\
4.5	1.4454651566074\\
4.75	1.52646957452391\\
5	1.60103157333297\\
5.25	1.68382152465132\\
5.5	1.77155105321742\\
5.75	1.84942621026719\\
6	1.93756028124291\\
6.25	2.0047940039563\\
6.5	2.08022685344095\\
6.75	2.15025845247078\\
7	2.23072031478202\\
7.25	2.29841512612995\\
7.5	2.3705173967277\\
7.75	2.43459936651921\\
8	2.50166825285938\\
};
\addlegendentry{$\text{ALIP, }\lambda\text{=0.35}$}

\addplot [color=mycolor3, line width=2.0pt]
  table[row sep=crcr]{%
0.25	0.100470681758472\\
0.5	0.214010953623714\\
0.75	0.329254488866525\\
1	0.433690333021954\\
1.25	0.546518348097214\\
1.5	0.666785285868302\\
1.75	0.785893867908312\\
2	0.902402204746048\\
2.25	1.02061284906232\\
2.5	1.13709387166787\\
2.75	1.25877407230278\\
3	1.37761883027481\\
3.25	1.48753120113704\\
3.5	1.60103157333297\\
3.75	1.72119428014853\\
4	1.83861429469511\\
4.25	1.95859032684717\\
4.5	2.0616441697903\\
4.75	2.15954623522229\\
5	2.2719876922487\\
5.25	2.3705173967277\\
5.5	2.46119432208567\\
5.75	2.55206115049235\\
6	2.63269002867425\\
6.25	2.72512153773408\\
6.5	2.80191227049274\\
6.75	2.88287324180595\\
7	2.94876706276903\\
7.25	3.02821761741982\\
7.5	3.10028913598181\\
7.75	3.15036845200669\\
8	3.21194730825408\\
};
\addlegendentry{$\text{ALIP, }\lambda\text{=0.5}$}

\addplot [color=mycolor4, line width=2.0pt]
  table[row sep=crcr]{%
0.25	0.100001339029975\\
0.5	0.230333848703204\\
0.75	0.368303425086537\\
1	0.51010339586177\\
1.25	0.663713813017899\\
1.5	0.838339419460715\\
1.75	0.999531204588441\\
2	1.16316721218608\\
2.25	1.32418181125073\\
2.5	1.4785120229254\\
2.75	1.63105096446341\\
3	1.78687777342404\\
3.25	1.94750261875678\\
3.5	2.08020257735981\\
3.75	2.22016920882192\\
4	2.34290212358649\\
4.25	2.47463617348708\\
4.5	2.58725588269975\\
4.75	2.69832005663473\\
5	2.80191227049274\\
5.25	2.90116464637675\\
5.5	2.98871122346467\\
5.75	3.09605865649054\\
6	3.16002138762021\\
6.25	3.23972568227927\\
6.5	3.32502838802308\\
6.75	3.39053681144082\\
7	3.45118786768393\\
7.25	3.5191266348141\\
7.5	3.57427377513839\\
7.75	3.62717062745526\\
8	3.66672417457115\\
};
\addlegendentry{$\text{ALIP, }\lambda\text{=0.65}$}

\end{axis}
\end{tikzpicture}
                \caption{\centering \label{fig:SubsetL}Min-lift leakage}
            \end{subfigure}
            ~
            \begin{subfigure}{0.32\textwidth}
                \centering
                \scalebox{0.42}{\definecolor{mycolor1}{rgb}{0.25, 0.4, 0.96}
\definecolor{mycolor2}{rgb}{0.63, 0.36, 0.94}
\definecolor{mycolor3}{rgb}{0.89, 0.26, 0.2}
\definecolor{mycolor4}{rgb}{0.24, 0.82, 0.44}
\begin{tikzpicture}

\begin{axis}[%
width=4.521in,
height=3.8in,
scale only axis,
xmin=0,
xmax=8,
xlabel style={font=\color{white!15!black}},
xlabel={\Huge $\eps$},
ymin=0,
ymax=1,
ylabel style={font=\color{white!15!black}},
ylabel={\huge $\log (\max_{y}\Lambda(y))|$},
yticklabel style = {font=\huge},
xticklabel style = {font=\huge},
axis background/.style={fill=white},
xmajorgrids,
ymajorgrids,
legend style={at={(0.678,0.15)}, anchor=south west, legend cell align=left, align=left, draw=white!15!black, legend pos =south east, font=\huge}
]
\addplot [color=mycolor1, line width=2pt]
  table[row sep=crcr]{%
0.25	0.112974602955844\\
0.5	0.233624184181784\\
0.75	0.341590538753755\\
1	0.430807135395785\\
1.25	0.513736292411222\\
1.5	0.588985619375826\\
1.75	0.654820157940961\\
2	0.707248667859875\\
2.25	0.75593237953718\\
2.5	0.796607490325322\\
2.75	0.823144100124574\\
3	0.846492961794939\\
3.25	0.867165473637785\\
3.5	0.883894651979881\\
3.75	0.895491276788933\\
4	0.900782254318123\\
4.25	0.910359314277489\\
4.5	0.917350001212985\\
4.75	0.922889221951078\\
5	0.925720668440168\\
5.25	0.93025025674278\\
5.5	0.933116535461875\\
5.75	0.934733550240487\\
6	0.936160123239669\\
6.25	0.936592364274554\\
6.5	0.937149699266919\\
6.75	0.9383782295747\\
7	0.939511561911719\\
7.25	0.939738893815545\\
7.5	0.940395338924634\\
7.75	0.940395338924634\\
8	0.940557619109968\\
};
\addlegendentry{LDP}

\addplot [color=mycolor2, line width=2.0pt]
  table[row sep=crcr]{%
0.25	0.0884371877339209\\
0.5	0.186452588680739\\
0.75	0.283940908302974\\
1	0.347575640575687\\
1.25	0.40707795791183\\
1.5	0.456251975800938\\
1.75	0.498382904953184\\
2	0.543592989242814\\
2.25	0.584480741546614\\
2.5	0.613236059930186\\
2.75	0.644375558807474\\
3	0.67075205580961\\
3.25	0.697246710085012\\
3.5	0.723368974500588\\
3.75	0.743100920154711\\
4	0.758834070122615\\
4.25	0.778620643643115\\
4.5	0.794026061963416\\
4.75	0.806214359741382\\
5	0.817201025176129\\
5.25	0.830024642737838\\
5.5	0.839257002471818\\
5.75	0.849081387251402\\
6	0.856621501110547\\
6.25	0.86171809344412\\
6.5	0.867280776134947\\
6.75	0.872770187246138\\
7	0.881272583448266\\
7.25	0.886699293265365\\
7.5	0.890948906975204\\
7.75	0.894570905870688\\
8	0.897709710521392\\
};
\addlegendentry{$\text{ALIP, }\lambda\text{=0.35}$}

\addplot [color=mycolor3, line width=2.0pt]
  table[row sep=crcr]{%
0.25	0.0978048292281106\\
0.5	0.203887930699411\\
0.75	0.30706115018029\\
1	0.393189384628559\\
1.25	0.470128458619609\\
1.5	0.545326266512001\\
1.75	0.600643361262103\\
2	0.647695106557557\\
2.25	0.690911082523663\\
2.5	0.727679909350835\\
2.75	0.756580013007945\\
3	0.781037793761278\\
3.25	0.80156340840847\\
3.5	0.817201025176129\\
3.75	0.832881655949906\\
4	0.84855142661314\\
4.25	0.858550948516187\\
4.5	0.866148639205962\\
4.75	0.874491831377261\\
5	0.885111612354062\\
5.25	0.890948906975204\\
5.5	0.896438238520096\\
5.75	0.901583492973817\\
6	0.904722807757693\\
6.25	0.910212217133057\\
6.5	0.912138448607207\\
6.75	0.9145416507032\\
7	0.91849281448742\\
7.25	0.921208711806055\\
7.5	0.923287464572238\\
7.75	0.924986763551075\\
8	0.926475333337892\\
};
\addlegendentry{$\text{ALIP, }\lambda\text{=0.5}$}

\addplot [color=mycolor4, line width=2.0pt]
  table[row sep=crcr]{%
0.25	0.0720500181715683\\
0.5	0.154623624218597\\
0.75	0.235037031747464\\
1	0.315033346393779\\
1.25	0.39163225552979\\
1.5	0.47105026658902\\
1.75	0.545143911463551\\
2	0.615188619534916\\
2.25	0.678381611100602\\
2.5	0.732581387503089\\
2.75	0.776490369114171\\
3	0.813679785314367\\
3.25	0.841539330187849\\
3.5	0.858888512417817\\
3.75	0.878680193681852\\
4	0.888623580382636\\
4.25	0.897601691647827\\
4.5	0.902957425648113\\
4.75	0.908502951348704\\
5	0.912138448607207\\
5.25	0.915786735992914\\
5.5	0.920988950038128\\
5.75	0.92323573906196\\
6	0.925160329866889\\
6.25	0.92694391924252\\
6.5	0.929766393032593\\
6.75	0.931367973734501\\
7	0.932417348774588\\
7.25	0.934017881892497\\
7.5	0.935024052852873\\
7.75	0.93578171836961\\
8	0.9364812357856\\
};
\addlegendentry{$\text{ALIP, }\lambda\text{=0.65}$}

\end{axis}
\end{tikzpicture}
                \caption{\centering \label{fig:SubsetU}Max-lift leakage \vspace{5pt}}
            \end{subfigure}
            \caption{\label{Fig:Subset} Privacy-utility tradeoff of subset merging randomization under $\eps$-LDP and $(\epsl,\epsu)$-ALIP, where $|\X|=17$, $|\Sen|=5$, $\eps_{\text{LDP}}\in\{0.25,0.5,0.75,\cdots,8\}$, $\lambda \in \{0.35,0.5,0.65\}$, $\epsl=\lambda\eps$, and $\epsu=(1-\lambda)\eps$.}
        \end{figure}
        
\section{Subset Random Response}\label{sec:ORRsub}

    In the previous section, we have shown that subset merging enhances utility in the watchdog mechanism significantly.
    In this section, we propose a method to decrease the complexity of AORR for large datasets.
    We adopt AORR for subsets of $\XH$ to decrease the complexity of AORR for large sets such that random response becomes applicable for typically an order of magnitude larger $\X$.

    The AORR optimization problem in \eqref{eq:AORR} is equivalent to the following problem

    \begin{align} \label{eq:AORR Hxgy}
        &H(x) - \min_{\PXgY, \PY} H(X|Y)\\
        \nonumber &\textrm{s.t.}  \quad S-X-Y\\
        \nonumber&\hspace{25pt}\e^{-\epsl} \leq \Psi(y) \quad \text{and} \quad \Lambda(y)\leq \nonumber \e^{\epsu}, \hspace{5pt} \forall y \in 	\Y. 
    \end{align}
    To reduce the complexity of \eqref{eq:AORR Hxgy}, we divide $\X$ into $\XL$ and $\XH$, similar to the watchdog mechanism, and make a partition $\mathcal{G}_{\XH}=\{\X_1,\X_2,\cdots \X_g\}$ from $\XH$. 
    We randomize each subset $\X_{i} \in \mathcal{G}_{\XH}$, $i \in [g]$, separately by a randomization pair $\left(\Q_i,\q_i\right)$, where $\Q_i$ is a matrix in $\Real^{|\X_i|\times|\Y_{i}|}$ and $\q_i$ is a vector in $\Real^{|\Y_i|}$. 
    The elements of $\Q_i$ and $\q_i$ are given by $\Q_i(x|y)=\Pr[X=x|Y=y], x \in \X_i, y \in \Y_i$ and $\q_i(y)=\Pr[Y=y], y \in \Y_i$, respectively. For each $y \in \Y_i$, we have $\sum_{x\in\X_i}\Q_i(x|y)=1$.
    Consequently, $H(X|Y)=\sum_{i\in[g]}H_{i}(X|Y)$, where $H_{i}(X|Y)=-\sum_{y\in\Y_i}\q_{i}(y) \sum_{x\in\X_i} \Q_{i}(x|y)\log\Q_i(x|y)$. 
    This setting turns \eqref{eq:AORR Hxgy} into $g$ optimization problems for each subset $\X_i \in \mathcal{G}_{\XH},$ $i \in [g]$ as follows:
    
    \begin{align}\label{eq:AORR Hxgy part}
        &\min_{\Q_i, \q_i} H_i(X|Y)\\
        \textrm{s.t.} \quad & 0 \leq \q_i(y),                           && \forall y \in \Y_i, \\
        & 0 \leq \Q_i(x|y),                         && \forall x \in \X_i, \forall y \in \Y_i, \\
        & \sum_{x \in \X_i}\Q_i(x|y)=1,             && \forall y\in\Y_{i}, \\
        & \sum_{y \in \Y_i}\Q_i(x|y)\q_i(y)=\PX(x), && x\in\X_i, \\
        & \e^{-\epsl}{\PS(s)} \leq \sum_{x \in \X_{i}}\PSgX(s|x)\Q_{i}(x|y) \leq \e^{\epsu} \PS(s), && \forall s \in \Sen, y \in \Y_i.    
    \end{align}
    
    The columns of the randomization matrix $\Q_i$, $i \in [g]$ can be expressed as the members of a convex and bounded polytope $\Pi_i$, which is given by the following constraints
    
    \begin{equation}\label{eq:polytope}
        {\Pi_i=\left\{
            \begin{array}{ll}
                &\mathbf{v} \in \mathbb{R}^{|\X_{i}|}: \\
                &0 \leq v_{k}, \forall k \in [|\X_i|],\\
                & \sum_{k=1}^{|\X_i|} v_{k}=1, \\
                & \mathrm{e}^{-\epsl}\PS(s)\leq\sum_{x\in\X_i}\PSgX(s|x)v_{k} \leq \mathrm{e}^{\epsu}\PS(s), \forall s \in \Sen, k \in [|\X_{i}|]
            \end{array} \right\}}.
    \end{equation}
    For each $\X_i \in \mathcal{G}_{\XH},$  $i \in [g]$, let $\mathcal{V}_i=\{\mathbf{v}^{i}_1\cdots,\mathbf{v}^{i}_M\}$ be the vertices of $\Pi_i$ in \eqref{eq:polytope},   $\mathrm{H}\left(\mathbf{v}^{i}_{k}\right)$ be the entropy of each $\mathbf{v}^{i}_{k}$ for $k \in [M]$, $\boldsymbol{P^{i}_{X}}$ be the probability vector of  $x \in \X_{i}$, and $\boldsymbol{{\beta}}^i$ be the solution to the following  optimization
    
    \begin{equation}\label{eq:optim1}
        \begin{aligned}
            &\min_{\boldsymbol{{\beta}}^i \in \mathbb{R}^{M}} \sum_{k=1}^{M} \mathrm{H}\left(\mathbf{v}^{i}_{k}\right) \beta^{i}_{k}, \\
             \text {s.t.}\quad & \beta^{i}_{k} \geq 0, \quad \forall k \in [M],  \\
            &\sum_{k=1}^{M}\mathbf{v}^{i}_{k} \beta^{i}_{k} =\boldsymbol{P^{i}_{X}}.
        \end{aligned}
    \end{equation}
    Then, $\Y_i$ and $\left(\Q_i,\q_i \right)$ are given by
    \begin{gather}
         \Y_i=\{y: {\beta}^{i}_{y}\neq 0\};\\
         \q_i(y)={\beta}^{i}_{y} \quad \text{and} \quad \Q_i(.|y)= {\mathbf{v}^{i}_{y}}, \quad y \in \Y_i.
    \end{gather}
    The $(\epsl,\epsu)$-ALIP  protocol, $\mathcal{M}:\mathcal{X} \rightarrow \mathcal{Y}$, is given by the pair of $(\PXgY,\PY)$ as follows:
    \begin{equation} \label{eq:Subset ORRpxgy}
        \PXgY = \begin{cases}
            \mathbf{1}_{\{x=y\}}, & x,y\in \XL=\YL,\\
            \Q_{i}(x|y),		 			 & x \in \X_i, y \in \Y_i, i\in[g],\\
            0,		 			 & \textup{otherwise};
        \end{cases}\\
    \end{equation}
    
    \begin{equation} \label{eq:Subset ORR py}
        \PY =\begin{cases}
            \PX(y),    & y \in \YL,\\
            \q_{i}(y), & y \in \Y_i,  i \in [g].
        \end{cases}
    \end{equation}  
        \begin{algorithm}
            \textbf{Input}: $\X,\epsl,\epsu, \PSX$.\\
            \textbf{Output}: $\mathcal{O}_{\XH}=\{\X'_1,\X'_2,\cdots\,\X'_{g'}\}$, $\PXgY$, and $\PY$. \\
            \textbf{Initialize}: Obtain $\mathcal{G}_{\XH}=\{\X_1,\X_2,\cdots\,\X_g\}$ by Algorithm \ref{alg:Wsubset algo} and $g'=1$.\\
            \While{$|\mathcal{G}_{\XH}|>0$}
            {   $\X'_{g'}=\X_1,$ find $\Pi_{g'}$ in \eqref{eq:polytope}, and $\mathcal{G}_{\XH}\leftarrow\mathcal{G}_{\XH}\setminus\X_{1}$;\\
                Update subset indices in $\mathcal{G}_{\XH}$, such that $\X_{i-1} \leftarrow \X_i$ for $2\leq i \leq g$ and $g \leftarrow g-1$\label{alg:polytop};\\
                \While{$\Pi_{g'}=\emptyset$ and  $ \left( |\mathcal{G}_{\XH}|>0 \textrm{ or } |\mathcal{O}_{\XH}|>0 \right)$}
                {   \uIf{$|\mathcal{G}_{\XH}|>0$}
                    {   $\X'_{g'}=\X'_{g'}\cup\X_{1},$ find $\Pi_{g'}$ in \eqref{eq:polytope}, and $\mathcal{G}_{\XH} \leftarrow \mathcal{G}_{\XH} \setminus\X_{1}$; \\
                    Update subset indices in $\mathcal{G}_{\XH}$, such that $\X_{i-1} \leftarrow \X_i$ for $2\leq i \leq g$ and $g \leftarrow g-1$;\\
                    }
                    \Else
                    {   $\X'_{g'}=\X'_{g'}\cup\X'_{g'-1}$, $g' \leftarrow g'-1$, and find $\Pi_{g'}$ in \eqref{eq:polytope}; 
            
                    } 
                }
                Set $\mathcal{O}_{\XH}=\{\X'_1,\X'_2,\cdots\,\X'_{g'}\}$ and $g'\leftarrow g'+1;$
            }
            
            \uIf{$|\mathcal{O}_{\XH}|=1$ and $\Pi_1=\emptyset$}
                {  Apply subset merging mechanism for $\mathcal{G}_{\XH}$.}
                \Else
                {   \For{$i \leftarrow 1$ to $g'$}
                    {Solve optimization in \eqref{eq:optim1} for $\X^{'}_i$ and obtain $\boldsymbol{{\beta_{i}}}$, $\Q_{i}$, and $\q_{i}$;}   
                 Obtain $\PXgY$ in \eqref{eq:Subset ORRpxgy} and $\PY$ in \eqref{eq:Subset ORR py}.
                }
            \caption{Subset random response}\label{alg:SubORR_alg}
        \end{algorithm}

    \subsection{Algorithm for Subset Random Response}
    
        We propose Algorithm \ref{alg:SubORR_alg} to implement AORR for subsets of $\XH$, which we call \textit{subset random response} (SRR).
        In this algorithm, first, we obtain a partition $\mathcal{G}_{\XH}=\{\X_1,\X_2,\cdots\,\X_{g}\}$ of $\XH$ via Algorithm \ref{alg:Wsubset algo}.
        Then, we find the optimal random response for each subset $\X^{'}_i \in \mathcal{G}_{\XH}$ (line 5).
        By obtaining the optimal random responses for all subsets, we get a pair $(\Q_{i}, \q_{i})$ for each subset $\X^{'}_i$ and consequently $\PXgY$ and $\PY$ by \eqref{eq:Subset ORRpxgy} and \eqref{eq:Subset ORR py} (lines 20-23).
        The while loop in lines 7-14 is for the particular cases when the polytope in \eqref{eq:polytope} is empty for a subset $\X'_{i}$. 
        It may occur for strict privacy conditions where the privacy budget is very small.  Since we reduce the dimension of the original polytope in \eqref{eq:AORR Hxgy}, it increases the possibility that no feasible random response exists in some cases. 
        Therefore, in such cases, we make a union with other subsets in $\mathcal{G}_{\XH}$ (line 9) or $\mathcal{O}_{\XH}$ (line 12) until we have a nonempty polytope.
        The if condition in lines 17-18 is for the cases where there is no feasible polytope after making a union of all subsets.
        Whenever this occurs, we apply the subset merging mechanism.

        The number of  polytope vertices in AORR is $n \sim O\left(\exp{(|\X|-1)}\right)$, and the time complexity is $O(n)$.
        As $|\X|$ increases, the complexity of AORR increases exponentially in the $|\X|-1$.
        In SRR, for each $\X_i \in \mathcal{G}_{\XH}$, $i \in [g]$, the number of vertices is $n_i \sim O(\exp{(|\X_i|-1)})$, and the complexity of SRR is $O(\max_{i} n_i)$. 
        Thus, the complexity of SRR increases in the maximum subset size, which can be much lower than that in AORR.
          
    \subsection{Numerical Results}
        Here, we compare the PUT of AORR with SRR in Algorithm \ref{alg:SubORR_alg} and subset merging in Algorithm \ref{alg:Wsubset algo}. 
        Figure \ref{Fig:SubsetORR} depicts mean values of utility, leakage, and time complexity for 100 randomly generated distributions where $\lambda=0.65$ and simulation setup is the same as that in Section \ref{subsubsec:Num watchdog}.

        Figures~\ref{fig:SRRutility} to \ref{fig:SRRleakU} demonstrate that SRR results in better utility and higher leakage than subset merging, and its PUT is very close to AORR.
        Figure~\ref{fig:ORRtime} illustrates the processing time for each mechanism from which we observe that the complexity of AORR and SRR is much higher than the subset merging. 
        Running SRR is less complex than AORR for strict privacy constraints ($\eps<1$) and for $\eps>2.5$. While SRR shows higher complexity for some privacy budgets, $1 \leq \eps\leq 2.5$, it has the advantage in high dimension systems. 
        Figure~\ref{Fig:SRRlarge} shows a PUT comparison between SRR and subset merging where $|\X|=200$, $|\Sen|=15$, $\eps\in\{1,1.25,\cdots,8\}$, and $\lambda=0.5$. 
        This experiment shows that both SRR and subset merging are applicable to large datasets. Obviously, SRR provides better utility (Figure \ref{fig:SRRlargeUt}) and higher leakage (Figures \ref{fig:SRRlargeL} and \ref{fig:SRRlargeU}), which is still below the given budgets $\epsl$ and $\epsu$.  

         \begin{figure}
            \centering
            \begin{subfigure}{0.49\textwidth}
            \centering
                \scalebox{0.45}{\definecolor{mycolor1}{rgb}{0.25, 0.4, 0.96}
\definecolor{mycolor2}{rgb}{0.63, 0.36, 0.94}
\definecolor{mycolor3}{rgb}{0.89, 0.26, 0.2}
\definecolor{mycolor4}{rgb}{0.24, 0.82, 0.44}

\begin{tikzpicture}

\begin{axis}[%
width=4.521in,
height=3.8in,
at={(0.7in,0.519in)},
scale only axis,
xmin=0,
xmax=8,
xlabel style={font=\color{white!15!black}},
xlabel={\Huge $\eps$},
ymin=0.3,
ymax=1,
ylabel style={font=\color{white!15!black}},
ylabel={\huge NMI},
axis background/.style={fill=white},
xmajorgrids,
ymajorgrids,
yticklabel style = {font=\huge},
xticklabel style = {font=\huge},
legend style={legend cell align=left, align=left, draw=white!15!black, legend pos= south east, font=\huge}
]
\addplot [color=mycolor1, line width=3.0pt]
  table[row sep=crcr]{%
0.25	0.682815395988472\\
0.5	0.753690966085351\\
0.75	0.806650490732917\\
1	0.848858488367128\\
1.25	0.884654264914269\\
1.5	0.913158213168902\\
1.75	0.934992089535257\\
2	0.952288826741206\\
2.25	0.96520370233802\\
2.5	0.975001675609365\\
2.75	0.981735180617857\\
3	0.986597498747196\\
3.25	0.990194850375445\\
3.5	0.992780418196244\\
3.75	0.994706624801367\\
4	0.996102431704974\\
4.5	0.997807783479946\\
5	0.998760724000254\\
5.75	0.999468438092729\\
6.75	0.999840562287352\\
8	0.999956544189233\\
};
\addlegendentry{AORR}

\addplot [color=mycolor2, line width=3.0pt]
  table[row sep=crcr]{%
0.25	0.646640162616452\\
0.5	0.726185658542931\\
0.75	0.790161525533449\\
1	0.837505274458278\\
1.25	0.876629488957375\\
1.5	0.905753464929939\\
1.75	0.931036360935499\\
2	0.950553181070148\\
2.25	0.962525704331128\\
2.5	0.975424697695482\\
2.75	0.984082262538287\\
3.25	0.992194435997114\\
3.5	0.99496037607663\\
3.75	0.996922925279538\\
4.5	0.999661846129699\\
5	1\\
8	1\\
};
\addlegendentry{SRR, Alg. \ref{alg:SubORR_alg}}

\addplot [color=mycolor3, line width=3.0pt]
  table[row sep=crcr]{%
0.25	0.375069726607423\\
0.5	0.55062448215695\\
0.75	0.66222819961054\\
1	0.724429462407237\\
1.25	0.779462052423645\\
1.5	0.816894611357032\\
1.75	0.844028427486224\\
2	0.862856400252017\\
2.25	0.888336422804167\\
2.5	0.899944903793466\\
2.75	0.915114085715809\\
3	0.927480832578217\\
3.25	0.937667176332129\\
3.5	0.945554834337198\\
3.75	0.951861099921931\\
4	0.95911445961378\\
4.25	0.9670118219186\\
4.5	0.972418864870395\\
4.75	0.973818684844646\\
5	0.975561162803821\\
5.25	0.97953620285276\\
5.5	0.981050704048178\\
5.75	0.98340419868234\\
6	0.985220785442788\\
6.25	0.987365765336435\\
6.5	0.988561140688473\\
6.75	0.989443291556894\\
7	0.991951996991677\\
7.75	0.994596128804478\\
8	0.99485073780698\\
};
\addlegendentry{Subset merging, Alg. \ref{alg:Wsubset algo}}


\end{axis}
\end{tikzpicture}
                \caption{ \centering \label{fig:SRRutility} Utility \vspace{10pt}}
            \end{subfigure}%
            ~
            \centering
            \begin{subfigure}{0.49\textwidth}
            \centering
                \scalebox{0.45}{\definecolor{mycolor1}{rgb}{0.25, 0.4, 0.96}
\definecolor{mycolor2}{rgb}{0.63, 0.36, 0.94}
\definecolor{mycolor3}{rgb}{0.89, 0.26, 0.2}
\definecolor{mycolor4}{rgb}{0.24, 0.82, 0.44}

\begin{tikzpicture}

\begin{axis}[%
width=4.521in,
height=3.8in,
at={(0.7in,0.519in)},
scale only axis,
xmin=0,
xmax=8,
xlabel style={font=\color{white!15!black}},
xlabel={\Huge $\eps$},
ymin=0,
ymax=4,
ylabel style={font=\color{white!15!black}},
ylabel={\huge $|\log \min_{y}\Psi(y)|$},
axis background/.style={fill=white},
xmajorgrids,
ymajorgrids,
yticklabel style = {font=\huge},
xticklabel style = {font=\huge},
legend style={legend cell align=left, align=left, draw=white!15!black, legend pos= south east, font=\huge}
]
\addplot [color=mycolor1, line width=3.0pt]
  table[row sep=crcr]{%
0.25	0.162500000000037\\
0.5	0.325000000000013\\
0.75	0.487500000000008\\
1	0.650000000000001\\
1.25	0.8125\\
1.5	0.975\\
1.75	1.1375\\
2	1.3\\
2.25	1.4625\\
2.5	1.62499999999999\\
2.75	1.78749999999999\\
3	1.94999999999999\\
3.25	2.11249999999998\\
3.5	2.27499999999998\\
3.75	2.43471721148449\\
4	2.58954513641665\\
4.25	2.73879125828602\\
4.5	2.87491769056624\\
4.75	3.00329269056623\\
5	3.12952054313678\\
5.25	3.24543700583397\\
5.5	3.34824061749863\\
5.75	3.4438840273829\\
6	3.5341698512827\\
6.25	3.61357594985504\\
6.5	3.68518257808418\\
6.75	3.7527437825692\\
7	3.80867732915153\\
7.25	3.85586462820088\\
7.5	3.90030273650351\\
7.75	3.93806669377351\\
8	3.97312037189684\\
};
\addlegendentry{AORR}

\addplot [color=mycolor2, line width=3.0pt]
  table[row sep=crcr]{%
0.25	0.162500000000011\\
0.5	0.325000000000005\\
0.75	0.487500000000006\\
1	0.650000000000004\\
1.25	0.812500000000006\\
1.5	0.975000000000007\\
1.75	1.13750000000001\\
2	1.29939649318202\\
2.25	1.46101457297716\\
2.5	1.61870027439815\\
2.75	1.78382151925962\\
3	1.91283526010047\\
3.25	2.06927623973817\\
3.5	2.22151736993071\\
3.75	2.33147407546978\\
4	2.44712231780923\\
4.25	2.59064600170036\\
4.5	2.7207263885312\\
4.75	2.75840079124335\\
5	2.78905167337261\\
5.25	2.89640487578948\\
5.5	2.95942685287015\\
5.75	3.05592155368579\\
6	3.11802228904224\\
6.25	3.25035838521653\\
6.5	3.30953059884078\\
6.75	3.35765017922505\\
7	3.5298125655573\\
7.25	3.56997592550104\\
7.5	3.61336740105415\\
7.75	3.63707971316172\\
8	3.66244853020805\\
};
\addlegendentry{SRR, Alg. \ref{alg:SubORR_alg}}

\addplot [color=mycolor3, line width=3.0pt]
  table[row sep=crcr]{%
0.25	0.102966993912121\\
0.5	0.227830533556745\\
0.75	0.36703737243072\\
1	0.501680187977861\\
1.25	0.691218315367897\\
1.5	0.843337881434739\\
1.75	1.00455957225558\\
2	1.16828326744382\\
2.25	1.32116186415741\\
2.5	1.51366194158149\\
2.75	1.64944350564708\\
3	1.7593574741322\\
3.25	1.93348395948459\\
3.5	2.10684081064188\\
3.75	2.26018197388352\\
4	2.3847957130463\\
4.25	2.54041715633442\\
4.5	2.67588459732123\\
4.75	2.72661509893642\\
5	2.78905167337261\\
5.25	2.89640487578948\\
5.5	2.95942685287015\\
5.75	3.05592155368579\\
6	3.11802228904224\\
6.25	3.25035838521653\\
6.5	3.30953059884078\\
6.75	3.35765017922505\\
7	3.5298125655573\\
7.25	3.56997592550104\\
7.5	3.61336740105415\\
7.75	3.63707971316172\\
8	3.66244853020805\\
};
\addlegendentry{Subset merging, Alg. \ref{alg:Wsubset algo}}


\end{axis}

\end{tikzpicture}
                \caption{ \centering \label{fig:SRRleakL} Min-lift leakage \hspace{5pt} \vspace{10pt}}
            \end{subfigure}
            ~\\
            \centering
            \begin{subfigure}{0.49\textwidth}
            \centering
                \scalebox{0.45}{\definecolor{mycolor1}{rgb}{0.25, 0.4, 0.96}
\definecolor{mycolor2}{rgb}{0.63, 0.36, 0.94}
\definecolor{mycolor3}{rgb}{0.89, 0.26, 0.2}
\definecolor{mycolor4}{rgb}{0.24, 0.82, 0.44}

\begin{tikzpicture}

\begin{axis}[%
width=4.521in,
height=3.8in,
at={(0.7in,0.519in)},
scale only axis,
xmin=0,
xmax=8,
xlabel style={font=\color{white!15!black}},
xlabel={\Huge $\eps$},
ymin=0,
ymax=1,
ylabel style={font=\color{white!15!black}},
ylabel={\huge $\log \max_{y}\Lambda(y)|$},
axis background/.style={fill=white},
xmajorgrids,
ymajorgrids,
yticklabel style = {font=\huge},
xticklabel style = {font=\huge},
legend style={legend cell align=left, align=left, draw=white!15!black, legend pos= south east, font=\huge}
]
\addplot [color=mycolor1, line width=3.0pt]
  table[row sep=crcr]{%
0.25	0.087500000000049\\
0.5	0.175000000000047\\
0.75	0.262500000000003\\
1	0.350000000000001\\
1.25	0.4375\\
1.5	0.524891550056635\\
1.75	0.611955819908446\\
2	0.697044763692553\\
2.25	0.776708106479431\\
2.5	0.842622881445377\\
2.75	0.888107628319311\\
3	0.917636003069557\\
3.25	0.934743213602704\\
3.5	0.943632289621734\\
3.75	0.9493259779194\\
4	0.952050595158794\\
4.25	0.953580230922403\\
4.5	0.954539225411719\\
4.75	0.955226285669905\\
5	0.955747743651603\\
5.25	0.95615211159145\\
5.5	0.956431152243903\\
5.75	0.956624462151239\\
6	0.9567591320998\\
6.25	0.956852393896241\\
6.5	0.956921397740421\\
6.75	0.956986024731988\\
7	0.957027586761456\\
7.25	0.957055984828602\\
7.5	0.957077972958827\\
7.75	0.957096656032657\\
8	0.957112531754589\\
};
\addlegendentry{AORR}

\addplot [color=mycolor2, line width=3.0pt]
  table[row sep=crcr]{%
0.25	0.0875000000000049\\
0.5	0.175000000000005\\
0.75	0.262500000000005\\
1	0.350000000000002\\
1.25	0.437500000000002\\
1.5	0.524905941238803\\
1.75	0.611397641897021\\
2	0.695807482561611\\
2.25	0.773894040927237\\
2.5	0.829138750536192\\
2.75	0.868493113761874\\
3	0.893824939795973\\
3.25	0.907126776532967\\
3.5	0.913850535876207\\
3.75	0.915933787948166\\
4	0.917909971590651\\
4.25	0.912406894137282\\
4.5	0.916012124747183\\
4.75	0.916329734898339\\
5	0.9192140585709\\
5.25	0.92646086780049\\
5.5	0.932370847482684\\
5.75	0.94141045098295\\
6	0.938620878330571\\
6.25	0.943723422675869\\
6.5	0.944662879659916\\
6.75	0.943485597426638\\
7	0.944957118454611\\
7.25	0.946077326459732\\
7.5	0.946077326459732\\
7.75	0.946077326459732\\
8	0.946077326459732\\
};
\addlegendentry{SRR, Alg. \ref{alg:SubORR_alg}}

\addplot [color=mycolor3, line width=3.0pt]
  table[row sep=crcr]{%
0.25	0.073170380848692\\
0.5	0.154892444266512\\
0.75	0.238472328333857\\
1	0.314414729128172\\
1.25	0.398507425078993\\
1.5	0.474905548768955\\
1.75	0.541559778474053\\
2	0.618380124590266\\
2.25	0.677596110445476\\
2.5	0.745134067208641\\
2.75	0.787715132559113\\
3	0.819201143188403\\
3.25	0.851159968523688\\
3.5	0.869820852155432\\
3.75	0.880653651417382\\
4	0.893258273698844\\
4.25	0.908817554145816\\
4.5	0.913787454441923\\
4.75	0.914459534621091\\
5	0.9192140585709\\
5.25	0.92646086780049\\
5.5	0.932370847482684\\
5.75	0.94141045098295\\
6	0.938620878330571\\
6.25	0.943723422675869\\
6.5	0.944662879659916\\
6.75	0.943485597426638\\
7	0.944957118454611\\
7.25	0.946077326459732\\
7.5	0.946077326459732\\
7.75	0.946077326459732\\
8	0.946077326459732\\
};
\addlegendentry{Subset merging, Alg. \ref{alg:Wsubset algo}}


\end{axis}

\end{tikzpicture}
                \caption{ \centering \label{fig:SRRleakU} Max-lift leakage}
            \end{subfigure}
            ~
            \centering
            \begin{subfigure}{0.49\textwidth}
            \centering
                \scalebox{0.45}{\definecolor{mycolor1}{rgb}{0.25, 0.4, 0.96}
\definecolor{mycolor2}{rgb}{0.63, 0.36, 0.94}
\definecolor{mycolor3}{rgb}{0.89, 0.26, 0.2}
\definecolor{mycolor4}{rgb}{0.24, 0.82, 0.44}

\begin{tikzpicture}

\begin{axis}[%
width=4.521in,
height=3.8in,
at={(0.7in,0.519in)},
scale only axis,
xmin=0,
xmax=8,
xlabel style={font=\color{white!15!black}},
xlabel={\Huge $\eps$},
ymin=0,
ymax=0.9,
ylabel style={font=\color{white!15!black}},
ylabel={\huge time},
axis background/.style={fill=white},
xmajorgrids,
ymajorgrids,
yticklabel style = {font=\huge},
xticklabel style = {font=\huge},
legend style={legend cell align=left, align=left, draw=white!15!black, legend pos= north east, font=\huge}
]
\addplot [color=mycolor1, line width=3.0pt]
  table[row sep=crcr]{%
0.25	0.896996666\\
0.5	0.629235106\\
0.75	0.193455181\\
1	0.089710446\\
1.25	0.067817675\\
1.5	0.060566753\\
1.75	0.056938433\\
2	0.052876566\\
2.25	0.051747014\\
2.5	0.04778002\\
2.75	0.049007593\\
3	0.047387334\\
3.25	0.046956072\\
3.5	0.047986258\\
3.75	0.045602632\\
4	0.045426196\\
4.25	0.04623482\\
4.5	0.044423005\\
4.75	0.045616451\\
5	0.0455143\\
5.25	0.043587506\\
5.5	0.043921725\\
5.75	0.043749329\\
6	0.041951884\\
6.25	0.043907239\\
6.5	0.043883084\\
6.75	0.044013504\\
7	0.044516646\\
7.25	0.04324758\\
7.5	0.043337581\\
7.75	0.042872931\\
8	0.042293096\\
};
\addlegendentry{AORR}

\addplot [color=mycolor2, line width=3.0pt]
  table[row sep=crcr]{%
0.25	0.104376537\\
0.5	0.117439956\\
0.75	0.131011831\\
1	0.129131191\\
1.25	0.117804496\\
1.5	0.105747474\\
1.75	0.092775913\\
2	0.073774884\\
2.25	0.065202882\\
2.5	0.048362995\\
2.75	0.038470048\\
3	0.032525698\\
3.25	0.0231211\\
3.5	0.017981445\\
3.75	0.011619503\\
4	0.009135415\\
4.25	0.006249278\\
4.5	0.00378412\\
4.75	0.001978283\\
5	0.000856727\\
5.25	0.000767094\\
5.5	0.000682084\\
5.75	0.000671262\\
6	0.000630855\\
6.25	0.000564739\\
6.5	0.000548451\\
6.75	0.000554496\\
7	0.000491832\\
7.25	0.000527766\\
7.5	0.000451279\\
7.75	0.000420443\\
8	0.000390076\\
};
\addlegendentry{SRR, Alg. \ref{alg:SubORR_alg}}

\addplot [color=mycolor3, line width=3.0pt]
table[row sep=crcr]{%
	0.25	0.004935978\\
0.5	0.004751829\\
0.75	0.005273733\\
1	0.004714022\\
1.25	0.004053355\\
1.5	0.003669776\\
1.75	0.003289998\\
2	0.002904846\\
2.25	0.002507314\\
2.5	0.002264415\\
2.75	0.002176971\\
3	0.001874606\\
3.25	0.001579613\\
3.5	0.001650929\\
3.75	0.00133566\\
4	0.001161271\\
4.25	0.000902516\\
4.5	0.000841592\\
4.75	0.000848988\\
5	0.000763846\\
5.25	0.000783728\\
5.5	0.000626307\\
5.75	0.000639369\\
6	0.00054107\\
6.25	0.000478734\\
6.5	0.000490566\\
6.75	0.00049892\\
7	0.000439102\\
7.25	0.000452643\\
7.5	0.00038184\\
7.75	0.000380123\\
8	0.000349455\\
};
\addlegendentry{Subset merging, Alg. \ref{alg:Wsubset algo}}

\end{axis}
\end{tikzpicture}
                \caption{ \centering \label{fig:ORRtime} Time Complexity}
            \end{subfigure}
            \caption{ \label{Fig:SubsetORR} Comparison of privacy-utility tradeoff and time complexity between AORR, SRR, and subset merging, where $|\X|=17$, $|\Sen|=5$, $\eps_{\text{LDP}}\in\{0.25,0.5,0.75,\cdots,8\}$, $\lambda=0.65$, $\epsl=\lambda\eps_{\text{LDP}}$, and $\epsu=(1-\lambda)\eps_{\text{LDP}}$.}
        \end{figure}
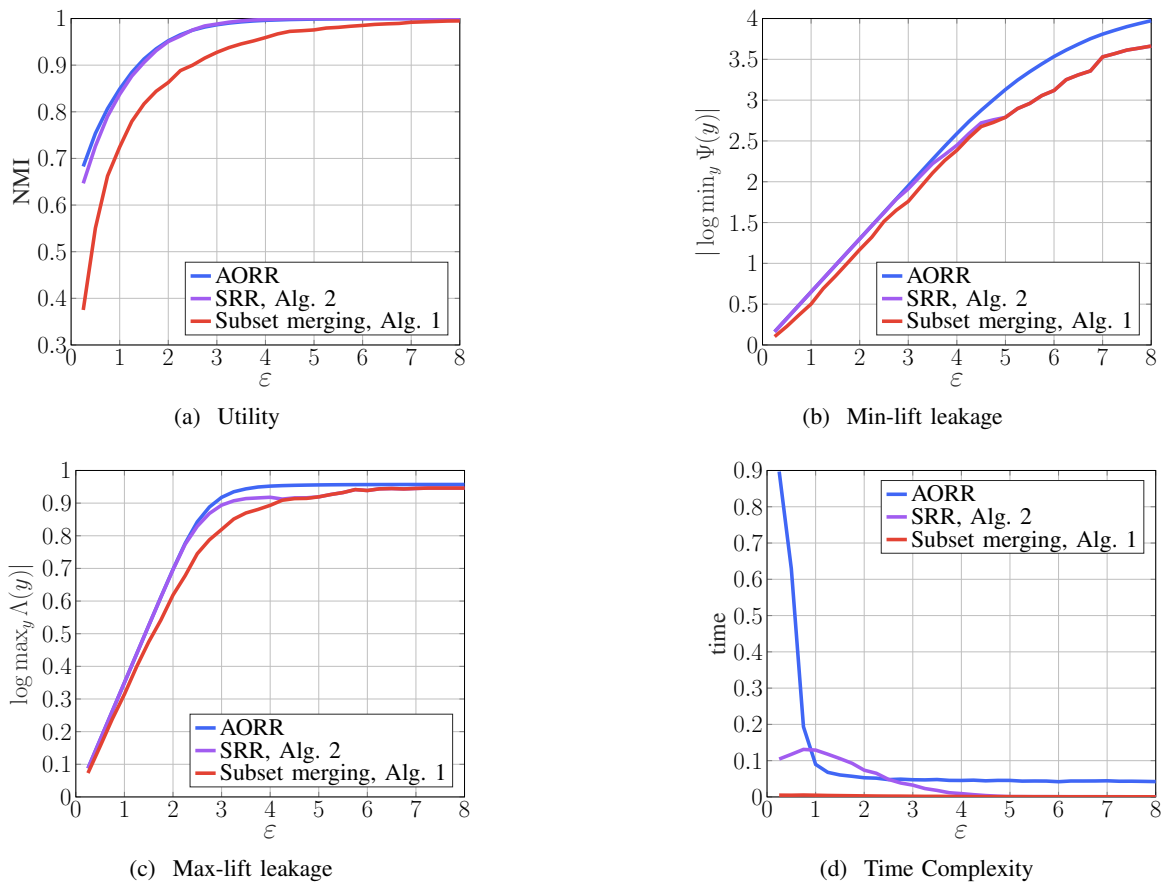
        
         \begin{figure}
            \begin{subfigure}{0.32\textwidth}
                \centering
                \scalebox{0.4}{\definecolor{mycolor1}{rgb}{0.25, 0.4, 0.96}
\definecolor{mycolor2}{rgb}{0.63, 0.36, 0.94}
\definecolor{mycolor3}{rgb}{0.89, 0.26, 0.2}
\definecolor{mycolor4}{rgb}{0.24, 0.82, 0.44}

\begin{tikzpicture}

\begin{axis}[%
width=4.521in,
height=3.8in,
scale only axis,
xmin=1,
xmax=8,
xlabel style={font=\color{white!15!black}},
xlabel={\Huge $\eps$},
ymin=0.8,
ymax=1,
ylabel style={font=\color{white!15!black}},
ylabel={\huge NMI},
axis background/.style={fill=white},
xmajorgrids,
ymajorgrids,
yticklabel style = {font=\huge},
xticklabel style = {font=\huge},
legend style={legend cell align=left, align=left, draw=white!15!black, legend pos= south east, font=\huge}
]
\addplot [color=mycolor2,  line width=3.0pt]
  table[row sep=crcr]{%
1	0.850455841691415\\
1.25	0.875518003628232\\
1.5	0.889766060196548\\
1.75	0.901267169317267\\
2	0.912568294422881\\
2.25	0.923598293389988\\
2.5	0.934047714749097\\
2.75	0.943843951719623\\
3	0.952569093255015\\
3.25	0.960301202897293\\
3.5	0.966828043538021\\
3.75	0.972509238045292\\
4	0.97740506899463\\
4.25	0.981520767796702\\
4.5	0.985020727992284\\
4.75	0.987846343173446\\
5	0.990256076235523\\
5.25	0.992226274751514\\
5.5	0.993785928609828\\
5.75	0.995069160174845\\
6	0.996080531018379\\
6.25	0.99695971222131\\
6.5	0.997648820730683\\
6.75	0.998164400820407\\
7	0.998589050782426\\
7.25	0.998919970498864\\
7.75	0.999391421747349\\
8	0.999544124601625\\
};
\addlegendentry{SRR, Alg. \ref{alg:SubORR_alg}}

\addplot [color=mycolor3, line width=3.0pt]
  table[row sep=crcr]{%
1	0.806421822259901\\
1.25	0.84067423044495\\
1.5	0.858330096358376\\
1.75	0.865949830264077\\
2.25	0.875527158369925\\
2.5	0.880233934761566\\
2.75	0.885145822030296\\
3	0.890983196331323\\
3.25	0.89736324998494\\
3.5	0.903491163249701\\
3.75	0.910586440501577\\
4	0.917255556225866\\
4.25	0.923558115562271\\
4.5	0.929749672820401\\
5	0.941605398677771\\
5.25	0.946889096614575\\
5.5	0.95186554931011\\
5.75	0.95665147560924\\
6	0.960969367605129\\
6.25	0.964730254904506\\
6.5	0.968595873973367\\
7	0.974737914369134\\
7.5	0.979818507470029\\
7.75	0.981956043856014\\
8	0.983888711748062\\
};
\addlegendentry{Subset merging, Alg. \ref{alg:Wsubset algo}}

\end{axis}

\end{tikzpicture}
                \caption{ \centering \label{fig:SRRlargeUt} Utility}
            \end{subfigure}%
            ~
            \begin{subfigure}{0.32\textwidth}
                \scalebox{0.4}{ \definecolor{mycolor1}{rgb}{0.25, 0.4, 0.96}
\definecolor{mycolor2}{rgb}{0.63, 0.36, 0.94}
\definecolor{mycolor3}{rgb}{0.89, 0.26, 0.2}
\definecolor{mycolor4}{rgb}{0.24, 0.82, 0.44}

\begin{tikzpicture}

\begin{axis}[%
width=4.521in,
height=3.8in,
scale only axis,
xmin=1,
xmax=8,
xlabel style={font=\color{white!15!black}},
xlabel={\Huge $\eps$},
ymin=0.5,
ymax=4,
ylabel style={font=\color{white!15!black}},
ylabel={\huge $|\log \min_{y}\Psi(y)|$},
axis background/.style={fill=white},
xmajorgrids,
ymajorgrids,
yticklabel style = {font=\huge},
xticklabel style = {font=\huge},
legend style={legend cell align=left, align=left, draw=white!15!black, legend pos= south east, font=\huge}
]
\addplot [color=mycolor2, line width=3.0pt]
  table[row sep=crcr]{%
1	0.499999999549445\\
1.25	0.625000000000032\\
1.5	0.749999999962744\\
1.75	0.875000000000035\\
2	1.00000000000003\\
2.25	1.12500000000007\\
2.5	1.25000000000004\\
2.75	1.37500000000004\\
3	1.50000000000005\\
3.25	1.62500000000006\\
3.5	1.75000000651003\\
3.75	1.87500000000008\\
4	2.00000000000008\\
4.25	2.1250000000001\\
4.5	2.25000000000011\\
4.75	2.37500000000013\\
5	2.50000000000015\\
5.25	2.62500000000015\\
5.5	2.75000000000018\\
5.75	2.87500000000019\\
6	3.00000000000022\\
6.25	3.12500000000027\\
6.5	3.25000000000028\\
6.75	3.37500000000031\\
7	3.50000000000033\\
7.25	3.62500000000039\\
7.5	3.75000000000044\\
7.75	3.87500000000052\\
8	4.00000000000055\\
};
\addlegendentry{SRR, Alg. \ref{alg:SubORR_alg}}

\addplot [color=mycolor3, line width=3.0pt]
  table[row sep=crcr]{%
1	0.49495384545389\\
1.25	0.619363917882579\\
1.5	0.738432465727112\\
1.75	0.856527595090877\\
2	0.976032978452683\\
2.25	1.10429638608885\\
2.5	1.23338178874199\\
2.75	1.35980646012204\\
3	1.4858538528397\\
3.25	1.6108335590179\\
3.5	1.73707285009533\\
3.75	1.86452559094269\\
4	1.98893373990014\\
4.25	2.11264203050312\\
4.5	2.23507509717078\\
4.75	2.36065605034735\\
5	2.4837159646212\\
5.25	2.60986413975428\\
5.5	2.73187826819187\\
5.75	2.85948091070676\\
6	2.9796278188317\\
6.25	3.10524540266104\\
6.5	3.22935402215338\\
6.75	3.34818339962099\\
7	3.4685550419122\\
7.25	3.60063510609819\\
7.5	3.72027125122822\\
7.75	3.84178957238658\\
8	3.95310057183913\\
};
\addlegendentry{Subset merging, Alg. \ref{alg:Wsubset algo}}

\end{axis}
\end{tikzpicture}
                \caption{\centering \label{fig:SRRlargeL}Min-lift leakage}
            \end{subfigure}
            ~
            \begin{subfigure}{0.32\textwidth}
                \scalebox{0.4}{\definecolor{mycolor1}{rgb}{0.25, 0.4, 0.96}
\definecolor{mycolor2}{rgb}{0.63, 0.36, 0.94}
\definecolor{mycolor3}{rgb}{0.89, 0.26, 0.2}
\definecolor{mycolor4}{rgb}{0.24, 0.82, 0.44}
\definecolor{mycolor5}{rgb}{0.0, 1.0, 1.0}
\begin{tikzpicture}

\begin{axis}[%
width=4.521in,
height=3.8in,
scale only axis,
xmin=1,
xmax=8,
xlabel style={font=\color{white!15!black}},
xlabel={\Huge $\eps$},
ymin=0.5,
ymax=1.1,
ylabel style={font=\color{white!15!black}},
ylabel={\huge $\log (\max_{y}\Lambda(y))$},
axis background/.style={fill=white},
xmajorgrids,
ymajorgrids,
yticklabel style = {font=\huge},
xticklabel style = {font=\huge},
legend style={legend cell align=left, align=left, draw=white!15!black, legend pos= south east,font=\huge}
]
\addplot [color=mycolor2, line width=3.0pt]
  table[row sep=crcr]{%
1	0.5000000106006\\
1.25	0.625000000000004\\
1.5	0.743608687302251\\
1.75	0.822229707544961\\
2	0.866438456098766\\
2.25	0.907153245148754\\
2.75	0.96543307936656\\
3	0.984400267982556\\
3.25	1.00440894790473\\
3.5	1.01952893368105\\
3.75	1.0326224663329\\
4	1.0435423396583\\
4.25	1.05402418976793\\
4.5	1.06170164867442\\
4.75	1.06795431669209\\
5	1.07312664256011\\
6	1.08612375233491\\
6.5	1.09057412555199\\
6.75	1.09173126260842\\
7.25	1.09349323743867\\
8	1.09527412214498\\
};
\addlegendentry{SRR, Alg. \ref{alg:SubORR_alg}}

\addplot [color=mycolor3, line width=3.0pt]
  table[row sep=crcr]{%
1	0.487514932069866\\
1.25	0.589846747229503\\
1.5	0.641091759526111\\
1.75	0.666508774589841\\
2	0.701612337459414\\
2.25	0.750127566553301\\
2.75	0.827284955261362\\
3	0.864937210540372\\
3.25	0.88411395800162\\
3.5	0.908644945339987\\
3.75	0.932486011029567\\
4	0.941822073478072\\
4.25	0.964303623915528\\
4.5	0.983656639015912\\
4.75	0.994863034004466\\
5	1.00368388621285\\
5.25	1.0162322888367\\
5.5	1.01813145609581\\
5.75	1.02576681316461\\
6	1.02805366320394\\
6.25	1.0405309577853\\
6.75	1.04745700444504\\
7	1.05302149882745\\
7.25	1.05706511852123\\
7.5	1.06574159726584\\
7.75	1.06890251168784\\
8	1.06892784646823\\
};
\addlegendentry{Subset merging, Alg. \ref{alg:Wsubset algo}}

\end{axis}
\end{tikzpicture}
                \caption{ \centering \label{fig:SRRlargeU}Max-lift leakage \vspace{5pt}}
            \end{subfigure}
            \caption{ \label{Fig:SRRlarge} Comparison of privacy-utility tradeoff and time complexity between SRR and subset merging, where $|\X|=200$, $|\Sen|=15$, $\eps_{\text{LDP}}\in\{1,1.25,1.5,1.75,\cdots,8\}$, and $\epsl=\epsu=\frac{\eps_{\text{LDP}}}{2}$.}
        \end{figure}


\section{Lift-Based and Lift-Inverse Measures}\label{sec:lift-basedmeasures}

    In this section, we consider some recently proposed privacy measures that quantify the divergence between the posterior and prior belief on sensitive features, including $\ell_{1}$-norm \cite{2022DataDsclsurell1Priv}, strong $\chi^2$-privacy criterion \cite{2021StrongChi2}, and $\alpha$-lift \cite{2021Alpha-LiftWatchdog}. 
    They have been proposed as a stronger version of their corresponding average measures, which are the total variation distance \cite{2019PUTTotalDistnce}, $\chi^2$-divergence \cite{2019PrivEstimGuarant}, and Sibson MI \cite{2020MaxL}, respectively.
    We call them \textit{lift-based} measures and define them in the following.
    
    \begin{Definition}\label{def:lift-based}
        For each $y \in \Y$, lift-based privacy measures are defined as follows:
        \begin{itemize}
            \item The $\ell_{1}$-lift is given by
            \begin{equation}\label{eq:total lift}
                \Lambda_{\ell_{1}}(y)\triangleq\sum_{s \in \Sen}\PS(s) |\lifty-1|,
            \end{equation}
            then the total variation distance will be $$T(S;Y)=\frac{1}{2}\E_{Y}[\Lambda_{\ell_{1}}(Y)].$$
            
            \item The $\chi^2$-lift is given by
            \begin{equation}\label{eq:chi2 lift}
                \Lambda_{\chi^2}(y) \triangleq \sum_{s \in \Sen}\PS(s)\big(\lifty-1\big)^2,
            \end{equation}
            then $\chi^2$-divergence will be
            \begin{equation*}
                \chi^2(S;Y)=\E_{Y}[\Lambda_{\chi^2}(Y)].
            \end{equation*}
            
            \item The $\alpha$-lift is given by
            \begin{equation}\label{eq:sibson lift}
                \Lambda_{\alpha}^{S}(y) \triangleq\left(\sum_{s \in\Sen}\PS(s)\lifty^{\alpha}\right)^{1/\alpha},
            \end{equation}
            then Sibson MI will be
            \begin{equation*}
                I_{\alpha}^{S}(S;Y)=\frac{\alpha}{\alpha-1} \log \E_{Y}[\Lambda_{\alpha}^{S}(Y)].
            \end{equation*}
        \end{itemize}
    \end{Definition} 
    Here, we reveal their relationship with ALIP.
    
    \begin{Proposition} \label{prop:ALIP-lift-base} If $(\epsl,\epsu)$-ALIP is satisfied, then,
        \begin{enumerate}
            \item \label{subprop:ALIP total-lift}$\displaystyle \max_{y \in \Y}\Lambda_{\ell_{1}}(y) \leq \e^{\epsu}-1$,
    
            \item \label{subprop:ALIP chi2-lift}$\displaystyle\max_{y \in \Y}\Lambda_{\chi^2}(y) \leq (\e^{\epsu}-1)^2$,
    
            \item\label{subprop:ALIP sibson-lift} $\displaystyle \max_{y \in \Y}\Lambda_{\alpha}^{S}(y) \leq \e^{\epsu}$.
        \end{enumerate}
    \end{Proposition}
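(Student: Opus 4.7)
The three bounds are proved item by item. Throughout, ALIP supplies the pointwise inequalities $\e^{-\epsl} \leq l(s,y) \leq \e^{\epsu}$ for every $s \in \Sen$ and $y \in \Y$, and the normalization identity $\sum_{s \in \Sen} P_{S}(s)\, l(s,y) = 1$ is available as a secondary tool.

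Item~\ref{subprop:ALIP sibson-lift} is the simplest and the template for the style of argument. Substituting $l(s,y)^{\alpha} \leq \e^{\alpha\epsu}$ into the definition of $\Lambda_{\alpha}^{S}(y)$ and using $\sum_{s} P_{S}(s) = 1$ yields
\[
\Lambda_{\alpha}^{S}(y) = \left( \sum_{s \in \Sen} P_{S}(s)\, l(s,y)^{\alpha} \right)^{1/\alpha} \leq \left( \e^{\alpha\epsu} \right)^{1/\alpha} = \e^{\epsu},
\]
which is a direct pointwise substitution. For items~\ref{subprop:ALIP total-lift} and~\ref{subprop:ALIP chi2-lift} the plan is to first establish the two-sided pointwise bound $|l(s,y) - 1| \leq \e^{\epsu} - 1$, then take the $P_{S}$-weighted average (resp. the $P_{S}$-weighted average of its square). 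On $\{s : l(s,y) \geq 1\}$ the bound follows directly from $l(s,y) \leq \e^{\epsu}$. On $\{s : l(s,y) < 1\}$ the min-lift bound gives $|l(s,y) - 1| = 1 - l(s,y) \leq 1 - \e^{-\epsl}$, and the inequality $1 - \e^{-\epsl} \leq \e^{\epsu} - 1$ is equivalent to $\e^{\epsu} + \e^{-\epsl} \geq 2$, which follows from the AM--GM-type identity $\e^{x} + \e^{-x} \geq 2$ applied to the dominant of the two budgets. Averaging the pointwise bound yields item~\ref{subprop:ALIP total-lift}, and squaring it before averaging yields item~\ref{subprop:ALIP chi2-lift}.

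The main obstacle is the one-sided dependence on $\epsu$: the stated bounds must absorb the min-lift budget. A cleaner alternative that avoids any side condition on $\epsl$ is to exploit the zero-mean identity $\sum_{s} P_{S}(s)(l(s,y) - 1) = 0$, which implies $\Lambda_{\ell_{1}}(y) = 2 \sum_{s : l(s,y) > 1} P_{S}(s)(l(s,y) - 1)$; applying $l(s,y) - 1 \leq \e^{\epsu} - 1$ only on the positive-part set then discharges item~\ref{subprop:ALIP total-lift} using $\epsu$ alone. An analogous split-and-bound argument works for $\Lambda_{\chi^{2}}(y)$: on $\{s : l(s,y) > 1\}$ use $(l(s,y) - 1)^{2} \leq (\e^{\epsu} - 1)(l(s,y) - 1)$, and handle the complementary side by pairing it with the positive side via the same zero-mean identity, so that only $\e^{\epsu} - 1$ appears in the final bound.
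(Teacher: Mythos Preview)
Your argument for item~3 is correct and coincides with the paper's. For items~1 and~2, both of your routes contain a genuine gap. In the first route, the step $1 - \e^{-\epsl} \leq \e^{\epsu} - 1$ is equivalent to $\e^{\epsu} + \e^{-\epsl} \geq 2$, and this is \emph{not} an instance of $\e^{x} + \e^{-x} \geq 2$: the two exponents are different, and in the asymmetric regime $\epsl > \epsu$ that the paper itself promotes the sum can drop well below $2$ (for example $\epsu = 0.1$, $\epsl = 3$ gives roughly $1.15$). So the pointwise bound $|l(s,y) - 1| \leq \e^{\epsu} - 1$ simply fails on $\{s: l(s,y) < 1\}$. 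In the second route, the zero-mean identity correctly gives $\Lambda_{\ell_1}(y) = 2\sum_{s:\, l>1} \PS(s)\bigl(l(s,y)-1\bigr)$, but bounding each summand by $\e^{\epsu} - 1$ yields only $2(\e^{\epsu} - 1)\sum_{s:\, l>1} \PS(s)$, and that probability mass can be arbitrarily close to $1$; you are off by a factor of two, and the $\chi^{2}$ sketch inherits the same slack.

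It is worth noting that the paper's own proof of items~1 and~2 commits essentially the same error: it replaces $|l(s,y) - 1|$ by $|\Lambda(y) - 1| = \Lambda(y) - 1$ pointwise, which would require $l(s,y) \geq 2 - \Lambda(y)$ and is not guaranteed by ALIP. In fact, without a side condition such as $\epsl \leq \epsu$ the stated bounds are false. Take $|\Sen| = |\X| = 2$, $Y = X$, $\PS = (0.4, 0.6)$, $\PX = (0.4, 0.6)$, with lifts $l(\cdot, x_{1}) = (0.25,\, 1.5)$ and $l(\cdot, x_{2}) = (1.5,\, 2/3)$. Then ALIP holds with $\e^{\epsu} = 1.5$ and $\e^{-\epsl} = 0.25$, yet $\Lambda_{\ell_1}(x_1) = 0.6 > 0.5 = \e^{\epsu} - 1$ and $\Lambda_{\chi^2}(x_1) = 0.375 > 0.25 = (\e^{\epsu} - 1)^{2}$. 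So neither your proof nor the paper's can be repaired for the bounds as written; what your zero-mean argument does deliver cleanly is the weaker (and correct) bound $\Lambda_{\ell_1}(y) \leq 2(\e^{\epsu} - 1)$.
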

    
    The proof is given in Appendix \ref{apx:proof of ALIP-lift-base}.
    \begin{Proposition}\label{prop:lif-base-ALIP}
        Let $\displaystyle {s_y}=\argmax_{s\in \Sen}[\lifty]$ and $\bar{y}=\argmax_{y \in \Y}\Lambda(y)$, then,
        \begin{enumerate}
            \item \label{prop:total ALIP} If $\displaystyle \max_{y \in \Y}\Lambda_{\ell_{1}}(y) \leq {\eps} \Rightarrow \max_{y \in \Y}\Lambda(y) \leq {{\eps}}/{\PS(s_{\bar{y}})}+1$,
    
            \item \label{prop:chi2 ALIP} If $\displaystyle \max_{y \in \Y}\Lambda_{\chi^2}(y) \leq {\eps} \Rightarrow \max_{y \in \Y}\Lambda(y) \leq {\sqrt{{\eps}/ \PS(s_{\bar{y}})}}+1$,
    
            \item \label{prop:sibson ALIP} If $\displaystyle \max_{y \in \Y}\Lambda_{\alpha}^{S}(y) \leq {\eps} \Rightarrow \max_{y \in \Y}\Lambda(y) \leq {\eps}/{ \PS(s_{\bar{y}})^{\frac{1}{\alpha}} }$.
        \end{enumerate}
    \end{Proposition}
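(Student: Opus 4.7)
The plan is to exploit the nonnegativity of every summand in the defining expressions of $\Lambda_{\ell_1}(y)$, $\Lambda_{\chi^2}(y)$, and $\Lambda_{\alpha}^{S}(y)^\alpha$: since each term is $\geq 0$, the full sum is bounded below by any single summand. Fix $\bar y = \argmax_{y \in \Y}\Lambda(y)$ and $s_{\bar y} = \argmax_{s \in \Sen} l(s,\bar y)$ as in the statement, so that $l(s_{\bar y},\bar y) = \Lambda(\bar y)$ by construction. I will instantiate this single-term lower bound at $(s,y) = (s_{\bar y}, \bar y)$ in each of the three cases and then isolate $\Lambda(\bar y)$.

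A short preliminary observation is that $\Lambda(\bar y) \geq 1$: for every $y$ we have $\sum_{s} \PS(s)\, l(s,y) = \sum_{s} \PSgY(s|y) = 1$, so the maximum over $s$ at $y=\bar y$ must be at least one. (In the opposite case $\Lambda(\bar y) < 1$, each of the three asserted bounds holds trivially because the right-hand sides are all $\geq 1$ when $\eps \geq 0$.) So I may assume $\Lambda(\bar y) \geq 1$, which lets me drop absolute values and choose positive roots safely.

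For part \ref{prop:total ALIP}, retaining only the $s=s_{\bar y}$ summand of $\Lambda_{\ell_{1}}(\bar y)$ yields $\PS(s_{\bar y}) |\Lambda(\bar y) - 1| \leq \Lambda_{\ell_{1}}(\bar y) \leq \eps$, and removing the absolute value via $\Lambda(\bar y) \geq 1$ rearranges to the claimed bound. Part \ref{prop:chi2 ALIP} is identical after replacing $|\cdot|$ by $(\cdot)^2$: one gets $\PS(s_{\bar y})(\Lambda(\bar y)-1)^2 \leq \eps$ and takes the positive square root. For part \ref{prop:sibson ALIP}, rewrite the hypothesis in the form $\sum_{s} \PS(s)\, l(s,\bar y)^\alpha = \Lambda_{\alpha}^{S}(\bar y)^\alpha \leq \eps^\alpha$, keep only the $s=s_{\bar y}$ summand to obtain $\PS(s_{\bar y})\, \Lambda(\bar y)^\alpha \leq \eps^\alpha$, and raise both sides to the power $1/\alpha$.

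The main obstacle is minor and essentially notational: ensuring that the absolute value is removed in the correct direction and that the positive $\alpha$-th root is taken, which is exactly what the $\Lambda(\bar y) \geq 1$ observation settles. No convexity or optimization argument is needed; all three conclusions reduce to the elementary single-summand lower bound on a nonnegative sum.
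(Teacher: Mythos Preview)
Your proposal is correct and follows essentially the same approach as the paper's proof: keep only the $s=s_y$ summand in each nonnegative sum and rearrange. Your explicit justification that $\Lambda(\bar y)\geq 1$ (so that $|\Lambda(\bar y)-1|=\Lambda(\bar y)-1$ and the positive root is the relevant one) is a small rigor improvement over the paper, which drops the absolute value without comment.
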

    \begin{proof}
        The proof is given in Appendix \ref{apx:proof of lif-base-ALIP}.
    \end{proof}
    
    Proposition \ref{prop:ALIP-lift-base} shows that lift-based measures, similar to their corresponding average leakages, are upper bounded by the max-lift bound.
    Proposition \ref{prop:lif-base-ALIP} indicates that if we bound lift-based measures, they can only restrict the max-lift leakage.
    Accordingly, if one only applies a lift-based measure to protect privacy, like previous works \cite{2022DataDsclsurell1Priv,2021StrongChi2,2021Alpha-LiftWatchdog}, it  may cause significant leakage on the min-lift.
    Therefore, in the following, we propose lift-inverse measures to bound the min-lift leakage.    
    
    
    \subsection{Lift-Inverse Measures}\label{subsec:min-lift relaxation}
    
        In Propositions \ref{prop:ALIP-lift-base} and \ref{prop:lif-base-ALIP}, we have shown that lift-based measures only bound the max-lift.
        In this subsection, we present lift-inverse measures to restrict the min-lift leakage, $\min_{y \in \Y}\Psi(y)=\min_{y \in \Y}[\min_{s \in \Sen}\lifty$].

        \begin{Definition}\label{def:lift-inverse}
            For lift-based measures in \eqref{eq:total lift} to \eqref{eq:sibson lift}, we replace $\lifty$ with $\frac{1}{\lifty}$ and call the resulting quantities lift-inverse measures.
            \begin{itemize}
                \item The $\ell_{1}$-lift-inverse is given by
                \begin{equation}\label{eq:total inv lift}
                    \Psi_{\ell_{1}}(y)=\sum_{s \in 	\Sen}\PS(s)\left|\frac{1}{\ell(s,y)}-1\right|.
                \end{equation}
                \item
                The $\chi^2$-lift-inverse is given by
                \begin{equation}\label{eq:chi2 inv lift}
                    \Psi_{\chi^2}(y) \triangleq \sum_{s \in \Sen}\PS(s)\left(\frac{1}{\ell(s,y)}-1\right)^2.
                \end{equation}
                \item
                The $\alpha$-lift-inverse is given by
                \begin{equation}\label{eq:sibson inv lift}
                    \Psi_{\alpha}^{S}(y) \triangleq\left(\sum_{s\in\Sen}\PS(s)\left(\frac{1}{\ell(s,y)}\right)^{\alpha}\right)^{1/\alpha}.
                \end{equation}   
            \end{itemize}
        \end{Definition}

        In the following propositions, we show the relationship between $(\epsl,\epsu)$-ALIP.

        \begin{Proposition}\label{prop:ALIP min properties}
            If $(\epsl,\epsu)$-ALIP is achieved, we have
            \begin{enumerate}
                \item $\displaystyle \max_{y \in \Y}\Psi_{\ell_{1}}(y) \leq \e^{\epsl}-1,$\vspace{2pt}
    
                \item $\displaystyle \max_{y \in \Y}\Psi_{\chi^2}(y) \leq \left(\e^{\epsl}-1\right)^2,$\vspace{2pt}
    
                \item $\displaystyle \max_{y \in \Y}\Psi_{\alpha}^{S}(y) \leq \e^{\epsl}.$
            \end{enumerate}
        \end{Proposition}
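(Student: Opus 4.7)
The plan is to mirror the proof of Proposition~\ref{prop:ALIP-lift-base}, exploiting the elementary observation that the ALIP bounds $e^{-\epsl} \leq l(s,y) \leq e^{\epsu}$ are equivalent, after inversion, to $e^{-\epsu} \leq 1/l(s,y) \leq e^{\epsl}$. Thus $1/l(s,y)$ plays exactly the role of $l(s,y)$ with the pair $(\epsl,\epsu)$ swapped, and each of the three lift-inverse bounds follows by the same pointwise-plus-expectation strategy that controls the lift-based measures.

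I would dispatch item~3 first, since it is the most direct. Raising the pointwise inequality $1/l(s,y) \leq e^{\epsl}$ to the power $\alpha$ yields $\bigl(1/l(s,y)\bigr)^{\alpha} \leq e^{\alpha\epsl}$; averaging against $\PS(s)$ (whose weights sum to $1$) and taking the $\alpha$-th root gives $\Psi_{\alpha}^{S}(y) \leq e^{\epsl}$ for every $y \in \Y$, whence $\max_{y}\Psi_{\alpha}^{S}(y) \leq e^{\epsl}$.

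For items~1 and~2, I would bound the integrands $\bigl|1/l(s,y)-1\bigr|$ and $\bigl(1/l(s,y)-1\bigr)^{2}$ pointwise on the admissible interval $[e^{-\epsu},\,e^{\epsl}]$ for $1/l(s,y)$, then average against $\PS(s)$ and take a maximum over $y$. The main obstacle is to justify that the worst deviation from $1$ on this interval is attained at the upper endpoint $e^{\epsl}$, which reduces to the inequality $e^{\epsl}-1 \geq 1-e^{-\epsu}$, i.e.\ $e^{\epsl}+e^{-\epsu} \geq 2$. This is immediate by AM--GM in the lift-asymmetry regime $\epsl \geq \epsu$ that ALIP is designed to exploit; outside this regime one instead splits the sum over $\Sen$ at $1/l(s,y)=1$ and uses the identity $\sum_{s}\PS(s)\,l(s,y) = 1$ to pair the positive and negative deviations, in direct analogy with the argument used for the lift-based measures in Proposition~\ref{prop:ALIP-lift-base}.
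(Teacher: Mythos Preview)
Your treatment of item~3 is correct and coincides with the paper's: bound $1/l(s,y)$ uniformly by $\e^{\epsl}$ (the paper routes this through $1/\Psi(y)$, but the content is the same), average, and take the $\alpha$-th root.

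For items~1 and~2 the paper's argument is exactly the pointwise-then-average step you describe: it asserts $\left|1/l(s,y)-1\right|\le\left|1/\Psi(y)-1\right|=1/\Psi(y)-1$ termwise and does not separate out the case $1/l(s,y)<1$. You are more careful than the paper in flagging that step as an obstacle, and your AM--GM inequality disposes of it cleanly in the regime $\epsl\ge\epsu$ (the one the paper is really targeting). The genuine gap is your fallback for $\epsl<\epsu$: the identity $\sum_{s\in\Sen}\PS(s)\,l(s,y)=1$ has no counterpart for the inverse lift---in general $\sum_{s\in\Sen}\PS(s)/l(s,y)\neq 1$---so the positive and negative parts of $1/l(s,y)-1$ do not pair off, and the splitting argument you invoke cannot be transported from the lift-based setting. (Incidentally, the paper's proof of Proposition~\ref{prop:ALIP-lift-base} does not use such a splitting either; it makes the same unqualified pointwise replacement $|l(s,y)-1|\le|\Lambda(y)-1|$.) That branch of your plan therefore does not close.
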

        \begin{proof}
                The proof is provided in Appendix \ref{apx:proof of ALIP-lift-inverse}.
        \end{proof}
  
        \begin{Proposition}\label{prop:Inv ALIP min properties}
            Let $\displaystyle s_{y}=\argmin_{s}\lifty$ and  $\displaystyle \underline{y}=\argmin_{y}[\Psi(y)]$, then,
            \begin{enumerate}
                \item If $\displaystyle \max_{y \in \Y}\Psi_{\ell_{1}}(y) \leq {\eps} \Rightarrow \min_{y \in \Y}\Psi(y) \geq \frac{ \PS(s_{\underline{y}}) }{\eps+{\PS(s_{\underline{y}})}},$
            
                \item If $\displaystyle \max_{y \in \Y}\Psi_{\chi^2}(y) \leq {\eps} \Rightarrow \min_{y \in \Y}\Psi(y) \geq \frac{  \sqrt{ \PS(s_{\underline{y}}) }  }{  \sqrt{\eps} + \sqrt{\PS(s_{\underline{y}})}},$
            
                \item If $\displaystyle \max_{y \in \Y}\Psi_{\alpha}^{S}(y) \leq {\eps} \Rightarrow \min_{y \in \Y}\Psi(y) \geq
                {\eps ^{-1}}\PS(s_{\underline{y}})^{\frac{1}{\alpha}}.$
            \end{enumerate}
        \end{Proposition}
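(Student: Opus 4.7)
The plan is to mirror the argument used to establish Proposition \ref{prop:lif-base-ALIP}, exploiting the same ``keep-one-term'' trick but applied to the reciprocal $1/l(s,y)$ instead of to $l(s,y)$. Concretely, for each of the three lift-inverse measures, I will evaluate the bound at $y = \underline{y}$ and lower-bound the full sum over $s$ by retaining only the $s = s_{\underline{y}}$ term (all other summands are nonnegative). This yields an inequality involving just $P_S(s_{\underline{y}})$ and $1/\Psi(\underline{y})$, which can then be solved for $\Psi(\underline{y})$ and matched against the advertised lower bound.

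A preliminary observation I will need is that $\Psi(y) \le 1$ for every $y$. This holds because $\sum_{s} P_S(s)\, l(s,y) = \sum_{s} P_{S|Y}(s|y) = 1$, so the minimum over $s$ of $l(s,y)$ cannot exceed $1$; consequently $1/\Psi(\underline{y}) - 1 \ge 0$ and the absolute value in \eqref{eq:total inv lift} can be dropped cleanly when I isolate the $s_{\underline{y}}$ term. For part (1), this reduces the hypothesis $\Psi_{\ell_1}(\underline{y}) \le \eps$ to $P_S(s_{\underline{y}})\bigl(1/\Psi(\underline{y}) - 1\bigr) \le \eps$, from which rearrangement yields $\Psi(\underline{y}) \ge P_S(s_{\underline{y}})/(\eps + P_S(s_{\underline{y}}))$.

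Parts (2) and (3) follow by the exact same template with two cosmetic changes: for (2), keeping only the $s_{\underline{y}}$ term in \eqref{eq:chi2 inv lift} gives $P_S(s_{\underline{y}})\bigl(1/\Psi(\underline{y})-1\bigr)^2 \le \eps$, and taking the square root (again using $1/\Psi(\underline{y}) \ge 1$) before inverting yields the stated bound; for (3), retaining the $s_{\underline{y}}$ term in the $\alpha$-power sum of \eqref{eq:sibson inv lift} gives $P_S(s_{\underline{y}}) \Psi(\underline{y})^{-\alpha} \le \eps^{\alpha}$, and taking the $\alpha$-th root delivers $\Psi(\underline{y}) \ge \eps^{-1} P_S(s_{\underline{y}})^{1/\alpha}$.

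There is no real obstacle here; the steps are essentially symbolic inversions of the corresponding parts of Proposition \ref{prop:lif-base-ALIP}. The only subtle points worth flagging in the write-up are (i) justifying the removal of the absolute value in part (1) via $\Psi(\underline{y}) \le 1$, and (ii) noting that the inequality $P_S(s_{\underline{y}}) \Psi(\underline{y})^{-\alpha} \le (\Psi_\alpha^S(\underline{y}))^\alpha$ used in part (3) is valid for all $\alpha > 0$ because the remaining summands are nonnegative. Everything else is algebraic rearrangement.
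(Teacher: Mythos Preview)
Your proposal is correct and follows essentially the same approach as the paper's proof: keep only the $s_{y}$ term in each sum, use $\Psi(y)\le 1$ to drop the absolute value (the paper does this step implicitly), and rearrange algebraically; if anything, your justification for removing the absolute value is more explicit than the paper's.
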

        \begin{proof}
            The proof is provided in Appendix \ref{apx:proof of lift-inverse-ALIP}.
        \end{proof}
        Propositions \ref{prop:ALIP min properties} and \ref{prop:Inv ALIP min properties} demonstrate that the   aforementioned lift-inverse measures are associated with the min-lift, and bounding them can restrict the min-lift leakage. 
        Since lift-based and lift-inverse measures quantify privacy leakage by a function of lift averaged over sensitive features, they can be regarded as more relaxed measures than the min and max lifts.       
    \subsection{PUT and Numerical Results}
    
        Optimal randomization for $\ell_{1}$-lift and $\chi^2$-lift privacy which maximizes MI as the utility measure have been proposed in \cite{2022DataDsclsurell1Priv} and \cite{2021StrongChi2}, respectively.
        Note that ORR is not applicable to these measures since their privacy constraints are not convex.
        However, here, we apply the watchdog mechanism with subset merging randomization to investigate the PUT for lift-based and lift-inverse measures.
        This application shows that the watchdog mechanism with $X$-invariant randomization is a low-complexity method that can be applied to all the aforementioned measures.
        Moreover, our subset merging algorithm significantly enhances the utility, which is comparable to the optimal solutions.

        To apply lift-based and lift-inverse measures to the subset merging algorithm, we replace $\Lambda(y)$ and $\Psi(y)$ in \eqref{eq:ALIP XH} and Algorithm \ref{alg:Wsubset algo}, with the corresponding lift-based and lift-inverse measures in Definitions \ref{def:lift-based} and \ref{def:lift-inverse}, respectively.
        For example, $\XL$ and $\XH$ for the $\alpha$-lift are obtained as
        \begin{gather}
            \XL \triangleq \{x\in \X:  \Psi_{\alpha}^{S}(x) \leq \e^{\epsl} \quad and \quad \Lambda_{\alpha}^{S}(x) \leq \e^{\epsu}\} \quad and \quad \X_H=\X \setminus \XL \label{eq:ell1 XH}.
        \end{gather}
        The privacy risk measure in Algorithm \ref{alg:Wsubset algo} is also given by  $\omega(x)=\Psi_{\alpha}^{S}(x)+\Lambda_{\alpha}^{S}(x)$. 
        We compare the PUT of lift-based and lift-inverse privacy with ($\epsl,\epsu$)-ALIP, where lift-based and lift-inverse measures are bounded as follows:
          
        \begin{itemize}
        
            \item $\ell_{1}$-privacy: \hspace{26pt} $\Lambda_{\ell_{1}}(y) \hspace{1pt} \leq \e^{\epsu}-1$ \hspace{20pt} and $\quad \Psi_{\ell_{1}}(y) \leq \e^{\epsl}-1,  \hspace{24pt}  \forall y \in \Y.$
            
            \item $\chi^2$-privacy: \hspace{24pt} $\Lambda_{\chi^2}(y) \leq (\e^{\epsu} -1)^2$  \quad  and $\quad \Psi_{\chi^{2}}(y) \hspace{-1pt} \leq (\e^{\epsl}-1)^2, \quad \forall y \in \Y.$
            
            \item	
                $\alpha$-lift-privacy: \hspace{13pt} $\Lambda_{\alpha}^{S}(y) \hspace{4pt} \leq \e^{\epsu}$ \hspace{38pt} and $\quad \Psi_{\alpha}^{S}(y) \hspace{2pt} \leq \e^{\epsl}, \hspace{41pt}  \forall y \in \Y.$   
                
        \end{itemize}
        We apply the subset merging mechanism with the simulation setup in Section \ref{subsubsec:Num watchdog}.
        Figure~\ref{Fig:var-chi2-PUT} demonstrates the PUT of ALIP for $\lambda=0.5$ and $\ell_{1}$ and $\chi^2$ privacy for $\lambda=\{0.5,0.65\}$.
        When $\lambda=0.5$, $\ell_{1}$ and $\chi^2$ privacy result in higher utility compared to ALIP for all values of $\eps$ since lift-based and lift-inverse measures are relaxations of max and min lift.
        To observe the effect of the asymmetric scenario, we have depicted $\ell_{1}$ and $\chi^2$ privacy for $\lambda=0.65$. 
        From Figure \ref{fig:var-chi2-utility} we observe that lift-inverse relaxation ($\lambda=0.65$) enhances utility significantly for $\eps>1$, but worsens utility for $\eps<1$. 
        The reason is that when $\eps<1$, lift-inverse privacy constraint in the asymmetric scenario is strict, which requires more symbols to be merged in each subset and causes larger subsets and utility degradation.
        
        A comparison between $\alpha$-lift privacy and ALIP is shown in Figure \ref{Fig:alphaPUT} for $\alpha \in \{2,10,100\}$. 
        $\alpha$-lift privacy is tunable  such that when  $\alpha=\infty$, it is equivalent to ALIP, and when $\alpha < \infty$ it results in a relaxation scenario. 
        We observe tunable property in Figure \ref{Fig:alphaPUT} where $\alpha=2$ has the highest utility. Moreover, when $\alpha$ increases the PUT of $\alpha$-lift privacy becomes closer to ALIP.

        \begin{figure}
            \centering
            \begin{subfigure}{0.32\textwidth}
                \centering
                \scalebox{0.4}{\definecolor{mycolor1}{rgb}{0.25, 0.4, 0.96}
\definecolor{mycolor2}{rgb}{0.63, 0.36, 0.94}
\definecolor{mycolor3}{rgb}{0.89, 0.26, 0.2}
\definecolor{mycolor4}{rgb}{0.24, 0.82, 0.44}
\definecolor{mycolor5}{rgb}{0.0, 1.0, 1.0}

\begin{tikzpicture}

\begin{axis}[%
width=4.521in,
height=3.8in,
scale only axis,
xmin=0,
xmax=8,
xlabel style={font=\color{white!15!black}},
xlabel={\Huge $\eps$},
ymin=0.4,
ymax=1,
ylabel style={font=\color{white!15!black}},
ylabel={\huge NMI},
axis background/.style={fill=white},
xmajorgrids,
ymajorgrids,
yticklabel style = {font=\huge},
xticklabel style = {font=\huge},
legend style={legend cell align=left, align=left, draw=white!15!black, legend pos= south east, font=\huge}
]
\addplot [color=mycolor1, line width=2.0pt]
  table[row sep=crcr]{%
0.25	0.417654745366093\\
0.5	0.592257050805583\\
0.75	0.681241902233134\\
1	0.740016309300705\\
1.25	0.768862583869991\\
1.5	0.791272458796886\\
1.75	0.811035838783193\\
2	0.829382547789577\\
2.25	0.848437562052871\\
2.5	0.866885581518787\\
2.75	0.878728504666757\\
3	0.890551855483462\\
3.25	0.898571173191746\\
3.5	0.910151561564669\\
3.75	0.923986693676144\\
4	0.92956508482811\\
4.25	0.937102369381299\\
4.5	0.94604682133377\\
4.75	0.951178254612174\\
5	0.957393009191293\\
5.25	0.961127043238325\\
5.5	0.965132296046684\\
5.75	0.968615104358202\\
6	0.972721759597698\\
6.25	0.974973479353764\\
6.5	0.978184051641318\\
6.75	0.980029851606514\\
7	0.982817627703303\\
7.25	0.984568264063305\\
7.5	0.98611036303351\\
7.75	0.986427763345832\\
8	0.987582220371495\\
};
\addlegendentry{ALIP, $\lambda=0.5$}

\addplot [color=mycolor2, line width=2.0pt]
  table[row sep=crcr]{%
0.25	0.565556760925211\\
0.5	0.735612364454551\\
0.75	0.793437443737498\\
1	0.832707102553836\\
1.25	0.866983992956893\\
1.5	0.891265721051304\\
1.75	0.914333383153274\\
2	0.927712089101827\\
2.25	0.941726114773916\\
2.5	0.950197501427612\\
2.75	0.957473485450223\\
3	0.96336176203939\\
3.25	0.970590067859295\\
3.5	0.974773025985863\\
3.75	0.977628031408628\\
4	0.981642684347677\\
4.25	0.983650437788386\\
4.5	0.98673284911912\\
4.75	0.986872634982286\\
5	0.987435893260048\\
5.25	0.989965648479354\\
5.5	0.990973273232379\\
5.75	0.991606030027545\\
6	0.992821489504858\\
6.25	0.994000094817536\\
6.5	0.993999038020127\\
6.75	0.994472292024962\\
7	0.995359974718026\\
7.25	0.995895065416708\\
7.5	0.996650555425383\\
7.75	0.997577979403753\\
8	0.997577979403753\\
};
\addlegendentry{$\ell_{1}$-privacy, $\lambda=0.5$}

\addplot [color=mycolor3, line width=2.0pt]
  table[row sep=crcr]{%
0.25	0.72903701841118\\
0.5	0.762306961007692\\
0.75	0.791607398928067\\
1	0.813504189489105\\
1.25	0.829509262460675\\
1.5	0.853879073090185\\
1.75	0.870944338366318\\
2	0.884692204664483\\
2.25	0.902029883125375\\
2.5	0.913869847182242\\
2.75	0.92461224024455\\
3	0.934447647253577\\
3.25	0.942908070055578\\
3.5	0.95135296954762\\
3.75	0.957307605753574\\
4	0.963555489497173\\
4.25	0.968148368738287\\
4.5	0.971867404010328\\
4.75	0.975046187838625\\
5	0.978531705164625\\
5.25	0.980362459232936\\
5.5	0.98392995379014\\
5.75	0.985059587447133\\
6	0.986208010909648\\
6.25	0.987016340733066\\
6.5	0.988327698967681\\
6.75	0.989992013562148\\
7	0.991257287595391\\
7.25	0.992017522958639\\
7.5	0.992211975654739\\
7.75	0.993727881508334\\
8	0.994000094817536\\
};
\addlegendentry{$\chi^2$-privacy, $\lambda=0.5$}

\addplot [color=mycolor4, line width=2.0pt]
  table[row sep=crcr]{%
0.25	0.496860252032716\\
0.5	0.675667048699893\\
0.75	0.773734425826733\\
1	0.838375984485332\\
1.25	0.890072401717506\\
1.5	0.922695670431542\\
1.75	0.942218718026525\\
2	0.952270069914083\\
2.25	0.96213941972389\\
2.5	0.970590067859295\\
2.75	0.974975041279828\\
3	0.979404044913423\\
3.25	0.983099274291665\\
3.5	0.98673284911912\\
3.75	0.987211898165679\\
4	0.9896244209902\\
4.25	0.991257287595391\\
4.5	0.991995418474759\\
4.75	0.993625099217134\\
5	0.993999038020127\\
5.25	0.994472292024962\\
5.5	0.995668044426936\\
5.75	0.996650555425383\\
6	0.997577979403753\\
6.25	0.998126600636404\\
6.5	0.998126600636404\\
6.75	0.998619281683418\\
7	0.999158644741106\\
7.25	0.999461970760379\\
7.5	0.999461970760379\\
7.75	0.999461970760379\\
8	1\\
};
\addlegendentry{$\ell_{1}$-privacy, $\lambda=0.65$}

\addplot [color=mycolor5, line width=2.0pt]
  table[row sep=crcr]{%
0.25	0.727257949549047\\
0.5	0.756669826356137\\
0.75	0.790724884027143\\
1	0.822982001854083\\
1.25	0.858844772097305\\
1.5	0.875568779204209\\
1.75	0.8991303931312\\
2	0.918589621212485\\
2.25	0.931691383320187\\
2.5	0.942908070055578\\
2.75	0.953133707975874\\
3	0.962485578463769\\
3.25	0.967565951958611\\
3.5	0.971787463253408\\
3.75	0.97676103000395\\
4	0.979823568028879\\
4.25	0.98392995379014\\
4.5	0.985766364220702\\
4.75	0.986716889917067\\
5	0.988327698967681\\
5.25	0.99025956355974\\
5.5	0.991257287595391\\
5.75	0.992211975654739\\
6	0.994000094817536\\
6.25	0.994000094817536\\
6.5	0.994699313014734\\
6.75	0.995895065416708\\
7	0.996650555425383\\
7.25	0.997577979403753\\
7.5	0.998126600636404\\
7.75	0.998126600636404\\
8	0.998619281683418\\
};
\addlegendentry{$\chi^2$-privacy, $\lambda=0.65$}

\end{axis}
\end{tikzpicture}
                \caption{ \centering \label{fig:var-chi2-utility} Utility}
            \end{subfigure}
            ~
            \begin{subfigure}{0.32\textwidth}
                 \centering
                \scalebox{0.4}{\definecolor{mycolor1}{rgb}{0.25, 0.4, 0.96}
\definecolor{mycolor2}{rgb}{0.63, 0.36, 0.94}
\definecolor{mycolor3}{rgb}{0.89, 0.26, 0.2}
\definecolor{mycolor4}{rgb}{0.24, 0.82, 0.44}
\definecolor{mycolor5}{rgb}{0.0, 1.0, 1.0}
\begin{tikzpicture}

\begin{axis}[%
width=4.521in,
height=3.8in,
scale only axis,
xmin=0,
xmax=8,
xlabel style={font=\color{white!15!black}},
xlabel={\Huge$\eps$},
ymin=0,
ymax=4,
yticklabel style={font=\large},
ytick={0.5,1,1.5,2,2.5,3,3.5,4,4.5,5,5.5},
ylabel style={font=\color{white!15!black}},
ylabel={\huge $|\log(\min_{y}\Psi(y))|$},
axis background/.style={fill=white},
xmajorgrids,
ymajorgrids,
yticklabel style = {font=\huge},
xticklabel style = {font=\huge},
legend style={legend cell align=left, align=left, draw=white!15!black, legend pos= south east,font=\huge}
]
\addplot [color=mycolor1, line width=2.0pt]
  table[row sep=crcr]{%
0.25	0.0979869750690848\\
0.5	0.208683729630452\\
0.75	0.331291808169896\\
1	0.439846345010464\\
1.25	0.541581630522294\\
1.5	0.667848315225924\\
1.75	0.793594978863712\\
2	0.9029441279805\\
2.25	1.03026577892371\\
2.5	1.14479172975389\\
2.75	1.2553982883439\\
3	1.3503677814063\\
3.25	1.4826434769649\\
3.5	1.6134881755522\\
3.75	1.72490085193563\\
4	1.84599048251564\\
4.25	1.97337879704601\\
4.5	2.06357329672677\\
4.75	2.17211683087607\\
5	2.2483954219815\\
5.25	2.33902003976229\\
5.5	2.46617036459891\\
5.75	2.53981964820346\\
6	2.64878514590437\\
6.25	2.75291406874789\\
6.5	2.83801780091972\\
6.75	2.89517736819572\\
7	3.00973136091614\\
7.25	3.08083273016365\\
7.5	3.13697716216168\\
7.75	3.14400186777479\\
8	3.18293049988807\\
};
\addlegendentry{ALIP, $\lambda=0.5$}

\addplot [color=mycolor2, line width=2.0pt]
  table[row sep=crcr]{%
0.25	0.233653090717906\\
0.5	0.492638593212515\\
0.75	0.796859908566874\\
1	1.03070118830066\\
1.25	1.27970277289152\\
1.5	1.51780002988659\\
1.75	1.74350425321451\\
2	1.90635884116478\\
2.25	2.08830754431793\\
2.5	2.23357136848089\\
2.75	2.38267316511453\\
3	2.49811399320736\\
3.25	2.61539328239662\\
3.5	2.72826564485974\\
3.75	2.8489937892729\\
4	2.97330343053007\\
4.25	3.05529297342008\\
4.5	3.1314522263583\\
4.75	3.13807385162671\\
5	3.18752079285608\\
5.25	3.25684460412378\\
5.5	3.2742324887676\\
5.75	3.29624016225027\\
6	3.35406627122796\\
6.25	3.41044396502187\\
6.5	3.40881839669977\\
6.75	3.44396688016345\\
7	3.49308590363038\\
7.25	3.52877608979285\\
7.5	3.60321905365582\\
7.75	3.65108612148159\\
8	3.65108612148159\\
};
\addlegendentry{$\ell_{1}$-privacy, $\lambda=0.5$}

\addplot [color=mycolor3, line width=2.0pt]
  table[row sep=crcr]{%
0.25	0.439416323176564\\
0.5	0.527534291459341\\
0.75	0.627945133895404\\
1	0.783103745087412\\
1.25	0.920204465529736\\
1.5	1.11082375432119\\
1.75	1.25349256607508\\
2	1.39203511882933\\
2.25	1.57829972465114\\
2.5	1.71295921255057\\
2.75	1.84157901811931\\
3	1.99472251730462\\
3.25	2.10160423019091\\
3.5	2.23446192210421\\
3.75	2.30970876208416\\
4	2.44815467685073\\
4.25	2.53998445747896\\
4.5	2.65584212213018\\
4.75	2.74425145908023\\
5	2.85840427924959\\
5.25	2.91927739303832\\
5.5	3.03775823136525\\
5.75	3.08519504054527\\
6	3.14400186777479\\
6.25	3.15725122227116\\
6.5	3.19704279342892\\
6.75	3.25032997800831\\
7	3.27998960983634\\
7.25	3.30802334632473\\
7.5	3.33447876533235\\
7.75	3.40498547755862\\
8	3.41044396502187\\
};
\addlegendentry{$\chi^2$-privacy, $\lambda=0.5$}

\addplot [color=mycolor4, line width=2.0pt]
  table[row sep=crcr]{%
0.25	0.16612546273072\\
0.5	0.405514810123675\\
0.75	0.827232765426957\\
1	1.24701268780781\\
1.25	1.61422583358307\\
1.5	1.87252136534802\\
1.75	2.09329638785189\\
2	2.28634998305705\\
2.25	2.4703951434639\\
2.5	2.61539328239662\\
2.75	2.7328456642579\\
3	2.89956433265284\\
3.25	3.03916011990915\\
3.5	3.1314522263583\\
3.75	3.16101279631628\\
4	3.25349128981235\\
4.25	3.27998960983634\\
4.5	3.32209723810234\\
4.75	3.38932760910339\\
5	3.40881839669977\\
5.25	3.44396688016345\\
5.5	3.5120672956781\\
5.75	3.60321905365582\\
6	3.65108612148159\\
6.25	3.69827882836524\\
6.5	3.69827882836524\\
6.75	3.75862595049989\\
7	3.82664319667473\\
7.25	3.85364620560398\\
7.5	3.85364620560398\\
7.75	3.85364620560398\\
8	3.9066364606673\\
};
\addlegendentry{$\ell_{1}$-privacy, $\lambda=0.65$}

\addplot [color=mycolor5, line width=2.0pt]
  table[row sep=crcr]{%
0.25	0.457541687105856\\
0.5	0.557655409428269\\
0.75	0.671511167034332\\
1	0.87873607389173\\
1.25	1.13847561871118\\
1.5	1.33112291747468\\
1.75	1.58445391867883\\
2	1.7579978456901\\
2.25	1.95639318083081\\
2.5	2.10160423019091\\
2.75	2.25462286394972\\
3	2.39826912768133\\
3.25	2.53403485895001\\
3.5	2.66491494517444\\
3.75	2.79399540980199\\
4	2.89661073375654\\
4.25	3.03775823136525\\
4.5	3.10230082219775\\
4.75	3.15526880313762\\
5	3.19704279342892\\
5.25	3.26170953147184\\
5.5	3.27998960983634\\
5.75	3.33447876533235\\
6	3.41044396502187\\
6.25	3.41044396502187\\
6.5	3.4606756742782\\
6.75	3.52877608979285\\
7	3.60321905365582\\
7.25	3.65108612148159\\
7.5	3.69827882836524\\
7.75	3.69827882836524\\
8	3.75862595049989\\
};
\addlegendentry{$\chi^2$-privacy, $\lambda=0.65$}

\end{axis}
\end{tikzpicture}
                \caption{ \centering \label{fig:var-chi2-leakl} Min-lift leakage}
            \end{subfigure}
            ~
            \begin{subfigure}{0.32\textwidth}
                \centering
                \scalebox{0.4}{\definecolor{mycolor1}{rgb}{0.25, 0.4, 0.96}
\definecolor{mycolor2}{rgb}{0.63, 0.36, 0.94}
\definecolor{mycolor3}{rgb}{0.89, 0.26, 0.2}
\definecolor{mycolor4}{rgb}{0.24, 0.82, 0.44}
\definecolor{mycolor5}{rgb}{0.0, 1.0, 1.0}
\begin{tikzpicture}

\begin{axis}[%
width=4.521in,
height=3.8in,
scale only axis,
xmin=0,
xmax=8,
xlabel style={font=\color{white!15!black}},
xlabel={\Huge $\eps$},
ymin=0.1,
ymax=1,
ylabel style={font=\color{white!15!black}},
ylabel={\huge $\log (\max_{y}\Lambda(y))$},
axis background/.style={fill=white},
xmajorgrids,
ymajorgrids,
yticklabel style = {font=\huge},
xticklabel style = {font=\huge},
legend style={legend cell align=left, align=left, draw=white!15!black, legend pos= south east,font=\huge}
]
\addplot [color=mycolor1, line width=2.0pt]
  table[row sep=crcr]{%
0.25	0.0983896484372749\\
0.5	0.204378059741777\\
0.75	0.310118149234734\\
1	0.384114560021818\\
1.25	0.458665900352039\\
1.5	0.526552411862018\\
1.75	0.591767976604949\\
2	0.635359044729219\\
2.25	0.683646436575283\\
2.5	0.711884870132384\\
2.75	0.750850649242243\\
3	0.779214975451227\\
3.25	0.805859852037564\\
3.5	0.822313201673911\\
3.75	0.844271398211833\\
4	0.866336688073687\\
4.25	0.869613621612038\\
4.5	0.871116687102037\\
4.75	0.888174322847907\\
5	0.888935143380475\\
5.25	0.893250080111714\\
5.5	0.899291748887247\\
5.75	0.902792343814705\\
6	0.906609283914636\\
6.25	0.921809749486578\\
6.5	0.923835510417111\\
6.75	0.920625601740898\\
7	0.923466962112776\\
7.25	0.926428566180574\\
7.5	0.927202637974536\\
7.75	0.927202637974536\\
8	0.927414039355058\\
};
\addlegendentry{ALIP, $\lambda=0.5$}

\addplot [color=mycolor2, line width=2.0pt]
  table[row sep=crcr]{%
0.25	0.203706717669662\\
0.5	0.386694739237844\\
0.75	0.515360213211245\\
1	0.602887161837536\\
1.25	0.685127138811628\\
1.5	0.747599208088078\\
1.75	0.802876778710326\\
2	0.834501102554842\\
2.25	0.857284353541689\\
2.5	0.867245101025404\\
2.75	0.885723601486253\\
3	0.892017292455465\\
3.25	0.901281483368124\\
3.5	0.910035692475293\\
3.75	0.912932499199225\\
4	0.918437997039976\\
4.25	0.920783451582297\\
4.5	0.922761331626973\\
4.75	0.925444674069902\\
5	0.925656075450424\\
5.25	0.927894200906834\\
5.5	0.928215613570569\\
5.75	0.928215613570569\\
6	0.931072246187072\\
6.25	0.931134530929226\\
6.5	0.931134530929226\\
6.75	0.931134530929226\\
7	0.931596960303779\\
7.25	0.931596960303779\\
7.5	0.932266461481995\\
7.75	0.932266461481995\\
8	0.932266461481995\\
};
\addlegendentry{$\ell_{1}$-privacy, $\lambda=0.5$}

\addplot [color=mycolor3, line width=2.0pt]
  table[row sep=crcr]{%
0.25	0.372080125410246\\
0.5	0.424227099896387\\
0.75	0.482753440512781\\
1	0.569313907335672\\
1.25	0.627572240089266\\
1.5	0.690956400951857\\
1.75	0.725437326703194\\
2	0.771668872760247\\
2.25	0.816188014690778\\
2.5	0.831897366005508\\
2.75	0.855101594567034\\
3	0.867019138985562\\
3.25	0.874315923903862\\
3.5	0.888935143380475\\
3.75	0.892079577197619\\
4	0.895870510052912\\
4.25	0.901120713711585\\
4.5	0.909256357694583\\
4.75	0.916996521930455\\
5	0.91821980641625\\
5.25	0.920783451582297\\
5.5	0.920783451582297\\
5.75	0.924190440724164\\
6	0.927134673320802\\
6.25	0.927682799526312\\
6.5	0.927962165560568\\
6.75	0.928215613570569\\
7	0.928215613570569\\
7.25	0.930203593105777\\
7.5	0.931072246187072\\
7.75	0.931134530929226\\
8	0.931134530929226\\
};
\addlegendentry{$\chi^2$-privacy, $\lambda=0.5$}

\addplot [color=mycolor4, line width=2.0pt]
  table[row sep=crcr]{%
0.25	0.140602276822343\\
0.5	0.296439830071749\\
0.75	0.428514784548672\\
1	0.565348618126309\\
1.25	0.681201439529086\\
1.5	0.784297062675002\\
1.75	0.84948350785981\\
2	0.868029029018349\\
2.25	0.892079577197619\\
2.5	0.901281483368124\\
2.75	0.910035692475293\\
3	0.918280147198578\\
3.25	0.92043353879415\\
3.5	0.922761331626973\\
3.75	0.925444674069902\\
4	0.927894200906834\\
4.25	0.928215613570569\\
4.5	0.930203593105777\\
4.75	0.931134530929226\\
5	0.931134530929226\\
5.25	0.931134530929226\\
5.5	0.931596960303779\\
5.75	0.932266461481995\\
6	0.932266461481995\\
6.25	0.932266461481995\\
6.5	0.932266461481995\\
6.75	0.937882165482855\\
7	0.937882165482855\\
7.25	0.937882165482855\\
7.5	0.937882165482855\\
7.75	0.937882165482855\\
8	0.942896932900177\\
};
\addlegendentry{$\ell_{1}$-privacy, $\lambda=0.65$}

\addplot [color=mycolor5, line width=2.0pt]
  table[row sep=crcr]{%
0.25	0.362900751798696\\
0.5	0.404381286033877\\
0.75	0.459882614982973\\
1	0.528030832746314\\
1.25	0.611745882421323\\
1.5	0.677174547212703\\
1.75	0.764842326328604\\
2	0.831523251764993\\
2.25	0.863819902183737\\
2.5	0.874315923903862\\
2.75	0.888935143380475\\
3	0.892871558774876\\
3.25	0.898125832426004\\
3.5	0.90666152262275\\
3.75	0.920409249694121\\
4	0.920783451582297\\
4.25	0.920783451582297\\
4.5	0.924190440724164\\
4.75	0.927682799526312\\
5	0.927962165560568\\
5.25	0.928215613570569\\
5.5	0.928215613570569\\
5.75	0.931072246187072\\
6	0.931134530929226\\
6.25	0.931134530929226\\
6.5	0.931134530929226\\
6.75	0.931596960303779\\
7	0.932266461481995\\
7.25	0.932266461481995\\
7.5	0.932266461481995\\
7.75	0.932266461481995\\
8	0.937882165482855\\
};
\addlegendentry{$\chi^2$-privacy, $\lambda=0.65$}

\end{axis}
\end{tikzpicture}
                \caption{ \centering \label{fig:var-chi2leaku} Max-lift leakage}
            \end{subfigure}
            ~
            \caption{\label{Fig:var-chi2-PUT} Comparison of privacy-utility tradeoff between ALIP, $\ell_{1}$-privacy, and  $\chi^2$-privacy where $|\X|=17$, $|\Sen|=5$, $\eps_{\text{LDP}}\in\{0.25,0.5,0.75,\cdots,8\}$, $\lambda \in \{0.5,0.65\}$, $\epsl=\lambda\eps$ and $\epsu=(1-\lambda)\eps$.}
        \end{figure}
        
        \begin{figure}
            \centering
            \begin{subfigure}{0.32\textwidth}
                \scalebox{0.4}{\definecolor{mycolor1}{rgb}{0.25, 0.4, 0.96}
\definecolor{mycolor2}{rgb}{0.63, 0.36, 0.94}
\definecolor{mycolor3}{rgb}{0.89, 0.26, 0.2}
\definecolor{mycolor4}{rgb}{0.24, 0.82, 0.44}
\definecolor{mycolor5}{rgb}{0.0, 1.0, 1.0}

\begin{tikzpicture}

\begin{axis}[%
width=4.521in,
height=3.8in,
scale only axis,
xmin=0,
xmax=8,
xlabel style={font=\color{white!15!black}},
xlabel={\Huge $\eps$},
ymin=0.3,
ymax=1,
ylabel style={font=\color{white!15!black}},
ylabel={\huge NMI},
axis background/.style={fill=white},
xmajorgrids,
ymajorgrids,
yticklabel style = {font=\huge},
xticklabel style = {font=\huge},
legend style={legend cell align=left, align=left, draw=white!15!black, legend pos= south east, font=\huge}
]
\addplot [color=mycolor1,  line width=2.0pt]
  table[row sep=crcr]{%
0.25	0.417402438326139\\
0.5	0.58931222696749\\
0.75	0.681131944142307\\
1	0.735068971311797\\
1.25	0.767535635799408\\
1.5	0.793893220805491\\
1.75	0.81427007326988\\
2	0.832478604821794\\
2.25	0.849103616075165\\
2.5	0.864303204083255\\
2.75	0.87761823639748\\
3	0.890461224992069\\
3.25	0.901228199616678\\
3.5	0.911111178719206\\
3.75	0.920687756842197\\
4	0.928937539897423\\
4.25	0.937375225480885\\
4.5	0.943495658792141\\
4.75	0.949807683519909\\
5	0.95476426084505\\
5.25	0.959836889081937\\
5.5	0.964405451188499\\
5.75	0.967463824565835\\
6	0.970385886157748\\
6.25	0.973712219194275\\
6.5	0.976339561137803\\
6.75	0.978893683425077\\
7	0.980621305600287\\
7.25	0.982359448090332\\
7.5	0.984617581678389\\
7.75	0.986316642626134\\
8	0.987818251813234\\
};
\addlegendentry{$\text{ALIP, }\lambda\text{=0.5}$}

\addplot [color=mycolor2, line width=2.0pt]
  table[row sep=crcr]{%
0.25	0.781229913302124\\
0.5	0.829148915073405\\
0.75	0.858808603999732\\
1	0.881897704381183\\
1.25	0.900972251698511\\
1.5	0.915836728018808\\
1.75	0.928881570298047\\
2	0.939534636949714\\
2.25	0.947026633372185\\
2.5	0.95382110138678\\
2.75	0.959413682159091\\
3	0.963882015725606\\
3.25	0.96734138907487\\
3.5	0.970323139654668\\
3.75	0.972952527098418\\
4	0.974451843101911\\
4.25	0.975808931087381\\
4.5	0.978010330612693\\
4.75	0.979454483960098\\
5	0.981101876017141\\
5.25	0.982368188142429\\
5.5	0.983830196412785\\
5.75	0.985071499397027\\
6	0.986516915165122\\
6.25	0.987937607680315\\
6.5	0.988914119261199\\
6.75	0.989895890743185\\
7	0.991111282626461\\
7.25	0.992158885254096\\
7.5	0.993004808621135\\
7.75	0.993649072602388\\
8	0.994434436139164\\
};
\addlegendentry{$\alpha\text{-lift, }\alpha\text{=2}$}

\addplot [color=mycolor3, line width=2.0pt]
  table[row sep=crcr]{%
0.25	0.62632852975374\\
0.5	0.734986404035792\\
0.75	0.78916316238623\\
1	0.819185450281975\\
1.25	0.839961860098924\\
1.5	0.859204484524999\\
1.75	0.876201302697164\\
2	0.891361790935967\\
2.25	0.905685818146559\\
2.5	0.917441344659296\\
2.75	0.926650245048331\\
3	0.936839739319946\\
3.25	0.944665297974692\\
3.5	0.950562177538237\\
3.75	0.956041084509737\\
4	0.960776902649462\\
4.25	0.964749650798412\\
4.5	0.967551142305667\\
4.75	0.970782359861745\\
5	0.972825408018363\\
5.25	0.97455864337701\\
5.5	0.975922005336694\\
5.75	0.977755335452469\\
6	0.979492028493953\\
6.25	0.980774179203838\\
6.5	0.982380353127406\\
6.75	0.983463210435697\\
7	0.984733916852839\\
7.25	0.986142102955886\\
7.5	0.987439717864595\\
7.75	0.988878699333043\\
8	0.989784357799903\\
};
\addlegendentry{$\alpha\text{-lift, }\alpha\text{=10}$}

\addplot [color=mycolor4, line width=2.0pt]
  table[row sep=crcr]{%
0.25	0.485086942992053\\
0.5	0.640361005463037\\
0.75	0.731715559429305\\
1	0.785888135372923\\
1.25	0.816115701064282\\
1.5	0.83688945413413\\
1.75	0.857206076516598\\
2	0.874450408966886\\
2.25	0.889260073205214\\
2.5	0.903262383251579\\
2.75	0.9154089966484\\
3	0.924840225135138\\
3.25	0.935230591078591\\
3.5	0.943153755736747\\
3.75	0.949593685687396\\
4	0.95537366215586\\
4.25	0.960075428314966\\
4.5	0.963880532700656\\
4.75	0.967092390520532\\
5	0.970226002637778\\
5.25	0.972302856149697\\
5.5	0.974095702244106\\
5.75	0.975601292149442\\
6	0.977448845255868\\
6.25	0.979214239476736\\
6.5	0.98044992457865\\
6.75	0.982024953923024\\
7	0.983095766548977\\
7.25	0.984479852462062\\
7.5	0.985931393193487\\
7.75	0.987202082840636\\
8	0.988616135736731\\
};
\addlegendentry{$\alpha\text{-lift, }\alpha\text{=100}$}

\end{axis}

\begin{axis}[%
width=5.833in,
height=4.375in,
at={(0in,0in)},
scale only axis,
xmin=0,
xmax=1,
ymin=0,
ymax=1,
axis line style={draw=none},
ticks=none,
axis x line*=bottom,
axis y line*=left
]
\end{axis}
\end{tikzpicture}
                \caption{ \centering \label{fig:alphautility} Utility}
            \end{subfigure}
            ~
            \begin{subfigure}{0.32\textwidth}
             \centering
                \scalebox{0.4}{\definecolor{mycolor1}{rgb}{0.25, 0.4, 0.96}
\definecolor{mycolor2}{rgb}{0.63, 0.36, 0.94}
\definecolor{mycolor3}{rgb}{0.89, 0.26, 0.2}
\definecolor{mycolor4}{rgb}{0.24, 0.82, 0.44}
\definecolor{mycolor5}{rgb}{0.0, 1.0, 1.0}
\begin{tikzpicture}

\begin{axis}[%
width=4.521in,
height=3.8in,
scale only axis,
xmin=0,
xmax=8,
xlabel style={font=\color{white!15!black}},
xlabel={\Huge$\eps$},
ymin=0,
ymax=4,
yticklabel style={font=\large},
ytick={0.5,1,1.5,2,2.5,3,3.5,4,4.5,5,5.5},
ylabel style={font=\color{white!15!black}},
ylabel={\huge $|\log(\min_{y}\Psi(y))|$},
axis background/.style={fill=white},
xmajorgrids,
ymajorgrids,
yticklabel style = {font=\huge},
xticklabel style = {font=\huge},
legend style={legend cell align=left, align=left, draw=white!15!black, legend pos= south east,font=\huge}
]
\addplot [color=mycolor1,  line width=2.0pt]
  table[row sep=crcr]{%
0.25	0.100582036927389\\
0.5	0.213001355783079\\
0.75	0.329541957595612\\
1	0.439157557409202\\
1.25	0.548252407032391\\
1.5	0.668740112247519\\
1.75	0.785397980158013\\
2	0.904339391978971\\
2.25	1.02660924244317\\
2.5	1.14258175823106\\
2.75	1.25808755181953\\
3	1.36970244060656\\
3.25	1.49550624004222\\
3.5	1.60598571675944\\
3.75	1.71764401797463\\
4	1.83457245197722\\
4.25	1.958387289018\\
4.5	2.06137050750757\\
4.75	2.16708206263675\\
5	2.27748251724914\\
5.25	2.37917032443033\\
5.5	2.47060957460946\\
5.75	2.55721386892084\\
6	2.65416994214416\\
6.25	2.74570180233496\\
6.5	2.83008775319773\\
6.75	2.90269729644482\\
7	2.95028669603723\\
7.25	3.01036420308929\\
7.5	3.09891978494779\\
7.75	3.16715868448053\\
8	3.23258876996307\\
};
\addlegendentry{$\text{ALIP, }\lambda\text{=0.5}$}

\addplot [color=mycolor2, line width=2.0pt]
  table[row sep=crcr]{%
0.25	0.842470023833996\\
0.5	1.02195112617699\\
0.75	1.19430645839546\\
1	1.37574193498948\\
1.25	1.57389527418892\\
1.5	1.72734276138546\\
1.75	1.90008183751967\\
2	2.04970021623035\\
2.25	2.15754101980143\\
2.5	2.26909112153058\\
2.75	2.38249916460772\\
3	2.47362048837401\\
3.25	2.54657976489619\\
3.5	2.63742527466781\\
3.75	2.69953841561094\\
4	2.74166638425427\\
4.25	2.79045640806882\\
4.5	2.85360430022337\\
4.75	2.91247121389761\\
5	2.97176522530912\\
5.25	3.00751661262388\\
5.5	3.05554086745853\\
5.75	3.10337743209854\\
6	3.16457183002785\\
6.25	3.23337912426709\\
6.5	3.27856172868163\\
6.75	3.32761376117986\\
7	3.38475596926551\\
7.25	3.43257173434631\\
7.5	3.48682034387307\\
7.75	3.525300889276\\
8	3.5689401113985\\
};
\addlegendentry{$\alpha\text{-lift, }\alpha\text{=2}$}

\addplot [color=mycolor3, line width=2.0pt]
  table[row sep=crcr]{%
0.25	0.417480596805605\\
0.5	0.652911547170587\\
0.75	0.831281884470966\\
1	0.969041298935639\\
1.25	1.09080196303868\\
1.5	1.20891670866922\\
1.75	1.33469908137587\\
2	1.47762727954929\\
2.25	1.64626073530153\\
2.5	1.76866746666058\\
2.75	1.87193311418779\\
3	2.02274722130871\\
3.25	2.1384352754217\\
3.5	2.22769571485\\
3.75	2.32994669040504\\
4	2.41754973613196\\
4.25	2.49740784558178\\
4.5	2.55910787047663\\
4.75	2.64286553034791\\
5	2.69819536384551\\
5.25	2.73828507137673\\
5.5	2.78654166033346\\
5.75	2.8522361003249\\
6	2.91292016850855\\
6.25	2.96271392743684\\
6.5	3.00385262594925\\
6.75	3.04317766150322\\
7	3.09305048251064\\
7.25	3.15285304384065\\
7.5	3.21020790830005\\
7.75	3.26921462370087\\
8	3.31914826420905\\
};
\addlegendentry{$\alpha\text{-lift, }\alpha\text{=10}$}

\addplot [color=mycolor4, line width=2.0pt]
  table[row sep=crcr]{%
0.25	0.369387397047182\\
0.5	0.480037268915212\\
0.75	0.662338464632749\\
1	0.835778751995378\\
1.25	0.968494215864646\\
1.5	1.07012300957604\\
1.75	1.2041797040609\\
2	1.33609231871209\\
2.25	1.45939239742685\\
2.5	1.60857194778734\\
2.75	1.73887233818476\\
3	1.84439655302841\\
3.25	1.99940731925203\\
3.5	2.12248510586227\\
3.75	2.21016554037399\\
4	2.31993261005632\\
4.25	2.40434415525764\\
4.5	2.47226280278133\\
4.75	2.54222471281024\\
5	2.63066971802018\\
5.25	2.6775629254248\\
5.5	2.72538754421149\\
5.75	2.77068704520374\\
6	2.84261747168182\\
6.25	2.90092184822739\\
6.5	2.9473172314897\\
6.75	2.99856777987045\\
7	3.0327023860062\\
7.25	3.08141038596751\\
7.5	3.14407211352351\\
7.75	3.20080905909809\\
8	3.25971576034639\\
};
\addlegendentry{$\alpha\text{-lift, }\alpha\text{=100}$}

\end{axis}
\end{tikzpicture}
                \caption{ \centering \label{fig:alphaleakL} Min-lift leakage}
            \end{subfigure}
            ~
            \begin{subfigure}{0.32\textwidth}
             \centering
                \scalebox{0.4}{\definecolor{mycolor1}{rgb}{0.25, 0.4, 0.96}
\definecolor{mycolor2}{rgb}{0.63, 0.36, 0.94}
\definecolor{mycolor3}{rgb}{0.89, 0.26, 0.2}
\definecolor{mycolor4}{rgb}{0.24, 0.82, 0.44}
\definecolor{mycolor5}{rgb}{0.0, 1.0, 1.0}
\begin{tikzpicture}

\begin{axis}[%
width=4.521in,
height=3.8in,
scale only axis,
xmin=0,
xmax=8,
xlabel style={font=\color{white!15!black}},
xlabel={\Huge $\eps$},
ymin=0,
ymax=1,
ylabel style={font=\color{white!15!black}},
ylabel={\huge $\log (\max_{y}\Lambda(y))$},
axis background/.style={fill=white},
xmajorgrids,
ymajorgrids,
yticklabel style = {font=\huge},
xticklabel style = {font=\huge},
legend style={legend cell align=left, align=left, draw=white!15!black, legend pos= south east,font=\huge}
]
\addplot [color=mycolor1, line width=2.0pt]
  table[row sep=crcr]{%
0.25	0.0979692764783066\\
0.5	0.203130517313862\\
0.75	0.307464415242383\\
1	0.390228819533383\\
1.25	0.46513973721931\\
1.5	0.537629482040508\\
1.75	0.596626019912188\\
2	0.647166448667881\\
2.25	0.69489611074371\\
2.5	0.728755975507446\\
2.75	0.756680081703568\\
3	0.781117572023144\\
3.25	0.803670542073382\\
3.5	0.821866362591287\\
3.75	0.839142266161572\\
4	0.851830941666598\\
4.25	0.864524364285055\\
4.5	0.875078995237106\\
4.75	0.883267909043203\\
5	0.889320272980626\\
5.25	0.899534843039318\\
5.5	0.904761681238586\\
5.75	0.907298842261833\\
6	0.911558298611934\\
6.25	0.915823343273453\\
6.5	0.918813261909352\\
6.75	0.922606206805521\\
7	0.92596969139713\\
7.25	0.929497235490041\\
7.5	0.933745218736901\\
7.75	0.934961966513487\\
8	0.936504264670947\\
};
\addlegendentry{$\text{ALIP, }\lambda\text{=0.5}$}

\addplot [color=mycolor2, line width=2.0pt]
  table[row sep=crcr]{%
0.25	0.500968739322374\\
0.5	0.576694619865726\\
0.75	0.647858098263864\\
1	0.700110086862888\\
1.25	0.7489044014482\\
1.5	0.779779174107889\\
1.75	0.809859048613958\\
2	0.837043044096659\\
2.25	0.856226158035356\\
2.5	0.864250429965327\\
2.75	0.876060599170194\\
3	0.887069198770031\\
3.25	0.894599474708054\\
3.5	0.900423768213634\\
3.75	0.906660253232139\\
4	0.909331519677325\\
4.25	0.913543879739163\\
4.5	0.915859073906214\\
4.75	0.920084171017176\\
5	0.922363755526713\\
5.25	0.925275606842995\\
5.5	0.928827412878695\\
5.75	0.932218549982075\\
6	0.935158389387001\\
6.25	0.93675692624116\\
6.5	0.938224438973097\\
6.75	0.939665579739202\\
7	0.940555267845906\\
7.25	0.942249981973743\\
7.5	0.943812975641967\\
7.75	0.944134656386902\\
8	0.944203383488969\\
};
\addlegendentry{$\alpha\text{-lift, }\alpha\text{=2}$}

\addplot [color=mycolor3, line width=2.0pt]
  table[row sep=crcr]{%
0.25	0.28531535610109\\
0.5	0.415612118056687\\
0.75	0.494206261560791\\
1	0.559066031797981\\
1.25	0.607192897286293\\
1.5	0.662872621735395\\
1.75	0.705540214832501\\
2	0.742706208686989\\
2.25	0.773552075977205\\
2.5	0.798605032373118\\
2.75	0.815855730290971\\
3	0.83476117675504\\
3.25	0.852696360871929\\
3.5	0.864325685831111\\
3.75	0.874262234547306\\
4	0.882791706851473\\
4.25	0.892290926900527\\
4.5	0.896283553824932\\
4.75	0.901147631207624\\
5	0.907285557693917\\
5.25	0.910795437429487\\
5.5	0.912996398001618\\
5.75	0.915805101527944\\
6	0.921392454028472\\
6.25	0.922040661096743\\
6.5	0.925861247349447\\
6.75	0.928723691468485\\
7	0.931288301790626\\
7.25	0.935604986913685\\
7.5	0.93658725410418\\
7.75	0.938475381510285\\
8	0.939705894570025\\
};
\addlegendentry{$\alpha\text{-lift, }\alpha\text{=10}$}

\addplot [color=mycolor4, line width=2.0pt]
  table[row sep=crcr]{%
0.25	0.2087828377091\\
0.5	0.306155916130048\\
0.75	0.409677149032344\\
1	0.487227229257588\\
1.25	0.553041281547124\\
1.5	0.594713817107814\\
1.75	0.65653191761271\\
2	0.699709137197092\\
2.25	0.738587523736802\\
2.5	0.76960471237851\\
2.75	0.793995009066618\\
3	0.812212375703599\\
3.25	0.831690004696771\\
3.5	0.850595812305237\\
3.75	0.862165820218261\\
4	0.872666217405475\\
4.25	0.881755245532987\\
4.5	0.889369848116652\\
4.75	0.895220650190388\\
5	0.899776404122768\\
5.25	0.906772514747586\\
5.5	0.910507944648035\\
5.75	0.911778458513989\\
6	0.915212632638278\\
6.25	0.919989816897318\\
6.5	0.921774156579969\\
6.75	0.925834845226239\\
7	0.927803124224881\\
7.25	0.931029476144601\\
7.5	0.935501307803121\\
7.75	0.936258488071086\\
8	0.938113727517666\\
};
\addlegendentry{$\alpha\text{-lift, }\alpha\text{=100}$}

\end{axis}

\begin{axis}[%
width=5.833in,
height=4.375in,
at={(0in,0in)},
scale only axis,
xmin=0,
xmax=1,
ymin=0,
ymax=1,
axis line style={draw=none},
ticks=none,
axis x line*=bottom,
axis y line*=left
]
\end{axis}
\end{tikzpicture}
                \caption{ \centering \label{fig:alphaleakU} Max-lift leakage}
            \end{subfigure}
            ~
            \caption{\label{Fig:alphaPUT} Comparison of privacy-utility tradeoff between ALIP and $\alpha$-lift-privacy, where $|\X|=17$, $|\Sen|=5$, $\eps_{\text{LDP}}\in\{0.25,0.5,0.75,\cdots,8\}$, $\epsl=\epsu=\frac{\eps_{\text{LDP}}}{2}$, and $\alpha \in \{2,10,100\}$.}
        \end{figure}
\section{Conclusions}

    In this paper, we have studied lift, the likelihood ratio between posterior and prior belief about sensitive features in a dataset.
    We demonstrated the distinction between the min and max lifts in terms of data privacy concerns.
    We proposed ALIP as a generalized version of LIP to have a more compatible notion of privacy with lift asymmetry. 
    ALIP can enhance utility in the watchdog and ORR mechanisms, two main approaches to achieve lift-based privacy.
    We proposed two subset randomization methods to enhance the utility of the watchdog mechanism and reduce ORR complexity for large datasets.  
    We also investigated the existing lift-based measures, showing that they could incur significant leakage on the min lift. Thus, we proposed lift-inverse measures to restrict the min-lift leakage. 
    Finally, we applied the watchdog mechanism to study the PUT of lift-based and lift-inverse measures. 
    For future work, one can consider the applicable operational meaning of the min-lift and max-lift.
    Subset randomization can be applied to decrease the complexity and enhance the utility of other privacy mechanisms.
    Moreover, optimal randomization for $\alpha$-lift is also unknown and could be considered.

\appendices
\section{}\label{apx:proof of ALIP-LDP-Average}
    \begin{enumerate}
        \item
            In LDP, for all $y \in \Y$, we have
            \begin{align*}
                \Gamma(y) =\sup_{s,s^{\prime} \in \Sen}\frac{\PYgS(y|s)}{\PYgS(y|s^{\prime})} =\frac{\max_{s \in \Sen}\PYgS(y|s)}{\min_{s \in \Sen}\PYgS(y|s)}
                =\frac{\max_{s \in \Sen}\PYgS(y|s)/\PY(y)}{\min_{s \in \Sen}\PYgS(y|s)/\PY(y)}
                =\frac{\Lambda(y)}{\Psi(y)} \leq \frac{\e^{\epsu}}{\e^{-\epsl}}=\e^{\epsl+\epsu}.
            \end{align*}
        \item
            For MI, we have $$\displaystyle I(S;Y)=\E_{\PSY}[i(S,Y)] \leq \E_{\PSY}[\epsu]=\epsu.$$
        \item
            \begin{itemize}
            \item For the total variation distance, we have
                \begin{align*}
                    \displaystyle T(S;Y) &=    \frac{1}{2}\sum_{y \in \Y}\PY(y)\sum_{s \in \Sen}\PS(s)|\lifty-1| 
                                          \leq \frac{1}{2}\sum_{y \in \Y}\PY(y)\sum_{s \in \Sen}\PS(s)|\Lambda(y)-1|\\
                                         &=\frac{1}{2}\sum_{y \in \Y}\PY(y)|\Lambda(y)-1|\leq \frac{1}{2}\sum_{y \in \Y}\PY(y)|\e^{\epsu}-1|=\frac{1}{2}(\e^{\epsu}-1).
                \end{align*}
            \item For $\chi^2$-divergence, we have
                \begin{align*}
                    \chi^2(S;Y) &= \sum_{y \in \Y}\PY(y)\sum_{s \in \Sen}\PS(s)\big(\lifty-1\big)^2 
                                \leq \sum_{y \in \Y}\PY(y)\sum_{s \in \Sen}\PS(s)\big(\Lambda(y)-1\big)^2\\
                    &= \sum_{y \in \Y}\PY(y)\big(\Lambda(y)-1\big)^2 \leq \sum_{y \in \Y}\PY(y)\big(\e^{\epsu}-1\big)^2 = (\e^{\epsu}-1)^2.
                \end{align*}
        \end{itemize}
        \item
        \begin{itemize}
            \item
            For Sibson MI, we have
            \begin{align*}
                I_{\alpha}^{S}(S;Y) 
                &= \frac{\alpha}{\alpha-1} \log \sum_{y \in \Y} \PY(y) \left( \sum_{s \in \Sen} \PS(s) \lifty^{\alpha} \right)^{1/\alpha}  \\
                &\leq \frac{\alpha}{\alpha-1} \log \sum_{y \in \Y} \PY(y) \left( \sum_{s \in \Sen} \PS(s) \Lambda(y)^{\alpha} \right)^{1/\alpha}\\
                 &\leq \frac{\alpha}{\alpha-1} \log \sum_{y \in \Y} \PY(y) \left( \sum_{s \in \Sen} \PS(s)\e^{\epsu\alpha} \right)^{1/\alpha}=\frac{\epsu\alpha}{\alpha-1}.
            \end{align*}
  
            \item For Arimoto MI, we have
            \begin{align*}
                I_{\alpha}^{A}(S;Y) 
                &=\frac{\alpha}{\alpha-1}\log \sum_{y \in \Y}\PY(y) \left(\sum_{s \in \Sen} P_{S_{\alpha}}(s)\lifty^{\alpha}\right)^{1/\alpha}\\
                &\leq \frac{\alpha}{\alpha-1}\log \sum_{y \in \Y}\PY(y) \left( \sum_{s \in \Sen}P_{S_{\alpha}}(s) \Lambda(y)^{\alpha} \right)^{1/\alpha}\\
                &\leq \frac{\alpha}{\alpha-1}\log \sum_{y \in \Y}\PY(y) \left( \sum_{s \in \Sen}P_{S_{\alpha}}(s) \e^{\epsu\alpha} \right)^{1/\alpha}= \frac{\epsu\alpha}{\alpha-1},
            \end{align*}
            where $P_{S_{\alpha}}(s)=\frac{\PS(s)^{\alpha}}{\sum_{s \in \Sen} \PS(s)^{\alpha} }.$
        \end{itemize}
    \end{enumerate}  
\section{}\label{apx:proof of minimum leakage LDP}

    Here, we prove that $X$-invariant randomization minimizes privacy leakage in $\XH$ for LDP.
    \begin{Proposition}\label{prop:LDP watchdog}
        A randomization $r(y|x)$, $x,y, \in \XH$ can attain $(\eps,\XH)$-LDP if and only if:
        \begin{equation}\label{LDPXH}
            \Gamma_{LDP}(\XH)=\frac{\max_{s \in \Sen}P(\XH|s)}{\min_{s \in \Sen}P(\XH|s)} \leq \e^\eps.
        \end{equation}
    \end{Proposition}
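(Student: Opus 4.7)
The plan is to prove both directions of the biconditional. For the necessity (only if) direction, I would fix any randomization $r(y|x)$ on $\XH$ satisfying $(\eps,\XH)$-LDP, so that for every $y\in\YH$ and every pair $s,s'\in\Sen$,
\begin{equation*}
\e^{-\eps}\,\PYgS(y|s')\;\le\;\PYgS(y|s)\;\le\;\e^{\eps}\,\PYgS(y|s'),
\end{equation*}
where $\PYgS(y|s)=\sum_{x\in\XH}r(y|x)\PXgS(x|s)$. Summing these inequalities over $y\in\YH$ and using $\sum_{y\in\YH}\PYgS(y|s)=P(\XH|s)$ (valid because $r(\cdot|x)$ is a distribution supported on $\YH$ for each $x\in\XH$) immediately yields $P(\XH|s)/P(\XH|s')\le\e^{\eps}$ for every $s,s'$, hence $\Gamma_{LDP}(\XH)\le\e^{\eps}$.

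For the sufficiency (if) direction, I would exhibit an explicit construction, namely the $X$-invariant randomization $r(y|x)=\R(y)$ for all $x\in\XH,\,y\in\YH$, where $\R$ is any distribution on $\YH$. Then $\PYgS(y|s)=\R(y)\,P(\XH|s)$ and so
\begin{equation*}
\frac{\PYgS(y|s)}{\PYgS(y|s')}=\frac{P(\XH|s)}{P(\XH|s')}\;\le\;\Gamma_{LDP}(\XH)\;\le\;\e^{\eps},
\end{equation*}
which establishes $(\eps,\XH)$-LDP. This same computation shows that the leakage ratio $\Gamma(y)$ attained by $X$-invariant randomization equals $\Gamma_{LDP}(\XH)$ for every $y\in\YH$, while the necessity direction shows that no randomization can do better than $\Gamma_{LDP}(\XH)$; together these imply Theorem~\ref{prop:minimum leakage LDP}.

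The main subtlety is the forward direction's use of summation: one must be careful that the inequality $\PYgS(y|s)\le\e^{\eps}\PYgS(y|s')$ holds for every single $y$ (which is exactly the instance-wise LDP condition), so summation preserves the bound without any loss. A minor technical point is to verify the sums $P(\XH|s)$ are strictly positive (so the ratio is well-defined), which follows from the standing full-support assumption on $\PS,\PX$ together with $\XH\ne\emptyset$; if $\XH=\emptyset$, the claim is vacuous. No convexity or extremal-point arguments are needed, since the $X$-invariant construction directly saturates the necessary condition.
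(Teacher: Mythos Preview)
Your proposal is correct and follows essentially the same approach as the paper: for necessity you sum the per-$y$ LDP inequalities over $\YH$ and use $\sum_{y\in\YH}r(y|x)=1$ to collapse the sums to $P(\XH|s)$, and for sufficiency you exhibit the $X$-invariant randomization and compute the resulting ratio directly. Your added remarks connecting the argument to Theorem~\ref{prop:minimum leakage LDP} and noting the positivity/$\XH=\emptyset$ technicalities are sound and go slightly beyond what the paper spells out, but the core proof is identical.
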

    
    \begin{proof}
        Sufficient condition: Consider an $X$-invariant randomization where $r(y|x)=\R(y)$, $\forall x \in \XH$ and  $y \in \YH$. If \eqref{LDPXH} holds, then for all  $s,s' \in \Sen$, we have
        \begin{equation*}
            \begin{aligned}
                \frac{P(\XH|s)}{P(\XH|s')} & \leq \frac{\max_{s \in \Sen}P(\XH|s)}{\min_{s \in \Sen}P(\XH|s)} \leq \e^\eps  \Rightarrow \\
                P(\XH|s)   & \leq P(\XH|s') \e^\eps \Rightarrow\\
                \R(y)P(\XH|s)  & \leq \R(y)P(\XH|s')\e^\eps \Rightarrow\\
                \sum_{x\in\XH} r(y|x) \PXgS(x|s)   & \leq  \sum_{x\in\XH} r(y|x) \PXgS(x|s') \e^\eps
                \Rightarrow \\
                \PYgS(y|s) & \leq  \PYgS(y|s') \e^{\eps}.
            \end{aligned}
        \end{equation*}
        For the necessary condition, note that for all $s,s'\in\Sen$ and $y \in \YH$, we have
        \begin{equation*}
            \begin{aligned}
                \PYgS(y|s)\leq \e^{\eps}\PYgS(y|s')  \Rightarrow 
                \sum_{x \in \XH} r(y|x)\PXgS(x|s) \leq \e^{\eps} \sum_{x \in \XH}r(y|x)\PXgS(x|s'),
            \end{aligned}
        \end{equation*}
        then by a summation over all $ y \in \YH$ on both sides we get
        \begin{align}
            \nonumber &\sum_{x\in\XH}\PXgS(x|s)\underbrace{\sum_{y\in\YH} r(y|x)}_{=1}
            \leq \e^{\eps}\sum_{x \in \XH} \PXgS(x|s') \underbrace{\sum_{y \in \YH} r(y|x)}_{=1}\\
            &\Rightarrow P(\XH|s)  \leq P(\XH|s')\e^\eps \Rightarrow \frac{P(\XH|s)}{P(\XH|s')}\leq\e^\eps. \label{eq:proof 1}
        \end{align}
        Because \eqref{eq:proof 1}  holds for all $s,s' \in \Sen$, we have
        \begin{equation*}
            \max_{s,s' \in \Sen}\frac{P(\XH|s)}{P(\XH|s')}=\frac{\max_{s \in \Sen}P(\XH|s)}{\min_{s \in \Sen}P(\XH|s)} \leq \e^\eps.
        \end{equation*}
        
    \end{proof}


\section{}\label{apx:proof of ALIP-lift-base}

    \begin{enumerate}
    
        \item For $\ell_{1}$-lift, we have
            \begin{align*}
                \Lambda_{\ell_{1}}(y)=\sum_{s \in \Sen}\PS(s)|\lifty-1|\leq\sum_{s \in \Sen}\PS(s)|\Lambda(y)-1| =\Lambda(y)-1 \leq \e^{\epsu}-1.
            \end{align*}
        
        \item For $\chi^2$-lift, we have
            \begin{align*}
                \Lambda_{\chi^2}(y)= \sum_{s \in \Sen}\PS(s) \big(\lifty-1\big)^2 \leq \sum_{s \in \Sen}\PS(s)\left(\Lambda(y)-1\right)^2
                =\left(\Lambda(y)-1\right)^2 \leq \left(\e^{\epsu}-1\right)^2.
            \end{align*}
        
        \item For the $\alpha$-lift, we have
            \begin{align*}
                \Lambda_{\alpha}^{S}(y)=\left(\sum_{s \in \Sen}\PS(s)\lifty^{\alpha}\right)^{1/\alpha}\leq 	\left(\sum_{s \in \Sen}\PS(s)\Lambda(y)^{\alpha}\right)^{1/\alpha}  =\Lambda(y)\leq \e^{\epsu}.
            \end{align*}
                
    \end{enumerate}   
\section{}\label{apx:proof of lif-base-ALIP}
    If $\displaystyle s_{y}=\argmax_{s \in \Sen}\lifty$, then we have $\Lambda(y)=l(s_y,y)$. 
    Recall that $\displaystyle \bar{y}=\argmax_{y \in \Y}[\Lambda(y)]$.
    
    \begin{enumerate}
    
        \item   When  $\displaystyle \max_{y \in \Y}\Lambda_{\ell_{1}}(y) \leq {\eps}$,  for all $y \in \Y$, we have
                \begin{align*}
                     \Lambda_{\ell_{1}}(y)&=\sum_{s \in \Sen}\PS(s) |\lifty-1| \leq {\eps}  \Rightarrow
                    \PS(s_{y})|l(s_{y},y)-1| =\PS(s_{y})(\Lambda(y)-1) \leq {\eps},
                \end{align*}
                which results in $$ \max_{y \in \Y}\Lambda(y) \leq \frac{{\eps}}{\PS({s_{\bar{y}}})}+1.$$

        \item   When  $\displaystyle \max_{y \in \Y}\Lambda_{\chi^2}(y) \leq {\eps}$,  for all $y \in \Y$, we have
                \begin{align*}
                     \Lambda_{\chi^2}(y) &=\sum_{s \in \Sen}\PS(s) \left(\lifty-1\right)^2 \leq {\eps}  \Rightarrow
                    \PS({s_{{y}}})\left(l(s_{y},y)-1\right)^2 =\PS(s_{y})(\Lambda(y)-1)^2 \leq {\eps},
                \end{align*}
                which results in $$ \max_{y \in \Y}\Lambda(y) \leq \sqrt{\frac{\eps}{\PS({s_{\bar{y}}})}}+1.$$

        \item   When  $\displaystyle \max_{y \in \Y}\Lambda_{\alpha}^{S}(y) \leq {\eps}$, for all $y \in \Y$, we have
                \begin{align*}
                    \Lambda_{\alpha}^{S}(y) &=\left(\sum_{s \in \Sen}\PS(s)\lifty^{\alpha}\right)^{1/\alpha}\leq {\eps} \Rightarrow
                    \PS(s_{y})l(s_{y},y)^{\alpha} =\PS(s_{y})\Lambda(y)^{\alpha}\leq {\eps^\alpha},
                \end{align*}
                which results in $$ \max_{y \in \Y}\Lambda(y) \leq \frac{\eps}{\PS({s_{\bar{y}}})^{\frac{1}{\alpha}}}.$$

    \end{enumerate}       
\section{}\label{apx:proof of ALIP-lift-inverse}
    Since $(\epsl,\epsu)$-ALIP is satisfied, for all $y \in \Y$, we have $\e^{-\epsu}\leq\frac{1}{\lifty}\leq\e^{\epsl}$ and $\displaystyle \max_{s}\left(\frac{1}{\lifty}\right)=\frac{1}{\Psi(y)}$.
    \begin{enumerate}
    
        \item For $\ell_{1}$-lift-inverse, we have
        \begin{align*}
            \Psi_{\ell_{1}}(y) =\sum_{s \in \Sen}\PS(s)\left|\frac{1}{\lifty}-1\right|\leq\sum_{s \in \Sen}\PS(s)\left|\frac{1}{\Psi(y)}-1\right|
            =\frac{1}{\Psi(y)}-1 \leq \e^{\epsl}-1 .
        \end{align*}
        
        \item For $\chi^2$-lift-inverse, we have
        \begin{align*}
            \Psi_{\chi^{2}}(y) &=\sum_{s \in \Sen}\PS(s)\left(\frac{1}{\lifty}-1\right)^{2} \leq\sum_{s \in \Sen}\PS(s)\left(\frac{1}{\Psi(y)}-1\right)^{2} =\left(\frac{1}{\Psi(y)}-1\right)^2 \leq \left(\e^{\epsl}-1\right)^2.
        \end{align*}
        
        \item For $\alpha$-lift-inverse, we have
        \begin{align*}
            \Psi_{\alpha}^{S}(y)=\left(\sum_{s \in \Sen}\PS(s)\left(\frac{1}{\lifty}\right)^{\alpha}\right)^{\frac{1}{\alpha}}
            \leq\left(\sum_{s \in \Sen}\PS(s)\left(\frac{1}{\Psi(y)}\right)^{\alpha}\right)^{\frac{1}{\alpha}}
            =\frac{1}{\Psi(y)}\leq \e^{\epsl}.
        \end{align*}
        
    \end{enumerate}
\section{}\label{apx:proof of lift-inverse-ALIP}

    If $\displaystyle s_{y}=\argmin_{s}\lifty$, then we have $\Psi(y)=l(s_{y},y)$. Recall that $\displaystyle \underline{y}=\argmin_{y}[\Psi(y)].$
    \begin{enumerate}
        \item When  $\displaystyle \max_{y \in \Y}\Psi_{\ell_{1}}(y)\leq{\eps}$, for all $y \in \Y$, we have
        \begin{align*}
            \Psi_{\ell_{1}}(y)&=\sum_{s \in \Sen}\PS(s) \left|\frac{1}{\lifty}-1\right| \leq {\eps}  \Rightarrow
            \PS(s_{y})\left|\frac{1}{l(s_{y},y)}-1\right|=\PS(s_{y})\left(\frac{1}{\Psi(y)}-1\right) \leq {\eps},
        \end{align*}
        which results in  $$ \min_{y \in \Y}\Psi(y) \geq \frac{ \PS(s_{\underline{y}}) }{\eps+{\PS(s_{\underline{y}})}}.$$
        
        \item  When  $\displaystyle \max_{y \in \Y}\Psi_{\chi^2}(y) \leq {\eps}$, for all $y \in \Y$, we have
        \begin{align*}
            \Psi_{\chi^2}(y) &=\sum_{s \in \Sen}\PS(s) \left(\frac{1}{\lifty}-1\right)^2 \leq {\eps}  \Rightarrow
            \PS(s_{y})\left(\frac{1}{l(s_{y},y)}-1\right)^2 =\PS(s_{y})\left(\frac{1}{\Psi(y)}-1\right)^2 \leq {\eps},
        \end{align*}
        which results in $$ \min_{y \in \Y} \Psi(y) \geq \frac{  \sqrt{ \PS(s_{\underline{y}}) }  }{  \sqrt{\eps} + \sqrt{\PS(s_{\underline{y}})}}.$$
        
        \item  When  $\displaystyle \max_{y \in \Y}\Psi_{\alpha}^{S}(y) \leq {\eps}$, for all $y \in \Y$, we have
        \begin{align*}
            \Psi_{\alpha}^{S}(y)&= \left(\sum_{s \in \Sen}\PS(s)\left(\frac{1}{l(s,y)}\right)^{\alpha}\right)^{1/\alpha} \leq {\eps}
            \Rightarrow \PS(s_{y})\left(\frac{1}{l(s_{y},y)}\right)^{\alpha}=\PS(s_{y})\left(\frac{1}{\Psi(y)}\right)^{\alpha}\leq {\eps^{\alpha}},
        \end{align*}
        which results in  $$ \min_{y \in \Y}\Psi(y) \geq {\eps ^{-1}}\PS(s_{\underline{y}})^{\frac{1}{\alpha}}. $$
    \end{enumerate}
\bibliographystyle{IEEEtran}
\bibliography{bare_jrnl}

\end{document}